\newmdtheoremenv{theomd}{Theorem}
\newtheorem{theorem}{Theorem}[section]
\newtheorem*{theorem*}{Theorem}
\newtheorem*{claim*}{Claim}
\newtheorem*{proposition*}{Proposition}
\newtheorem{lemma}[theorem]{Lemma}
\newtheorem*{lemma*}{Lemma}
\newtheorem{corollary}[theorem]{Corollary}
\newtheorem*{conjecture*}{Conjecture}
\newtheorem{fact}[theorem]{Fact}
\newtheorem*{hypothesis*}{Hypothesis}
\theoremstyle{definition}
\newtheorem{definition}[theorem]{Definition}
\newmdtheoremenv{defmd}[theorem]{Definition}
\newtheorem{problem}[theorem]{Problem}
\newcommand{\savehyperref}[2]{\texorpdfstring{\hyperref[#1]{#2}}{#2}}
\newcommand\blfootnote[1]{
  \begingroup
  \renewcommand\thefootnote{}\footnote{#1}
  \addtocounter{footnote}{-1}
  \endgroup
}
\renewenvironment{abstract}
  {{\centering\large\bfseries Abstract\par}\vspace{0.7ex}%
    \bgroup
       \leftskip 20pt\rightskip 20pt\small\noindent\ignorespaces}%
  {\par\egroup\vskip 0.25ex}
\newlength\aftertitskip     \newlength\beforetitskip
\newlength\interauthorskip  \newlength\aftermaketitskip
\newcommand{\mper}{\,.}
\newcommand{\mcom}{\,,}
\newcommand{\paren}[1]{\left(#1 \right )}
\newcommand{\Brac}[1]{\left[#1\right]}
\newcommand{\set}[1]{\left\{#1\right\}}
\newcommand{\Set}[1]{\left\{#1\right\}}
\newcommand{\abs}[1]{\left\lvert#1\right\rvert}
\newcommand{\ceil}[1]{\lceil #1 \rceil}
\newcommand{\floor}[1]{\lfloor #1 \rfloor}
\newcommand{\norm}[1]{\left\lVert#1\right\rVert}
\newcommand{\fnorm}[1]{\norm{#1}_\mathrm{F}}
\newcommand{\spnorm}[1]{\norm{#1}_\mathrm{2}}
\newcommand{\nuclear}[1]{\norm{#1}_{*}}
\newcommand{\nnz}[1]{\mathrm{nnz}(#1)}
\newcommand{\defeq}{\stackrel{\textup{def}}{=}}
\newcommand{\inprod}[1]{\left\langle #1\right\rangle}
\newcommand{\Tr}[1]{\mathrm {Tr}\paren{#1}}
\newcommand{\tr}{\mathrm {Tr}}
\newcommand{\Z}{{\mathbb Z}}
\newcommand{\N}{{\mathbb Z}_{\geq 0}}
\newcommand{\R}{\mathbb R}
\newcommand{\Esymb}{\mathbb{E}}
\newcommand{\Psymb}{\mathbb{P}}
\DeclareMathOperator*{\E}{\Esymb}
\DeclareMathOperator*{\ProbOp}{\Psymb}
\newcommand{\given}{\mathrel{}\middle|\mathrel{}}
\newcommand{\prob}[1]{\ProbOp\Set{#1}}
\newcommand{\probSub}[2]{\mathbb{P}_{#2}\Set{#1}}
\renewcommand{\Pr}[1]{\ProbOp\Brac{#1}}
\newcommand{\e}{\epsilon}
\newcommand{\ebf}{\mathbf{e}}
\newcommand{\zero}{\mathbf{0}}
\definecolor{DSgray}{cmyk}{0,0,0,0.7}
\let\e\varepsilon
\newcommand{\cA}{\mathcal A}
\newcommand{\cD}{\mathcal D}
\newcommand{\cG}{\mathcal G}
\newcommand{\cH}{\mathcal H}
\newcommand{\cI}{\mathcal I}
\newcommand{\cS}{\mathcal S}
\newcommand{\cT}{\mathcal T}
\newcommand{\bbS}{\mathbb S}
\newcommand{\Erdos}{Erd\H{o}s\xspace}
\newcommand{\Renyi}{R\'enyi\xspace}
\newcommand{\bigO}{O}
\newcommand{\bigo}[1]{\bigO\!\left(#1\right)}
\newcommand{\poly}{{\sf poly}}
\newcommand{\supp}{{\sf supp}}
\newcommand{\tildeN}{\widetilde{N}}
\newcommand{\tX}{\widetilde{X}}
\newcommand{\tA}{\widetilde{A}}
\newcommand{\calI}{\mathcal{I}}
\newcommand{\Ebar}{\bar{E}}
\newcommand{\spectralApx}{additive spectral norm approximation }
\newcommand{\nuclearApx}{nuclear norm approximation }
\renewcommand{\epsilon}{\varepsilon}
\newcommand{\Alg}{{\mathsf{ALG}}}
  \newcommand{\cAAAI}[1]{AAAI\ Conference\ on\ Artificial (AAAI)}
\title{Faster Spectral Density Estimation and \\ Sparsification in the Nuclear Norm}
\author{Yujia Jin \\ Stanford University \\ \texttt{yujia@stanford.edu} \and Ishani Karmarkar \\ Stanford University \\ \texttt{ishanik@stanford.edu} \and Christopher Musco \\ New York University \\ \texttt{cmusco@nyu.edu} \and Aaron Sidford \\ Stanford University \\ \texttt{sidford@stanford.edu} \and  Apoorv Vikram Singh \\ New York University \\ \texttt{apoorv.singh@nyu.edu}}
\date{}
\begin{document}
\maketitle

\begin{abstract}%
We consider the problem of estimating the spectral density of the normalized adjacency matrix of an $n$-node undirected graph. We provide a randomized algorithm that, with $O(n\epsilon^{-2})$ queries to a degree and neighbor oracle and in $O(n\epsilon^{-3})$ time, estimates the spectrum up to $\epsilon$ accuracy in the Wasserstein-1 metric. This improves on previous state-of-the-art methods, including an $O(n\epsilon^{-7})$ time algorithm from [Braverman et al., STOC 2022] and, for sufficiently small $\epsilon$, a $2^{O(\epsilon^{-1})}$ time method from [Cohen-Steiner et al., KDD 2018]. To achieve this result, we introduce a new notion of graph sparsification, which we call \emph{nuclear sparsification}. We provide an $O(n\epsilon^{-2})$-query and $O(n\epsilon^{-2})$-time algorithm for computing $O(n\epsilon^{-2})$-sparse nuclear sparsifiers. We show that this bound is optimal in both its sparsity and query complexity, and we separate our results from the related notion of additive spectral sparsification. Of independent interest, we show that our sparsification method also yields the first \emph{deterministic} algorithm for spectral density estimation that scales linearly with $n$ (sublinear in the representation size of the graph). 
\end{abstract}

\blfootnote{Accepted for presentation at the
Conference on Learning Theory (COLT) 2024}

\thispagestyle{empty}

\newpage


\pagenumbering{arabic} 
\section{Introduction}\label{sec:intro}
We study the fundamental problem of estimating the spectrum of an (undirected) graph. This problem has received significant attention due to applications in visualizing, classifying, and understanding large networks \citep{FarkasDerenyiBarabasiVicsek:2006, EikmeierGleich:2017,Cohen-SteinerKongSohler:2018,DongBensonBindel:2019,ChenTrogdonUbaru:2021,BravermanKrishnanMusco:2022}. Concretely, we consider a standard version of the problem based on the graph's normalized adjacency matrix:

\begin{problem}[Spectral Density Estimation (SDE)]
\label{prob:spectral_density}
Given an $n$-node undirected graph $G = (V, E, w)$ with positive edge weights $w \in \R^{E}_{> 0}$, return eigenvalue estimates $\widehat{\lambda}_1\le \cdots\le\widehat{\lambda}_n$ such that
\begin{align}
\label{eq:wasserstein-lambda}    
\frac{1}{n} \sum_{i\in\{1,\ldots, n\}}|\widehat{\lambda}_i-\lambda_i(N_G)|\le \e,
\end{align}
where ${\lambda}_1(N_G)\le \cdots\le{\lambda}_n(N_G)$ are the eigenvalues of $G$'s normalized adjacency matrix, $N_G \in \R^{V \times V}$.
\end{problem}

Note that \eqref{eq:wasserstein-lambda} is equivalent to requiring that the Wasserstein-1 distance between the uniform distribution on $\lambda_1, \ldots, \lambda_n$ and the uniform distribution on $\widehat{\lambda}_1, \ldots, \widehat{\lambda}_n$ is less than $\epsilon$.
We also consider a more general problem, which we call \emph{the matrix SDE problem}, where the goal is to estimate eigenvalues of a given symmetric matrix with spectral norm bounded by $1$, in the same metric, \eqref{eq:wasserstein-lambda}.

Outside of applications in network science, SDE has seen recent interest from the machine learning community due to its connections to learning distributions based on moment estimates \citep{KongValiant:2017,Jin23} and learning graph structure from random walk traces 
\citep{Cohen-SteinerKongSohler:2018}. The matrix SDE problem also has applications in deep learning \citep{GhorbaniKrishnanXiao:2019,PenningtonSchoenholzGanguli:2018}. Algorithms for related (often stronger) notions of spectral approximation, like relative-error eigenvalue histograms, or point-wise eigenvalue estimates, have been widely studied for applications like approximating matrix norms and spectral sums \citep{MuscoNetrapalli19,Bhattacharjee22,Swartworth23,BhattacharjeeDexterMusco:2024}.

Excitingly, at least for unweighted graphs, it has been shown that \Cref{prob:spectral_density} can be solved in time that is \emph{sublinear} in the size of $G$'s representation, which can be as large as $\Omega(n^2)$. Given a ``neighbor'' oracle that when queried with $a \in V$ and $i \in \Z_{> 0}$ outputs the degree of $a$ and the $i$-th vertex incident to $a$ (if there is one) in $\bigO(1)$ time, there are randomized SDE algorithms running in $2^{\bigO(\epsilon^{-1})}$ \citep{Cohen-SteinerKongSohler:2018} and $\bigO(n \epsilon^{-7})$ time \citep{BravermanKrishnanMusco:2022}.\footnote{The runtime of the method from  \cite{Cohen-SteinerKongSohler:2018} is independent of $n$; instead of outputting a list of $n$ eigenvalues, it outputs a list of $\bigO(1/\epsilon)$ distinct eigenvalue magnitudes and corresponding multiplicities.}
Recent work also studies query complexity lower bounds for different oracles \citep{Jin23}.
However, large gaps remain: The best query lower bound for neighbor oracles is just  $\Omega(1/\epsilon^2)$ \citep{Jin23}. 

Our work is motivated by these advances, the general importance of sublinear time graph algorithms \citep{chazelle2005approximating,czumaj2003sublinear,CzumajSohler:2010,onak2012near,berenbrink2014estimating,eden2018approximating,mishra2001sublinear}, and progress on sublinear time algorithms for other spectral problems like expander testing \citep{CzumajSohler:2007,NachmiasShapira:2010,GoldreichRon:2011} and spectral clustering \citep{gluch2021spectral,mousavifar2021sublinear}. 
We ask: Is it possible to improve on existing SDE algorithms? Is it possible to efficiently obtain new sparse approximations, i.e., \emph{sparsifications}, of $G$ that facilitate more efficient spectral density estimation? 
Finally, all previous sublinear time SDE algorithms make critical use of randomness. Are there sublinear time \emph{deterministic} algorithms?

In this paper, we provide an affirmative answer to each of these questions. 
For unweighted graphs, we provide a randomized $\bigO(n \epsilon^{-3})$-time algorithm for solving \Cref{prob:spectral_density} and a deterministic $n \cdot 2^{\tilde{O}(\epsilon^{-1})}$-time algorithm. The randomized method improves on the $\bigO(n \epsilon^{-7})$-time method from \citet{BravermanKrishnanMusco:2022} and on the $2^{\bigO(\epsilon^{-1})}$-time method from \citet{Cohen-SteinerKongSohler:2018} for small values of $\epsilon.$ To the best of our knowledge, our deterministic method is the first to achieve a sub-quadratic dependence on the number of nodes, $n$.
Moreover, we obtain an algorithm with the same complexity for weighted graphs provided that, when queried with $a \in V$ and $i \in \Z_{> 0}$, the oracle outputs the weighted-degree of $a$, the weight of the $i$-th largest edge incident to $a$ (with ties broken arbitrarily), and the endpoint of this edge. Finally, we show that these complexities are obtainable even in a weaker \emph{random walk model} for accessing $G$.

\subsection{A Sparsification Approach to Spectral Density Estimation}\label{intro:density_approach_1}
Our results follow a natural \emph{two-stage} approach to approximating the spectral density of graphs. First, in sublinear time, we construct a \emph{sparse matrix} that approximates $N_G \in \R^{V \times V}$, the normalized adjacency matrix of $G$, and has $o(n^2)$ entries. Second, we apply and adapt existing spectral density estimation methods to efficiently approximate the spectrum of the sparse approximation. This approach decouples the information theoretic problem of how to approximate $N_G$ using a small number of queries from the {computational} problem of how to efficiently approximate eigenvalues.

Concretely, our approach motivates the following natural question: What notions of graph approximation (ultimately, sparsification) are sufficient for preserving the spectrum in the sense of \eqref{eq:wasserstein-lambda} \emph{and} can be obtained in sublinear time?
We make progress on this problem by introducing a new notion of \emph{$\epsilon$-additive nuclear approximation} and presenting algorithms that obtain near-optimal sparsity and query complexity for producing such approximations. Formally, we define:

\begin{definition}[Additive Nuclear Approximation and Sparsification]\label{def:nuclear-sparsifier-informal} $M \in \R^{V \times V}$ is an $\e$-additive \emph{nuclear approximation} to $n$-vertex graph $G = (V,E,w)$ if $\nuclear{N_G-M}\le \e n$. $M$ is an $\e$-additive \emph{nuclear sparsifier} of $G$ if it is also $s$-sparse, i.e., has at most $s$ non-zero entries, for $s = o(n^2)$.\footnote{Note that $M$ itself is not required to be the adjacency matrix of a graph and our algorithms will in general return matrices that are not. However, in all cases, we show that one can modify our algorithm to ensure that the output is indeed a normalized adjacency matrix. See \Cref{lemma:graphical-nuclear-ub} in \Cref{app:omitted} for details.
}
\end{definition}
In \Cref{def:nuclear-sparsifier-informal}, $\nuclear{N_G-M}$ denotes the nuclear norm, i.e., the sum of the singular values of $N_G-M$. Importantly, additive nuclear approximation is sufficient for solving the SDE problem. For symmetric matrices, $A, B \in \R^{n \times n}$, let $W_1(A, B) \defeq \frac{1}{n}\sum_{i=1}^n|{\lambda}_i(A)-\lambda_i(B)|$ denote Wasserstein-$1$ distance between the probability distributions $p$ and $q$ induced by the real-valued eigenvalues of $A$ and of $B$. A short proof (see \Cref{sub:notation}, \Cref{lem:redx-sde}) establishes that $ W_1(A,B)\le \frac{1}{n}\nuclear{A-B}$.\footnote{When $A$ and $B$ are diagonal with monotonically decreasing diagonal entries then $W_1(A,B) = \frac{1}{n}\|A - B\|_*$. In this sense, nuclear approximation is a natural strengthening of approximation in the Wasserstein-1 distance.}
As a result, if $\|M - N_G\|_* \leq \epsilon n/2$ and we find eigenvalues $\widehat{\lambda}_1, \ldots, \widehat{\lambda}_n$ that are $\epsilon/2$ close in Wasserstein distance to those of $M$, then by triangle inequality, those eigenvalues solve \Cref{prob:spectral_density}. 

Our main algorithmic result is that nuclear sparsifiers with $O(n\epsilon^{-2})$ non-zero entries can be computed \emph{deterministically} in $O(n\epsilon^{-2})$ time. Formally, we assume the following access to $G$:
\begin{definition}[Adjacency query model]\label{def:query_model} We say we have \emph{adjacency query access} to a weighted graph $G=(V,E,w)$ if $V$ is known and there is an $O(1)$ time procedure, $\mathsf{GetNeighbor}(a,i)$, that when queried with any $a \in V$ and $i \in \Z_{> 0}$ outputs $\deg_G(a)$ and, if there is one, the $i$-th largest edge $\{a,b\}$ and $w_{\{a,b\}}$ (with ties broken arbitrarily). If the $a$ has $< i$ edges, the oracle returns $\emptyset$.
\end{definition}
Adjacency queries are easily supported by standard graph data structures. For example, an adjacency list where each node's neighbors are stored in an array supports adjacency queries for unweighted graphs. To support weighted graphs, it suffices to sort these arrays by edge weight. \footnote{While this assumption on the ordering of the array might be considered non-standard, it is easy to support using a basic data structure for storing the graph provided in which we sort the edges by weight. We also show in Theorem~\ref{thm:nuclear_sparsifier_rw} that we can extend our result to a more standard random walk model, still obtaining a nuclear sparsifier with just $O(n\epsilon^{-2})$ queries.}

Within the query model established, we prove the following main result on constructing sublinear size nuclear sparsifiers in \Cref{sec:upper}. Our method works even for weighted graphs.
\begin{restatable}[Sublinear Time Nuclear Sparsification]{theorem}{nuclearub}\label{thm:nuclear-ub-informal}
     There is a \emph{deterministic} method (Algorithm \ref{alg:nuclear}) that, for any $\e\in(0,1)$,  returns a $2n \e^{-2}$-sparse $\e$-additive nuclear sparsifier for any undirected weighted graph $G$, and runs in $\bigO(n \e^{-2})$ time in the adjacency query model (\Cref{def:query_model}).
\end{restatable}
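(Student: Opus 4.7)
The plan is to construct $M$ by row-wise truncation of $N_G$ using the weight-ordered access provided by the adjacency query model. For each vertex $a$, let $S_a$ denote the endpoints of the $k \defeq \lceil \epsilon^{-2}\rceil$ largest-weight edges incident to $a$ (all edges of $a$, if $\deg_G(a) < k$), and let $T_a \defeq S_a \cup \{b : a \in S_b\}$, which is automatically symmetric in the sense that $b \in T_a$ iff $a \in T_b$. Define $M_{ab} \defeq w_{ab}/\sqrt{\deg_G(a)\deg_G(b)}$ when $b \in T_a$ and $M_{ab} \defeq 0$ otherwise. Then $M$ is symmetric with at most $\sum_a |T_a| \le 2\sum_a |S_a| \le 2nk = 2n\epsilon^{-2}$ nonzero entries.

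To bound $\|N_G - M\|_*$, I will first use the standard Cauchy--Schwarz inequality on singular values, $\|X\|_* \le \sqrt{n}\,\|X\|_F$ for any $n \times n$ matrix $X$, and then control the Frobenius norm. For each pair $(a,b)$ with $b \notin T_a$, in particular $b \notin S_a$, the weight $w_{ab}$ is at most $w_{a,k}$, the $k$-th largest weight incident to $a$. Because those $k$ largest weights sum to at most $\deg_G(a)$, we get $w_{a,k} \le \deg_G(a)/k$, whence
\[
(N_G - M)^2_{ab} \;=\; \frac{w_{ab}^2}{\deg_G(a)\deg_G(b)} \;\le\; \frac{w_{ab}\cdot w_{a,k}}{\deg_G(a)\deg_G(b)} \;\le\; \frac{w_{ab}}{k\,\deg_G(b)}\mper
\]
Summing over all pairs and using $\sum_a w_{ab} = \deg_G(b)$ gives
\[
\|N_G - M\|_F^2 \;\le\; \frac{1}{k}\sum_{a,b}\frac{w_{ab}}{\deg_G(b)} \;=\; \frac{1}{k}\sum_{b}\frac{\deg_G(b)}{\deg_G(b)} \;=\; \frac{n}{k}\mcom
\]
and hence $\|N_G - M\|_* \le \sqrt{n}\cdot\sqrt{n/k} = n/\sqrt{k} \le \epsilon n$, as required.

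For the algorithmic side, I will implement this directly in the adjacency query model. For each vertex $a$, the $k$ queries $\mathsf{GetNeighbor}(a,1),\ldots,\mathsf{GetNeighbor}(a,k)$ return $S_a$, the weights $w_{ab}$ for $b \in S_a$, and $\deg_G(a)$. For each such $b$, one further query $\mathsf{GetNeighbor}(b,1)$ returns $\deg_G(b)$, which is all that is needed to evaluate $N_G(a,b)$. This uses $O(nk) = O(n\epsilon^{-2})$ queries in total. Storing the sets $\{S_a\}_{a \in V}$ in hash tables, I can then, for each $(a,b)$ with $b \in S_a$, check whether $a \in S_b$, form $T_a$, and write down $M_{ab}$, all in $O(nk)$ additional time. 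No randomness is used anywhere, so the algorithm is deterministic.

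The one subtle design choice—and what I expect to be the main conceptual obstacle—is to threshold by raw edge weight $w_{ab}$ rather than by the normalized entry $N_G(a,b)$. Top-$k$ by $N_G(a,b)$ would require learning $\deg_G(b)$ for \emph{every} neighbor of $a$ before ranking, costing $\Omega(\deg_G(a))$ per row and blowing up the query complexity. Weight-based thresholding is supported directly by the oracle, and the crucial analytic step is the cancellation $w_{ab}^2/\deg_G(a) \le w_{ab}/k$ enabled by the inequality $w_{a,k} \le \deg_G(a)/k$; this turns the row-wise truncation error into a telescoping sum via $\sum_a w_{ab} = \deg_G(b)$, yielding the clean $n/k$ Frobenius bound that, together with $\|\cdot\|_* \le \sqrt{n}\,\|\cdot\|_F$, gives the theorem.
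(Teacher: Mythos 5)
Your proposal is correct and follows essentially the same route as the paper: a deterministic, per-vertex, weight-based truncation of $N_G$, with the error controlled by bounding the Frobenius norm of the dropped part and applying $\|\cdot\|_* \le \sqrt{n}\,\|\cdot\|_F$. Your key inequality $w_{a,k} \le \deg_G(a)/k$ plays the same role as the paper's observation that a dropped edge has $w_{\{i,j\}} < \frac{\e^2}{2}\deg(j)$ for the higher-degree endpoint $j$, and the telescoping $\sum_a w_{ab} = \deg_G(b)$ appears in both.

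One design difference is worth flagging. The paper keeps an edge iff it is heavy \emph{relative to both} endpoints' degrees ($w_e \ge \frac{\e^2}{2}\max\{\deg(v),\deg(v')\}$), which guarantees that \emph{every row} of $\tilde N$ has at most $2\e^{-2}$ nonzeros. Your rule keeps $\{a,b\}$ if it is in the top-$k$ for \emph{at least one} endpoint, so a high-degree hub $a$ can appear in many $S_b$'s, making $|T_a|$ as large as $\Theta(n)$; you still get $\sum_a|T_a| = O(n\e^{-2})$ total nonzeros, but not a uniform per-row bound. For \Cref{thm:nuclear-ub-informal} this is immaterial, but the uniform row-sparsity is exactly what the paper leverages in \Cref{corr:sde-deterministic} to compute powers $\tilde N^j$ exactly in $n\cdot 2^{O(j\log\e^{-1})}$ time, so your construction would not support that downstream result without modification. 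Also, with $k = \lceil\e^{-2}\rceil$ your sparsity guarantee is $2n\lceil\e^{-2}\rceil$, which can slightly exceed the stated $2n\e^{-2}$; switching from a top-$k$ cut to the paper's continuous threshold avoids this rounding loss.
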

We obtain \Cref{thm:nuclear-ub-informal} using a greedy approach that deterministically adds edges from $G$ to the sparsifier based on their weight and end-point degrees. Importantly, we leverage that this approach is deterministic to obtain the first \emph{deterministic sublinear time SDE algorithms}.
Before discussing these SDE algorithms, in \Cref{sec:sparsification_compare} we highlight that deterministic sublinear time methods like \Cref{thm:nuclear-ub-informal} are \emph{not possible} for stronger and more well-studied notions of graph sparsification.

\subsection{Comparison to Prior Work on Sparsification}
\label{sec:sparsification_compare}

A significant line of work studies \emph{spectral graph sparsification}, which generalizes cut sparsification \citep{BenczurKarger:1996}, and has applications in linear system solving, combinatorial graph algorithms, and beyond \citep{spielman2011spectral,SpielmanSrivastava:2011,BatsonSpielmanSrivastava:2012,KapralovLeeMusco:2017,lee2018constructing}. Concretely, a spectral sparsifier is defined as:

\begin{definition}[Spectral Sparsifier, \cite{spielman2011spectral}]\label{def:spectral-sparsifier} Given $\e > 0$ and a graph $G$ with (unnormalized) Laplacian matrix $L$, a symmetric matrix $\widetilde{L} \in \R^{n \times n}$ is an $\e$-spectral sparsifier of $L$ if, for all $x \in \R^n$, $(1-\e) x^\top L x \leq x^\top \widetilde{L} x \leq (1+\e) x^\top L x$. 
\end{definition}
\Cref{def:spectral-sparsifier} involves the Laplacian, whereas we focus on the (normalized) adjacency matrix. Nevertheless, it is straightforward to show that the spectrum of the normalized adjacency matrix of an $\e$-spectral sparsifier is $\e$-close to that of $G$ in Wasserstein distance.
However, it is impossible to obtain $\epsilon$-spectral sparsifiers with $o(n^2)$ adjacency queries (see Theorem 11 of \citet{lee2013probabilistic} with $\delta = 1/n$). Consequently, we consider notions of spectral sparsification that are weaker than \Cref{def:spectral-sparsifier}, but stronger than nuclear sparsification (\Cref{def:nuclear-sparsifier-informal}). In particular, we introduce the following notion of \emph{additive error} spectral sparsification, which strictly strengthens  \Cref{def:nuclear-sparsifier-informal}:\footnote{Definition~\ref{def:eps-additive-spectral-spars} is related to two notions of sparsification previously studied by \citet{lee2013probabilistic} and \citet{agarwal2022sublinear}. Additive spectral sparsification is stronger than probabilistic $(\epsilon, \epsilon)$-spectral sparsification \cite{lee2013probabilistic} because we require $x^\top \tilde{N}_G x - \epsilon \cdot x^\top x \leq x^\top M x \leq x^\top \tilde{N}_G x + \epsilon \cdot x^\top x$ to hold \emph{simultaneously for all vectors} $x \in \R^n$ with high probability. Additive spectral sparsification is also a stronger notion than $(\epsilon, \epsilon)$-cut sparsification \cite{agarwal2022sublinear}, because we require the additive guarantee to hold for all vectors $x$ rather than just for cuts.}
\begin{definition}[Additive Spectral Sparsifier]\label{def:eps-additive-spectral-spars} A symmetric matrix $M \in \R^{V \times V}$ is an $\epsilon$-additive spectral sparsifier of $G$ if $\|M - \widetilde{N}_G\|_2 \leq \epsilon$. 
\end{definition}

We show in \Cref{sec:rw} that an $O(n\epsilon^{-2}\log n )$-sparse $\epsilon$-additive spectral sparsifier can also be obtained for all weighted graphs in $O(n\epsilon^{-2}\log n )$ time using standard random sampling methods.\footnote{
The $O(n\epsilon^{-2}\log n )$ sparsity can be improved to $O(n\epsilon^{-2})$ at the cost of additional $\poly(\log n/\epsilon)$ runtime factors using near-linear time $\epsilon$-spectral sparsification methods \citep{kevin_bss, lee2018constructing}. }
Given this result, 
it is natural to ask whether we can obtain a \emph{deterministic} algorithm for additive spectral approximation with a linear dependence on $n$, as we do for nuclear sparsification in \Cref{thm:nuclear-ub-informal}.
Interestingly, we prove that this is impossible: 
\begin{restatable}{theorem}{addspecthm}\label{lem:norm_query_spectral-informal}
    Any deterministic algorithm requires $\Omega(n^2)$ Adjacency Queries (\Cref{def:query_model}) to compute a $\frac{1}{4}$-additive spectral sparsifier (\Cref{def:eps-additive-spectral-spars}), even for unweighted graphs.
\end{restatable}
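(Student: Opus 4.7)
The plan is to prove this lower bound via a standard adversary argument in the decision tree model. The goal is to exhibit, for any deterministic algorithm with $T = o(n^2)$ queries, two graphs $G_0, G_1$ on $n$ vertices that (i) satisfy $\|N_{G_0} - N_{G_1}\|_2 > 1/2$ and (ii) produce identical oracle responses under an adversarial ordering of the neighbor lists. Since a deterministic algorithm's output is determined by its response sequence, it would then output the same matrix $M$ on both inputs, and by the triangle inequality $M$ cannot satisfy $\|M - N_{G_i}\|_2 \le 1/4$ for both $i \in \{0, 1\}$.

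For the construction, I would partition the $n$ vertices as $L \cup R$ with $|L| = 3n/4$ and $|R| = n/4$, and further split $L$ into four equal blocks $L_1, L_2, L_3, L_4$. Both graphs share the ``common core'' $K_{L, R}$; additionally, $G_0$ contains $K_{L_1, L_3} \cup K_{L_2, L_4}$ while $G_1$ contains $K_{L_1, L_4} \cup K_{L_2, L_3}$. Both graphs are regular with the same degree sequence, so degree queries reveal no distinguishing information. A direct computation of $A_{G_0} - A_{G_1}$ in the four-dimensional subspace spanned by the block-indicator vectors shows this operator has rank two with spectral norm $2|L_i|$, which after dividing by the common degree yields $\|N_{G_0} - N_{G_1}\|_2 = 6/7 > 1/2$.

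For the adversary strategy, order each vertex's neighbor list so that common neighbors appear first: for $l \in L$, all $|R|$ neighbors in $R$ come before the neighbors in $L$; for $r \in R$, every neighbor lies in $L$ and is common. Any query $(a, i)$ falling within this common prefix returns the same vertex on both graphs, and a direct count shows the total number of such ``safe'' queries is $\Omega(n^2)$.

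The main obstacle is ruling out a shortcut in which the algorithm makes a single ``deep'' query past the common prefix, distinguishes $G_0$ from $G_1$, and then tailors its output. This requires amplifying the single-pair construction to an exponentially large family by placing many independent 4-block perturbations on disjoint vertex groups, yielding a family $\mathcal{F}$ of pairwise spectrally-distant graphs. A standard decision-tree counting argument then shows that each query partitions the consistent subfamily into at most $n+1$ parts, so the adversary---always keeping the largest part---maintains at least two spectrally distant graphs in the consistent set for $\Omega(n^2)$ queries. The trickiest step will be ensuring that combining many independent block perturbations preserves pairwise spectral separation across \emph{all} pairs of consistent graphs (rather than allowing cancellations that collapse distant graphs into the same spectral ball), which I expect to require a careful choice of the perturbation patterns rather than purely random ones.
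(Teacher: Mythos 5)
Your proposal correctly identifies the central difficulty (a single ``deep'' query past the common prefix could distinguish a fixed pair $(G_0,G_1)$), but the fix you sketch is not the one the paper uses, and you yourself flag its hardest step as unresolved. The paper's proof instead uses a \emph{resisting oracle}: the adversary does not commit to a pair of graphs in advance, but answers queries adaptively (Algorithm~\ref{alg:oracle}), and only \emph{after} the algorithm terminates are the two indistinguishable graphs $G_1,G_2$ constructed, by filling in the unqueried pairs with two different matchings (connecting $V_1\!\leftrightarrow\! V_2$ and $V_3\!\leftrightarrow\! V_4$ for $G_1$ versus $V_1\!\leftrightarrow\! V_3$ and $V_2\!\leftrightarrow\! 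V_4$ for $G_2$). Because the oracle's responses are, by construction, consistent with both completions, and both are $m$-regular so degree information leaks nothing, no ``deep'' query can ever distinguish them --- the adversary simply places the distinguishing edge wherever is convenient at query time. This neatly eliminates the need for an exponential family and a decision-tree counting argument with pairwise spectral separation, which is exactly the part you correctly suspect would be delicate (cancellations between independent block perturbations are a real concern when you want \emph{every} pair in the family to remain spectrally far). The separation is then certified by a single test vector $v=\mathbf{1}_{V_1\cup V_2}$, for which the two normalized Laplacian quadratic forms differ by at least $m-|\tilde E|/m$, which exceeds $\tfrac14 v^\top v$ as long as fewer than $n^2/128$ queries were made.

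So the gap is concrete: you are missing the resisting-oracle idea, and the amplification argument you propose in its place is left as a plan whose key lemma (pairwise spectral separation of all consistent graphs in a large family) you have not established and would likely need significant additional work to carry through. Your four-block construction on $L$ and $R$ and the observation that both graphs share a degree sequence are in the right spirit (indeed close in flavor to the paper's four-part split $V_1,\dots,V_4$ and its regularity argument), but without an adaptive adversary the two-graph version simply does not give a lower bound against deterministic algorithms that may probe arbitrary positions in the neighbor lists.
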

\Cref{lem:norm_query_spectral-informal} highlights a strong separation between our new notation of nuclear sparsification, for which we have a sublinear time deterministic algorithm, and stronger notions of sparsification. 

\subsection{Applications to Sublinear Time Spectral Density Estimation}
A main application of our result on nuclear sparsification (\Cref{thm:nuclear-ub-informal}) is faster deterministic and randomized sublinear time SDE algorithms for graphs. In the randomized setting, a long line of work in computational chemistry, applied math, and, recently, computer science \citep{Skilling:1989,SilverRoder:1994,WeisseWelleinAlvermann:2006,LinSaadYang:2016} studies \emph{linear time} methods for spectral density estimation. In \citet{ChenTrogdonUbaru:2021} and \citet{BravermanKrishnanMusco:2022} it was proven that common randomized algorithms like the stochastic Lanczos quadrature method and moment matching can approximate the spectral density of \emph{any} symmetric matrix $A$ up to $\epsilon$-accuracy in Wasserstein distance using roughly $\bigO(\epsilon^{-1})$ matrix-vector multiplications with $A$.\footnote{The method from \citet{BravermanKrishnanMusco:2022} uses $O(\min(\epsilon^{-1}, \epsilon^{-2}\log^4(\epsilon^{-1})/n))$ matrix-vector multiplications, which is $\bigO(\epsilon^{-1})$ for sufficiently large $n$.}
At a high level, these methods solve the SDE problem by approximating the first $q = O(\epsilon^{-1})$ \emph{moments} of $A$'s eigenvalue density, which can be expressed as $\tr(A), \tr(A^2), \ldots, \tr(A^q)$. They return a distribution that matches or nearly matches those moments, so will be increasingly close to the true spectral density as $q$ increases. For each $i\in 1, \ldots, q$, estimating $\tr(A^i)$ can be reduced to roughly $O(i)$ matrix-vector products with $A$ through the use of stochastic trace estimation methods like Hutchinson's estimator \cite{Hutchinson:1990,MeyerMuscoMusco:2021}.

We obtain our main result on the SDE problem by simply applying such methods (concretely, Theorem 1.4 from \citet{BravermanKrishnanMusco:2022}) to our nuclear norm sparsifier, which can be multiplied by a vector in $O(n\epsilon^{-2})$ time. Formally, we obtain the following result (proven in \Cref{sec:upper}):
\begin{restatable}[Randomized Sublinear Time SDE]{theorem}{thmsde}\label{corr:sde-informal}
    There is a randomized algorithm that solves the SDE problem (Problem \ref{prob:spectral_density})  with probability $99/100$ in $\bigo{n\e^{-3}}$ time in the adjacency query model (\Cref{def:query_model}).
\end{restatable}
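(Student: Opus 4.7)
The plan is to combine the nuclear sparsification result (\Cref{thm:nuclear-ub-informal}) with the randomized moment-matching SDE algorithm of \citet{BravermanKrishnanMusco:2022} in a black-box fashion. First, I would invoke \Cref{thm:nuclear-ub-informal} with accuracy parameter $\epsilon/2$ to deterministically produce, in $O(n\epsilon^{-2})$ time and using $O(n\epsilon^{-2})$ adjacency queries, a symmetric matrix $M\in\R^{V\times V}$ with at most $8n\epsilon^{-2}$ nonzero entries such that $\nuclear{N_G - M}\le (\epsilon/2)\,n$. By the reduction noted in \Cref{sub:notation} (the inequality $W_1(A,B)\le \tfrac{1}{n}\nuclear{A-B}$), this immediately yields $W_1(N_G, M)\le \epsilon/2$, so it suffices to approximate the eigenvalues of $M$ to Wasserstein accuracy $\epsilon/2$ and return those estimates.

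Next, I would run the randomized matrix-SDE algorithm of \citet[Theorem 1.4]{BravermanKrishnanMusco:2022} on $M$ with accuracy parameter $\epsilon/2$ and failure probability $1/100$. Since $M$ approximates $N_G$, whose spectrum lies in $[-1,1]$, and $\|M-N_G\|_2\le \nuclear{M-N_G}\le \epsilon n/2$, a trivial rescaling (e.g.\ replacing $M$ by $M/(1+\epsilon n/2)$, or truncating to a norm-$1$ matrix) ensures $M$ satisfies the spectral-norm-at-most-$1$ assumption required by the BKM algorithm, at the cost of an additional $O(\epsilon)$ error that can be absorbed into the accuracy parameter. The BKM algorithm then requires $O(\epsilon^{-1})$ matrix-vector products with $M$, each of which costs $O(\nnz{M}) = O(n\epsilon^{-2})$ time since $M$ is $O(n\epsilon^{-2})$-sparse, along with $\poly(\epsilon^{-1})$ post-processing time independent of $n$. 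Summing gives $O(n\epsilon^{-2}) + O(\epsilon^{-1})\cdot O(n\epsilon^{-2}) = O(n\epsilon^{-3})$ total time.

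Finally, a triangle inequality closes the argument: writing $\widehat{\lambda}_1\le\cdots\le\widehat{\lambda}_n$ for the estimates returned by BKM applied to $M$, we have
\[
W_1\bigl(\{\widehat{\lambda}_i\},\{\lambda_i(N_G)\}\bigr) \le W_1\bigl(\{\widehat{\lambda}_i\},\{\lambda_i(M)\}\bigr) + W_1\bigl(\{\lambda_i(M)\},\{\lambda_i(N_G)\}\bigr) \le \epsilon/2 + \epsilon/2 = \epsilon,
\]
which is exactly \eqref{eq:wasserstein-lambda}. The success probability is inherited from BKM (we set its failure probability to $1/100$; the sparsifier construction is deterministic), giving the claimed $99/100$.

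The main obstacle I anticipate is the mild technical issue of ensuring that the sparsifier $M$ can be fed into the BKM algorithm as a black box; specifically, BKM assumes a spectral-norm-bounded matrix, while nuclear sparsification only controls $\nuclear{N_G - M}$. Bounding $\|M\|_2$ via $\|N_G\|_2 + \|M - N_G\|_2$ is too weak on its own, but we can either (i) explicitly truncate the spectrum of $M$ through the sparse structure, or (ii) observe that $\|M\|_2$ is bounded by a known constant whenever the nuclear sparsifier is built from edges of the graph (as in \Cref{alg:nuclear}), so no extra work is needed. A second minor point is that BKM's stated matrix-vector complexity is $O(\min(\epsilon^{-1}, \epsilon^{-2}\log^4(\epsilon^{-1})/n))$; for the regime $n \gtrsim \log^4(\epsilon^{-1})/\epsilon$ the $\epsilon^{-1}$ branch dominates and gives $O(n\epsilon^{-3})$; in the remaining small-$n$ regime one can exactly diagonalize the $O(n\epsilon^{-2})$-sparse matrix (or use the other branch) within the same overall budget, so the stated runtime holds uniformly.
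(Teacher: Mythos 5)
Your proof is correct and mirrors the paper's argument: build the $(\epsilon/2)$-nuclear sparsifier $\tilde{N}$ via \Cref{thm:nuclear-ub-informal}, feed it to the BKM moment-matching algorithm at accuracy $\epsilon/2$, apply \Cref{lem:redx-sde} and a triangle inequality, and handle the small-$n$ regime (where BKM's matrix-vector count formula degrades) by direct eigendecomposition. Your worry about the spectral-norm hypothesis is already discharged by the paper itself, since \Cref{thm:nuclear_ub} explicitly proves $\|\tilde N\|_2 \le 1$, which is precisely your resolution (ii); no truncation or rescaling is needed.
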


\Cref{corr:sde-informal} directly improves on an $O(n\epsilon^{-7})$ time method from \citet{BravermanKrishnanMusco:2022}, which only applied to unweighted graphs, and takes a different approach from ours. That work leverages randomized methods for approximating matrix-vector products instead of sparsification.
For sufficiently small $\epsilon$, we also improve on the $2^{O(\e^{-1})}$ time method from  \citet{Cohen-SteinerKongSohler:2018}, which has {no dependence} on $n$.\footnote{Initial evidence (lower bounds in restricted models) suggests that it may not be possible to improve the $\e$ dependence in \citet{Cohen-SteinerKongSohler:2018} to sub-exponential while maintaining no dependence on $n$ \citep{Jin23}.} 

In addition to a faster randomized algorithm, in \Cref{sec:upper} we show how to use our deterministic nuclear norm sparsifiers to obtain the first sublinear time deterministic SDE method for graphs: 
\begin{restatable}[Deterministic Sublinear Time SDE]{theorem}{thmsdedeter}
\label{corr:sde-deterministic}
    There is a deterministic algorithm that solves the SDE problem (Problem \ref{prob:spectral_density}) in $n \cdot 2^{O(\e^{-1}\log(\e^{-1})}$ time in the adjacency query model (\Cref{def:query_model}).
\end{restatable}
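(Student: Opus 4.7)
The plan is to instantiate the two-stage approach of \Cref{intro:density_approach_1}: first use the deterministic nuclear sparsifier from \Cref{thm:nuclear-ub-informal} to produce an $O(n\epsilon^{-2})$-sparse matrix $M$ close to $N_G$, then deterministically recover an approximate spectral density of $M$ via exact moment computation and moment matching. Concretely, I would apply \Cref{thm:nuclear-ub-informal} with parameter $\epsilon/2$ to obtain $M$ in $O(n\epsilon^{-2})$ deterministic time with $\|M-N_G\|_* \leq \epsilon n / 2$. For the subsequent walk enumeration to fit the target runtime, I additionally require that $M$ have maximum per-row sparsity $d = O(\epsilon^{-2})$; if Algorithm~\ref{alg:nuclear} does not directly guarantee this, we enforce it by truncating each row to its top $O(\epsilon^{-2})$ non-zeros (say by magnitude of the corresponding entry of $N_G$) and verify that the nuclear approximation is preserved to within constants using a Frobenius-to-nuclear bound on the discarded mass.

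Next, I would deterministically compute the moments $\tau_k = \tr(M^k)$ for $k = 1, \dots, q$ with $q = \Theta(\epsilon^{-1})$, using the closed-walk expansion
\[
\tr(M^k) \;=\; \sum_{v_0 \in V} \;\;\sum_{v_0, v_1, \dots, v_{k-1}, v_k = v_0} \;\prod_{j=0}^{k-1} M_{v_j v_{j+1}} .
\]
For each starting vertex $v_0$, enumerate all length-$k$ walks through $M$'s support by depth-first search, accumulating contributions from those that return to $v_0$. With per-row sparsity bounded by $d$, there are at most $d^k$ walks from any starting vertex, so the total time is $\sum_{k=1}^{q} O(n \cdot k \cdot d^k) = n \cdot 2^{O(\epsilon^{-1} \log(\epsilon^{-1}))}$. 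Given these exact moments, apply a standard deterministic moment-matching routine---for example, solving a linear feasibility program over discrete probability measures on a fine grid of $[-1-\epsilon, 1+\epsilon]$, in the spirit of \cite{KongValiant:2017}---to produce a distribution $\widehat{p}$ supported on $n$ atoms $\widehat{\lambda}_1 \leq \cdots \leq \widehat{\lambda}_n$ whose first $q$ moments match those of the empirical eigenvalue distribution $p_M$ of $M$. The Chebyshev-polynomial moment-to-Wasserstein inequality (used, e.g., in the analysis of \cite{BravermanKrishnanMusco:2022}) then gives $W_1(\widehat{p}, p_M) = O(1/q) \leq \epsilon/4$.

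Finally, combining via triangle inequality with \Cref{lem:redx-sde}, which yields $W_1(p_M, p_{N_G}) \leq \|M - N_G\|_*/n \leq \epsilon/2$, one obtains $W_1(\widehat{p}, p_{N_G}) \leq \epsilon$, so $\widehat{\lambda}_1, \dots, \widehat{\lambda}_n$ solve \Cref{prob:spectral_density}. The main obstacle I anticipate is controlling the maximum row sparsity of $M$: a single vertex of high degree in $M$ would cause the walk enumeration in the second step to explode, so the correctness and runtime hinge on arguing that Algorithm~\ref{alg:nuclear}'s degree-aware greedy edge selection keeps every row $O(\epsilon^{-2})$-sparse, or on suitably modifying the sparsifier to enforce this while retaining the $\epsilon/2$-nuclear approximation. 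A minor secondary issue is that $\|M\|_{\mathrm{op}}$ may slightly exceed $1$; this is handled by performing the moment matching over the slightly enlarged interval $[-1-\epsilon, 1+\epsilon]$.
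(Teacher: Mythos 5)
Your proposal matches the paper's proof: sparsify via \Cref{thm:nuclear-ub-informal} at accuracy $\epsilon/2$, compute the first $O(\epsilon^{-1})$ eigenvalue moments exactly by exploiting per-row sparsity, then apply the moment-matching machinery of \citet{BravermanKrishnanMusco:2022} and close with \Cref{lem:redx-sde} and the triangle inequality. The two ``obstacles'' you flag are in fact already resolved by \Cref{thm:nuclear_ub}: it proves directly that the greedy sparsifier $\tildeN$ has at most $2\epsilon^{-2}$ non-zeros in \emph{every} row and column (not merely $O(n\epsilon^{-2})$ total), so no row truncation is needed, and it proves $\|\tildeN\|_2\le 1$, so no enlargement of the moment-matching interval is needed. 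The only cosmetic difference is that you enumerate closed walks via DFS, whereas the paper computes $\tr(\tildeN^k)$ by iterated sparse matrix multiplication, using that $\tildeN^k$ has at most $(8\epsilon^{-2})^k$ non-zeros per row; these are the same computation with the same $n\cdot 2^{O(\epsilon^{-1}\log\epsilon^{-1})}$ cost.
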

We prove
\Cref{corr:sde-deterministic} by showing that it is possible to exactly compute the $i^\text{th}$ eigenvalue moment, $\Tr{M^i}$, for our particular nuclear norm sparsifier $M$ in $O(2^{i\log(\epsilon^{-1})}))$ time via direct computation of the diagonal entries of $\Tr{M^i}$. Crucially, we leverage the fact that our nuclear norm sparsifiers guaranteed by \Cref{thm:nuclear-ub-informal} are \emph{uniformly} sparse: not only is the total number of non-zero entries in $M$ small, but every \emph{row} of the sparsifier has a bounded number of non-zero entries (specifically, $O(1/\epsilon^2)$). Once the eigenvalue moments are computed, we again appeal to the moment matching method from \citet{BravermanKrishnanMusco:2022}, as in the proof of \Cref{corr:sde-informal}.
Previously, the best deterministic algorithm for spectral density estimation was to perform an eigendecomposition of $N_G$ in time $\bigo{n^\omega}$, where $\omega \leq 2.371552$ is the current fast matrix multiplication exponent \citep{williams2024new}.

\subsection{Lower Bounds for Nuclear Sparsification}\label{sec:optimality}
Given that nuclear approximation is a natural relaxation of widely studied graph sparsification notions like spectral sparsification, it is desirable to fully understand the complexity of the problem. We complement our algorithmic result from \Cref{thm:nuclear-ub-informal} with nearly matching sparsity and query lower bounds. First, we show that the sparsity of our sparsifiers is near-optimal:

\begin{restatable}[Sparsity Lower Bound]{theorem}{sparsitynuclear}\label{lem:norm_adj-informal}
    For any $\e \leq \e_0$, where $\e_0\in(0,1)$ is a fixed constant and any integer $n \geq 1/\e^2$, there is a graph $G$ on $n$ nodes, with normalized adjacency matrix $N_G$, such that  any matrix $M$ satisfying $\nuclear{N_G - M} \leq n \e$ must have  $\Omega(n\e^{-2}/\log^2\e^{-1})$ non-zero entries.
\end{restatable}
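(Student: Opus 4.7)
The plan is to exhibit a specific family of hard graphs $G$, indexed by $\epsilon$ and $n \geq \epsilon^{-2}$, and to reduce the sparsity lower bound to a per-gadget sparse-approximation problem via a block-diagonal argument combined with nuclear-norm duality. Concretely, I would take $G$ to be a disjoint union of $m = n/k$ isomorphic copies of a carefully chosen gadget graph $H$ on $k = \Theta(\epsilon^{-2})$ vertices---for instance a near-Ramanujan expander of appropriate degree $d = \polylog(\epsilon^{-1})$---chosen so that $\nuclear{N_H}$ is substantially larger than the per-gadget error budget $k\epsilon$, making the approximation problem non-trivial on each gadget.

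The first step is a block-diagonal reduction. Since $N_G$ is block-diagonal across gadgets, the linear map $P_{BD}$ that zeros out all entries between distinct gadgets satisfies $\nuclear{P_{BD}(A)} \leq \nuclear{A}$ for every symmetric $A$. Indeed, by the duality $\nuclear{A} = \sup_{\opnorm{U} \leq 1} \inprod{A, U}$ and the self-adjointness of $P_{BD}$ in the Frobenius inner product, one has $\nuclear{P_{BD}(A)} = \sup_{\opnorm{U} \leq 1} \inprod{A, P_{BD}(U)}$, and each diagonal block of $P_{BD}(U)$ is a principal submatrix of $U$ and hence has operator norm at most $\opnorm{U} \leq 1$. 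Applied to $A = N_G - M$ with $P_{BD}(N_G) = N_G$, this implies that we may replace any candidate $M$ by its block-diagonal projection without increasing either the number of non-zero entries or the nuclear approximation error. The problem thus reduces to showing that any block-diagonal $M = \bigoplus_i M_i$ with $\sum_i \nuclear{M_i - N_H} \leq n\epsilon$ satisfies $\sum_i \|M_i\|_0 = \Omega(n\epsilon^{-2}/\log^2 \epsilon^{-1})$.

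The second step is the per-gadget lower bound: for the chosen gadget $H$, any $k \times k$ matrix $M_H$ with $\nuclear{M_H - N_H} \leq \Theta(k\epsilon)$ must have $\|M_H\|_0 \geq \Omega(k^2/\log^2 \epsilon^{-1})$. I would prove this via the dual characterization
\[
\min_{\supp(M_H) \subseteq S} \nuclear{N_H - M_H} \;=\; \sup\left\{\inprod{N_H, U} : \opnorm{U} \leq 1,\ U|_S = 0\right\},
\]
which reduces the task to exhibiting, for every candidate sparse support $S$ with $|S| \leq k^2/\log^2 \epsilon^{-1}$, a dual witness $U$ with $\inprod{N_H, U} = \Omega(k\epsilon)$. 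A natural strategy starts from the matrix sign $V = \mathrm{sgn}(N_H)$ (which attains $\inprod{N_H, V} = \nuclear{N_H}$ and $\opnorm{V} = 1$) and perturbs $V$ to vanish on $S$ while controlling both the operator norm and the loss in inner product. Summing the per-gadget bound over all $m = n/k$ copies then yields the desired total sparsity of $\Omega(n\epsilon^{-2}/\log^2 \epsilon^{-1})$.

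The main obstacle is the per-gadget dual certificate in the second step. Bounding the operator-norm blow-up incurred by zeroing out an arbitrary sparse $S$ requires strong spectral-concentration properties of $\mathrm{sgn}(N_H)$---in particular, uniform incoherence of its entries, a property of random regular or near-Ramanujan graphs. A probabilistic construction of $H$ coupled with a union bound or $\epsilon$-net argument over sparse supports $S$ is likely the cleanest route, with the $\log^2 \epsilon^{-1}$ factor in the final bound arising from these concentration estimates.
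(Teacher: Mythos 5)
Your block-diagonal reduction (step one) is essentially the same move the paper makes in Lemma~\ref{lem:diag_nuclear_bound} and Corollary~\ref{prop:tiling_sparsity_count}, and your duality argument for it is correct. The problem is in the per-gadget lower bound, and it starts with the choice of gadget.

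The per-gadget claim must be that \emph{every} $M_H$ with $\nnz{M_H} \lesssim k^2/\log^2 \epsilon^{-1}$ incurs nuclear error $\gtrsim k\epsilon$. But if $H$ is a degree-$d$ near-Ramanujan graph with $d=\polylog(\epsilon^{-1})$, then $N_H$ itself has only $dk = k\,\polylog(\epsilon^{-1})$ nonzero entries. Since $k = \Theta(\epsilon^{-2})$, this is $o(k^2/\log^2 \epsilon^{-1})$, so taking $M_H = N_H$ gives a zero-error approximation with sparsity far below the target lower bound. The per-gadget claim is simply false for any sparse gadget. The gadget has to be \emph{dense} — its adjacency matrix must have $\Theta(k^2)$ nonzero entries, because the sparsity threshold $k^2/\log^2\epsilon^{-1}$ is within a polylogarithmic factor of the maximum possible. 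The paper accordingly uses an \Erdos--\Renyi $\mathcal{G}(b,1/2)$ gadget (Theorem~\ref{prop:count_lb}), whose adjacency matrix is genuinely dense.

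Beyond the gadget choice, the dual-certificate route you sketch is a genuinely different (and considerably harder) path than the paper's. You would need, for a \emph{fixed} dense $H$ and for \emph{every} support $S$ of size $\lesssim k^2/\log^2\epsilon^{-1}$, a witness $U$ with $U|_S = 0$, $\opnorm{U}\le 1$, and $\inprod{N_H,U} \gtrsim k\epsilon$. Zeroing out an adversarial $\Theta(k^2/\log^2 \epsilon^{-1})$-subset of entries of $\mathrm{sgn}(N_H)$ can blow up the operator norm by polynomial factors, and it is not obvious how to uniformly control this over all such $S$, nor is it clear that the renormalized witness retains a large enough inner product. The paper sidesteps the need for dual certificates entirely: it shows (via Fact~\ref{thm:concentration} and Lemma~\ref{lemma:expectation}) that $2^{\Theta(b^2)}$ independent $\mathcal{G}(b,1/2)$ samples are pairwise $\Omega(b^{1.5})$-separated in nuclear norm, while (via the covering bound Lemma~\ref{lem:discretization_sparse}) the set of $cb^2/\log^2 b$-sparse matrices admits a nuclear-norm $b^{-1/2}$-net of size only $\exp(O(b^2/\log b))$ — a pigeonhole then forces most samples to have no good sparse approximant. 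This counting argument is what makes the $\log^2 \epsilon^{-1}$ loss concrete and unconditional, whereas in your plan that factor is only heuristically attributed to ``concentration estimates.'' If you want to pursue a constructive dual-witness proof, switch the gadget to a dense random graph and be prepared to do a union bound over supports with genuinely quantitative operator-norm control of the restricted sign matrix; otherwise, the covering/counting route is the cleaner path.
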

We prove \Cref{lem:norm_adj-informal} by considering the extreme case when $\epsilon = 1/\sqrt{cn}$ for a small constant $c$. This is the smallest value of $\epsilon$ for which \Cref{thm:nuclear-ub-informal} gives a non-trivial result (i.e., a matrix with less than $n^2$ entries). Using the probabilistic method, we show that there are $2^{O(n^2)}$ graphs whose adjacency matrices are all $\epsilon$-far in the nuclear norm. A pigeonhole argument is then used to show that not all of these matrices can be well approximated by $O(n^2/\log^2 n)$-sparse matrices. We prove the result for general $\epsilon$ via a reduction to the $\epsilon = 1/\sqrt{cn}$ case.

We also show that any algorithm for constructing an $\epsilon$-additive nuclear sparsifier must, in the worst case, make $\widetilde{\Omega}(n\epsilon^{-2})$ $\mathsf{GetNeighbor}$ queries, matching \Cref{thm:nuclear-ub-informal} up to log factors:

\begin{restatable}[Query Lower Bound]{theorem}{querynuclearlb}\label{lem:norm_query-informal}
  For any $\e \leq \e_0$ and any integer $n \geq C/\e^2$ where $\e_0\in(0,1)$ and $C> 1$ are fixed constants, any algorithm that returns an $\e$-additive nuclear sparsifier for any input $G$ with probability $\geq 3/4$ requires $\Omega(n\e^{-2} \log^{-2}(\e^{-1}))$ $\mathsf{GetNeighbor}$ queries.
\end{restatable}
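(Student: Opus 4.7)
The plan is a Yao-style distributional lower bound combined with an encoding argument, reusing the hard-instance construction underlying \Cref{lem:norm_adj-informal}. I will build a family $\mathcal{G}$ of graphs on $n$ vertices by partitioning $V$ into $n/n_0$ disjoint blocks of size $n_0 := \Theta(\e^{-2})$ and independently placing, inside each block, a graph drawn from a sub-family $\mathcal{H}$ on $n_0$ vertices. The sub-family $\mathcal{H}$ is the one produced by the probabilistic construction underlying \Cref{lem:norm_adj-informal} at its extremal scale $\e' = \Theta(1/\sqrt{n_0}) = \Theta(\e)$: it satisfies $|\mathcal{H}| = 2^{\Omega(n_0^2 / \log n_0)}$, every $H \in \mathcal{H}$ has maximum degree at most $n_0 - 1$ (trivially, since each block has only $n_0$ vertices), and any two distinct $H, H' \in \mathcal{H}$ satisfy $\|N_H - N_{H'}\|_* \geq c\, \e\, n_0$ for an absolute constant $c$. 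I then take $\mathcal{G} \subseteq \mathcal{H}^{n/n_0}$ to be a Gilbert--Varshamov-style code over the alphabet $\mathcal{H}$ with constant rate and constant relative distance, giving $|\mathcal{G}| = 2^{\Omega(n \e^{-2} / \log \e^{-1})}$; by the super-additivity of the nuclear norm on block-diagonal matrices, any two $G, G' \in \mathcal{G}$ satisfy $\|N_G - N_{G'}\|_* = \Omega(c \e n)$, which I make strictly larger than $2 \e n$ by tuning the constants.

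By the triangle inequality, a single matrix $M$ can be an $\e$-additive nuclear sparsifier for at most one graph in $\mathcal{G}$. By Yao's minimax principle, it then suffices to show that any \emph{deterministic} algorithm $\mathcal{A}$ that issues $q$ $\mathsf{GetNeighbor}$ queries and succeeds on at least a $3/4$-fraction of the uniform distribution on $\mathcal{G}$ must satisfy $q = \Omega(n \e^{-2} / \log^2 \e^{-1})$. Such an $\mathcal{A}$ induces an injection from a subset $\mathcal{G}' \subseteq \mathcal{G}$ with $|\mathcal{G}'| \geq (3/4) |\mathcal{G}|$ into the set of outputs reachable by its decision tree; the latter is in turn bounded by the number of distinct query-response transcripts that can arise on inputs from $\mathcal{G}$.

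To count these transcripts, note that on any $G \in \mathcal{G}$ the reply to $\mathsf{GetNeighbor}(a, i)$ is either the $i$-th neighbor of $a$ (which lies in $a$'s block, of size $n_0 = O(\e^{-2})$) or $\emptyset$, together with the degree $\deg_G(a) \in \{0, \dots, n_0 - 1\}$. Each response therefore contributes at most $O(\log n_0) = O(\log \e^{-1})$ bits of information, and the per-vertex degrees can be accounted for separately with a total contribution of at most $O(n \log \e^{-1})$ bits. Hence the number of possible transcripts is $2^{O(q \log \e^{-1} + n \log \e^{-1})}$, and combining this with $|\mathcal{G}'| = 2^{\Omega(n \e^{-2} / \log \e^{-1})}$ together with the hypothesis $n \geq C/\e^2$ (so that the $n \log \e^{-1}$ term is absorbed into the right-hand side for $\e \leq \e_0$ small enough) yields $q \log \e^{-1} = \Omega(n \e^{-2} / \log \e^{-1})$, i.e., $q = \Omega(n \e^{-2} / \log^2 \e^{-1})$.

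The main obstacle is cleanly extracting $\mathcal{H}$ from the probabilistic construction behind \Cref{lem:norm_adj-informal}, simultaneously achieving the quantitative cardinality $|\mathcal{H}| = 2^{\Omega(n_0^2 / \log n_0)}$ and the pairwise nuclear-norm separation $c \e n_0$, while retaining the trivial degree bound that caps the per-query information rate at $O(\log \e^{-1})$ rather than $O(\log n)$; this degree cap is precisely what produces the second logarithmic factor in the final bound. A secondary technical point is verifying that the Gilbert--Varshamov selection over the alphabet $\mathcal{H}$ loses only a constant factor in the exponent of $|\mathcal{G}|$, which follows from a standard sphere-packing estimate since $|\mathcal{H}|$ is exponentially larger than needed to support constant relative distance.
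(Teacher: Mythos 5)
Your proposal takes a genuinely different route from the paper. The paper's proof (via \Cref{lemma:query_lb_distribution}) is an adversary/indistinguishability argument: it builds a distribution over \emph{regular} graphs via paired vertex blocks and ``within-pair vs.\ across-pair'' coin flips, then shows that for any low-query deterministic algorithm, the input $G$ and a ``complement-outside-the-queried-edges'' graph $\bar G$ are both consistent with the transcript, both in the support of the hard distribution, and far apart in nuclear norm. That argument routes through the sparsity lower bound machinery of \Cref{sec:lower} once more, to show the queried-edge correction $P$ is too sparse to close the gap between $N_G$ and $N_{\bar G}$. Your proposal is instead a packing/compression argument: fix a code $\mathcal{G}$ of pairwise-far graphs in advance, note that a single output matrix can be an $\e$-additive nuclear sparsifier for at most one codeword, and bound the number of distinct transcripts. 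Both are legitimate, and if executed carefully your route is arguably tighter: \Cref{prop:count_lb} (via \Cref{lemma:expectation} and \Cref{thm:concentration}) already yields $e^{\Omega(n_0^2)}$ pairwise-far random graphs with \emph{no} logarithmic loss, so the only $\log\e^{-1}$ factor in your argument would come from the $O(\log\e^{-1})$-bit transcript alphabet, giving $\Omega(n\e^{-2}/\log\e^{-1})$; the extra $\log$ in the paper's bound comes from routing through the $\Omega(n^2/\log^2 n)$ sparsity lower bound, which your argument does not need.

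The genuine gap is your assertion that the $\mathcal{G}(n_0, 1/2)$ block family $\mathcal{H}$ satisfies $\|N_H - N_{H'}\|_* \geq c\,\e\,n_0$ for \emph{normalized} adjacency matrices. What \Cref{prop:count_lb} gives is \emph{unnormalized} separation $\|A_H - A_{H'}\|_* = \Omega(n_0^{3/2})$. These are not interchangeable here, because $N_H - N_{H'} = D_H^{-1/2}A_H D_H^{-1/2} - D_{H'}^{-1/2}A_{H'}D_{H'}^{-1/2}$ involves two different degree matrices; you cannot pull out a single diagonal scaling the way the paper does in the proof of \Cref{lem:norm_adj-informal}, where the comparison is $N_G$ versus an arbitrary $M$ and the \emph{same} $D_G$ normalizes both sides before the duality trick applies. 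You flag this as your ``main obstacle,'' and it is the substantive missing step. Two standard fixes are available: (i) replace the $\mathcal{G}(n_0,1/2)$ blocks by the paper's regular paired-block construction, so every $H \in \mathcal{H}$ has constant degree $d = 2(b-1)$ and $N_H - N_{H'} = d^{-1}(A_H - A_{H'})$ is a scalar rescaling; or (ii) a degree-concentration argument: conditioned on all degrees lying in $n_0/2 \pm O(\sqrt{n_0\log n_0})$ (probability $1 - e^{-\Omega(n_0)}$), one has $\|N_H - (2/n_0)A_H\|_* = O(\sqrt{\log n_0})$, which is dominated by the surviving $\Omega(\sqrt{n_0})$ separation. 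Either route closes the gap; as written the proposal is incomplete at exactly this step.
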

Notably, \Cref{lem:norm_query-informal} even applies to randomized algorithms. 

It is interesting to ask if the same query complexity is optimal for spectral density estimation itself (\Cref{prob:spectral_density}). Currently, the best query lower bound, due to a recent result of \citet{Jin23},  is just $\Omega(1/\epsilon^2)$. That work also proves a lower bound of $\Omega(2^{1/\e})$ under a more restrictive random walk query model (see \Cref{sec:intro:random-walk}).  However, large gaps still remain in understanding the optimal query complexity and running times for spectral density estimation. Addressing these gaps is an exciting direction for future work.

\subsection{Random Walk Query Model}
\label{sec:intro:random-walk}
Finally, motivated by the SDE algorithms of \citet{BravermanKrishnanMusco:2022} and \citet{Cohen-SteinerKongSohler:2018}, we consider a weaker graph access model than the adjacency query model. The algorithms in \citet{BravermanKrishnanMusco:2022} and \citet{Cohen-SteinerKongSohler:2018} assume access to $G$ via random walks, i.e., that in $O(k)$ time we can sample a random walk $v_0,v_1,...,v_k$, where $v_0$ is a uniformly random node in $G$ and $v_i$ is chosen from the neighbors of $v_{i-1}$ with probability proportional to edge weight. 

Interestingly, we show that it is possible to construct nuclear norm sparsifiers in an even weaker model where only one-step walks are allowed:
 
\begin{definition}[One-step Random Walk Query Model]\label{def:rw_query} We say we have \emph{one-step random walk query access} to a graph $G=(V, E, w)$ if there is an $O(1)$ time procedure, $\mathsf{RandomNeighbor}$, that returns a uniformly random vertex $a \in V$ and $\emptyset$ if $a$ has no neighbors, or an edge $\{a,b\}$ selected with probability proportional to its weight, along with the degree of $a$ and $b$.
\end{definition} 

While our algorithm discussed in \Cref{intro:density_approach_1} is not directly implementable in this more restrictive query model, in \Cref{sec:rw}, we present an alternative, randomized algorithm that achieves identical query and time-complexities in the model:
\begin{restatable}{theorem}{thmrwnuclearsparsifier}\label{thm:nuclear_sparsifier_rw} 
There is an algorithm (Algorithm~\ref{alg:sparsify-nuclear}) that, for any $\e \in (0, 1)$, returns with probability $2/3$ an $O(n \e^{-2}$)-sparse $\e$-additive nuclear sparsifier for any undirected weighted graph $G$ using $O(n\epsilon^{-2})$ queries in the one-step random walk query model (Definition~\ref{def:rw_query}).
\end{restatable}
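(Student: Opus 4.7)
The plan is to estimate $N_G$ via unbiased importance sampling against the $\mathsf{RandomNeighbor}$ oracle. A single query returns an ordered pair $(a_t, b_t)$ with probability $p_{(a,b)} = \frac{1}{n} \cdot \frac{w_{ab}}{d_a}$ (vertices of degree zero simply yield $\emptyset$ and contribute nothing to $N_G$). Because both endpoint degrees are returned with the sample, each draw can be converted into the symmetric rank-two term
\[
X_t \;=\; \frac{n}{2}\sqrt{\tfrac{d_{a_t}}{d_{b_t}}}\paren{e_{a_t} e_{b_t}^\top + e_{b_t} e_{a_t}^\top},
\]
designed so that the two orientations of any unordered pair $\{i, j\}$ combine to give $\ex{X_t[i,j]} = (N_G)_{ij}$ by inverse-probability weighting. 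The algorithm draws $T = \Theta(n\epsilon^{-2})$ independent samples and outputs $M = \frac{1}{T}\sum_{t=1}^T X_t$, maintained in a hash map of nonzero entries. By construction $M$ has at most $2T = O(n\epsilon^{-2})$ nonzero entries, uses $O(n\epsilon^{-2})$ queries and $O(n\epsilon^{-2})$ time, and satisfies $\ex{M} = N_G$.

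\textbf{Analysis.} The crux is bounding $\ex{\nuclear{M - N_G}}$. A direct second-moment computation on $X_t$ gives $\ex{X_1[i,j]^2} = \frac{n w_{ij}}{4}\paren{\tfrac{1}{d_i} + \tfrac{1}{d_j}}$, so $\var{M_{ij}} \leq \frac{1}{T}\ex{X_1[i,j]^2} = \frac{n w_{ij}}{4T}(1/d_i+1/d_j)$. Summing over ordered pairs $(i,j)$ and using $\sum_{i,j} w_{ij}/d_i = \sum_i d_i/d_i = n$ for each of the two terms, we obtain $\ex{\fnorm{M - N_G}^2} \leq n^2/(2T)$. Since $M - N_G \in \R^{n \times n}$ has at most $n$ nonzero singular values, Cauchy-Schwarz on those singular values gives the universal bound $\nuclear{M - N_G} \leq \sqrt{n}\,\fnorm{M - N_G}$. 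Combining with Jensen's inequality,
\[
\ex{\nuclear{M - N_G}} \;\leq\; \sqrt{n}\,\sqrt{\ex{\fnorm{M - N_G}^2}} \;\leq\; \frac{n^{3/2}}{\sqrt{2T}}.
\]
Choosing $T$ a sufficiently large constant times $n\epsilon^{-2}$ makes the right side at most $\epsilon n/3$, and Markov's inequality then certifies $\nuclear{M - N_G} \leq \epsilon n$ with probability at least $2/3$.

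\textbf{Main obstacle.} The delicate point is that the importance weight $\sqrt{d_{a_t}/d_{b_t}}$ can be very large when $b_t$ has small degree, which would obstruct a direct matrix-Bernstein-type argument by forcing nontrivial truncation to control the spectral norm of each $X_t$ almost surely. The plan sidesteps this difficulty by observing that additive nuclear approximation is a comparatively weak requirement: a second-moment Frobenius bound together with the universal but lossy $\nuclear{\cdot} \leq \sqrt{n}\,\fnorm{\cdot}$ conversion already suffices in expectation, and a constant success probability then follows from Markov's inequality. This matches the adjacency-query complexity of \Cref{thm:nuclear-ub-informal} and the query lower bound of \Cref{lem:norm_query-informal} up to logarithmic factors.
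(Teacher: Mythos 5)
Your proposal is correct and follows essentially the same route as the paper: you use the identical unbiased estimator (the symmetrized importance-weighted sample $X_t$ is exactly what Algorithm~\ref{alg:sparsify-nuclear} constructs), the same second-moment bound $\ex{\fnorm{M-N_G}^2} = O(n^2/T)$, the same universal conversion $\nuclear{\cdot}\le\sqrt{n}\,\fnorm{\cdot}$, and Markov to conclude with $T=O(n\epsilon^{-2})$. The only cosmetic difference is that the paper applies Markov directly to $\fnorm{M-N_G}^2$ rather than first passing to $\ex{\nuclear{M-N_G}}$ via Jensen; both give the stated complexity, though the paper's ordering is marginally tighter in constants. Your closing remark about sidestepping a matrix-Bernstein/truncation argument correctly identifies why the argument is so short for nuclear (as opposed to spectral) sparsification, and is consistent with the paper's separate treatment of additive spectral sparsifiers in Theorem~\ref{thm:spectral_sparsifier_rw}.
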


We also obtain an algorithm for $\e$-additive spectral sparsification in the one-step random walk query model. We obtain the following via a natural application of matrix concentration. Specifically, we directly use an algorithm and theorem from \citet{cohen2017almost}. 
\begin{restatable}{theorem}{thmrwspectralsparsifier}\label{thm:spectral_sparsifier_rw}
There is an algorithm (Algorithm~\ref{alg:sparsify-spectral}) that, for any $\e \in (0, 1)$, returns with probability $2/3$ an $O(n \e^{-2} \log n$)-sparse $\e$-additive spectral sparsifier for any undirected weighted graph $G$ using just  $O(n\epsilon^{-2}\log n)$ queries in the one-step random walk query model (Definition~\ref{def:rw_query}).
\end{restatable}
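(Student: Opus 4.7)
The plan is to write $N_G$ as a sum of symmetric rank-two edge contributions, form an unbiased estimator by importance sampling with respect to the distribution induced by $\mathsf{RandomNeighbor}$, and then invoke a matrix-Bernstein style sparsification result in the spirit of \citet{cohen2017almost} to conclude that averaging $m = O(n\e^{-2}\log n)$ independent samples yields an $\e$-additive spectral sparsifier with high probability.

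Concretely, write $N_G = \sum_{e=\{a,b\}\in E} \frac{w_e}{\sqrt{\deg(a)\deg(b)}}\bigl(e_a e_b^\top + e_b e_a^\top\bigr)$. One call to $\mathsf{RandomNeighbor}$ draws a uniform vertex $a\in V$ and a neighbor $b$ with probability $w_{ab}/\deg(a)$, so the unordered edge $e=\{a,b\}$ is returned with probability
$p_e = \tfrac{w_e}{n}\bigl(\tfrac{1}{\deg(a)} + \tfrac{1}{\deg(b)}\bigr)$,
and the oracle simultaneously reveals $w_e$, $\deg(a)$, and $\deg(b)$, everything needed to form
$X \,=\, \tfrac{1}{p_e}\cdot\tfrac{w_e}{\sqrt{\deg(a)\deg(b)}}\bigl(e_a e_b^\top + e_b e_a^\top\bigr) \,=\, \tfrac{n\sqrt{\deg(a)\deg(b)}}{\deg(a)+\deg(b)}\bigl(e_a e_b^\top + e_b e_a^\top\bigr).$
By construction $\E[X]=N_G$, and by AM--GM the prefactor is bounded by $n/2$, giving the deterministic bound $\|X\|_2 \le n/2$.

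For the variance, using $(e_ae_b^\top+e_be_a^\top)^2 = e_ae_a^\top + e_be_b^\top$ a short calculation yields
$\E[X^2] \,=\, \sum_{e=\{a,b\}\in E} \tfrac{n\,w_e}{\deg(a)+\deg(b)}\bigl(e_ae_a^\top + e_be_b^\top\bigr),$
which is diagonal with $(a,a)$-entry at most $\sum_{b\sim a}\tfrac{n w_{ab}}{\deg(a)} = n$, so $\|\E[X^2]\|_2\le n$. Averaging $m$ independent copies of $X$ to form $\widetilde N_G = \tfrac{1}{m}\sum_{i=1}^m X_i$ and applying matrix Bernstein (equivalently, the edge-sampling sparsification theorem of \citet{cohen2017almost}) with these parameters shows that $m = O(n\e^{-2}\log n)$ suffices to ensure $\|\widetilde N_G - N_G\|_2 \le \e$ with probability at least $2/3$.

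Finally, each sample contributes at most two off-diagonal nonzero entries, so the aggregated matrix has at most $2m = O(n\e^{-2}\log n)$ nonzeros (merging duplicate edges only helps), and assembling it takes $O(m)$ time after $m$ oracle calls, matching both the claimed sparsity and query/time complexity. The main point to verify carefully is that under the specific sampling distribution $p_e$ induced by the one-step random walk oracle, both $\|X\|_2$ and $\|\E[X^2]\|_2$ scale as $O(n)$ rather than something depending on extremal degree ratios; once this is established, the Bernstein bound is immediate and no additional probabilistic machinery is required.
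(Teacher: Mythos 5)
Your argument is correct, and the key quantitative claims check out: with $p_e = \frac{w_e}{n}\bigl(\frac{1}{\deg(a)}+\frac{1}{\deg(b)}\bigr)$ one indeed has $\sum_e p_e = 1$ (for graphs without isolated vertices), the estimator $X = \frac{n\sqrt{\deg(a)\deg(b)}}{\deg(a)+\deg(b)}\bigl(e_ae_b^\top + e_be_a^\top\bigr)$ satisfies $\E[X] = N_G$, and by AM--GM and the identity $\bigl(e_ae_b^\top+e_be_a^\top\bigr)^2 = e_ae_a^\top+e_be_b^\top$ one gets $\|X\|_2 \le n/2$ and $\|\E[X^2]\|_2 \le n$, so matrix Bernstein with $m = O(n\e^{-2}\log n)$ samples yields the result.

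However, your route is genuinely different from the paper's. The paper's Algorithm~\ref{alg:sparsify-spectral} does not add the full symmetric rank-two contribution $e_ae_b^\top + e_be_a^\top$ per sample; instead it flips a fair coin $Z$ and places mass at a \emph{single} entry, either $X_{ij}$ or $X_{ji}$. This is done so that the ordered-pair sampling distribution exactly matches the form $p_{ij} = \frac{A_{ij}}{2n}\bigl[\frac{1}{r_i}+\frac{1}{r_j}\bigr]$ required by \citet{cohen2017almost} (Theorem~3.9, reproduced as Theorem~\ref{thm:cohen}), which concentrates sums of single-entry matrices $\frac{A_{ij}}{p_{ij}} E_{ij}$. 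The paper then invokes that theorem as a black box, applied to $A = A_G$ with $R = D_G$, and the algorithm's extra $\frac{1}{\sqrt{\deg(i)\deg(j)}}$ normalization converts Cohen et al.'s $R^{-1/2}(\tilde A - A)R^{-1/2}$ guarantee into the desired bound on $\|X - N_G\|_2$. Your approach avoids the coin flip altogether by sampling the symmetric rank-two matrix directly and running matrix Bernstein from scratch; it is more self-contained (you have to verify the norm and variance bounds yourself, which you did) and yields a conceptually simpler algorithm, whereas the paper's approach offloads the concentration argument to an existing theorem and has its constants spelled out by that result. Both give $O(n\e^{-2}\log n)$ samples, sparsity, and time. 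One small caveat, which the paper also glosses over: if $G$ has isolated vertices the oracle can return $\emptyset$, so $\sum_e p_e < 1$; the estimator remains unbiased if you simply skip those draws, but it is worth stating.
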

Theorem~\ref{thm:spectral_sparsifier_rw} achieves the stronger notion of $\epsilon$-additive spectral sparsification, at the cost of an extra factor of $\log n $ in the sparsity, query complexity, and runtime as compared to Theorem~\ref{thm:nuclear_sparsifier_rw}.
We show that this $\log n$ factor is unavoidable in the following sense: 

\begin{restatable}{theorem}{lbspectraladdcoupon} \label{lemma:lb_spectral_add_coupon}
  For a fixed constant $\e \in (0, 1/8)$ and $c \in (0, 1)$ any algorithm requires $\Omega(n \log n)$ one-step random walk model queries to output an $\epsilon$-additive spectral sparsifier with probability $c$.
\end{restatable}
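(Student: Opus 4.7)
The plan is to apply Yao's minimax principle and reduce the lower bound to a coupon-collector-style argument. I will exhibit a distribution on $n$-vertex graphs such that any deterministic algorithm using too few queries fails to output an $\epsilon$-additive spectral sparsifier with probability tending to $1$, and then conclude via Yao. The hard instances are random subgraphs of a perfect matching: partition the vertices into $n/2$ pairs $\{2i{-}1,2i\}$, and independently include each candidate edge $\{2i{-}1,2i\}$ with probability $1/2$. Let $X_i \in \{0,1\}$ indicate whether pair $i$ is an edge. Then the $2\times 2$ block of $N_G$ indexed by pair $i$ is $X_i J$ with $J = \left( \begin{smallmatrix} 0 & 1 \\ 1 & 0 \end{smallmatrix} \right)$, and different pairs live on disjoint coordinates. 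The key observation I will use is that a query from Definition~\ref{def:rw_query} returns a uniform pair index $U_k := \lceil V_k/2 \rceil$ together with the value of $X_{U_k}$; in particular, $U_1, \ldots, U_T$ are i.i.d.~uniform on $[n/2]$ and independent of $G$.

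First I would fix a deterministic algorithm making $T$ queries and condition on the entire query transcript. This fixes the output matrix $M$ and the set $\mathcal{U} \subseteq [n/2]$ of \emph{untouched} pairs (those never equal to any $U_k$), while leaving the edge statuses $\{X_i\}_{i \in \mathcal{U}}$ mutually independent $\mathrm{Bernoulli}(1/2)$. For each $i \in \mathcal{U}$, let $M_i$ denote the $2\times 2$ principal submatrix of $M$ on pair $i$. The standard principal-submatrix inequality $\|M - N_G\|_2 \geq \|M_i - X_i J\|_2$ implies that the algorithm's success demands the event $A_i := \{\|M_i - X_i J\|_2 \leq \epsilon\}$ for every $i \in \mathcal{U}$. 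The triangle inequality gives $\|M_i\|_2 + \|M_i - J\|_2 \geq \|J\|_2 = 1$, so whenever $\epsilon < 1/2$ at most one of the two choices of $X_i \in \{0,1\}$ can make $A_i$ hold; hence $\mathbb{P}[A_i \mid \text{transcript}] \leq 1/2$. Since each $A_i$ depends only on $X_i$ and the $\{X_i\}_{i \in \mathcal{U}}$ are independent, the events $\{A_i\}_{i \in \mathcal{U}}$ are mutually independent, yielding $\mathbb{P}[\text{success} \mid \text{transcript}] \leq 2^{-|\mathcal{U}|}$.

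Next I would quantify $|\mathcal{U}|$ via a standard coupon-collector concentration estimate. The indicators $Z_i = \mathbf{1}[i \in \mathcal{U}]$ are pairwise negatively correlated with $\mathbb{E}[Z_i] = (1 - 2/n)^T$, so $\mathrm{Var}(|\mathcal{U}|) \leq \mathbb{E}[|\mathcal{U}|] = (n/2)(1 - 2/n)^T$. Given a target success probability $c \in (0,1)$, I would set $k := \lceil \log_2(2/c) \rceil$ and choose a sufficiently small constant $K = K(c) > 0$ so that $T \leq K n \log n$ forces $\mathbb{E}[|\mathcal{U}|] \geq \max(2k, 8/c)$; Chebyshev's inequality then gives $\mathbb{P}[|\mathcal{U}| < k] \leq c/2$. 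Combining with the conditional bound and taking expectations over the transcript,
\[
    \mathbb{P}[\text{success}] \leq 2^{-k} + \mathbb{P}[|\mathcal{U}| < k] \leq c/2 + c/2 = c.
\]
Yao's principle then lifts this bound to randomized algorithms, giving the claimed $\Omega(n\log n)$ lower bound for any $\epsilon < 1/2$ (in particular $\epsilon \in (0, 1/8)$) and any target $c \in (0,1)$.

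The main obstacle will be the careful decoupling in the second paragraph. I will need to verify that conditioning on the entire transcript leaves the untouched edge statuses unbiased and mutually independent, and that the principal-submatrix inequality together with the triangle inequality genuinely converts the global spectral-norm constraint into $|\mathcal{U}|$ independent one-bit distinguishability tests whose failure probabilities multiply. Once that decoupling is in place, the remainder is a routine coupon-collector concentration calculation.
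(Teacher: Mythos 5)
Your proof plan is correct and follows the same high-level strategy as the paper's proof: a matching-type hard instance in which each isolated pair of vertices independently has an edge or not, reduced to a coupon-collector-style argument over pairs. The difference lies in how rigorously the two arguments connect ``not having observed all pairs'' to the failure of the output. The paper's proof asserts that from a successful sparsifier $\tA$ one can read off every edge and therefore reduces to the event of having collected every coupon, invoking the exponential coupon-collector tail $\Pr[T_{\mathrm{cc}} \leq n\log n - c'n]\leq e^{-c'}$. That reduction, taken literally, has a gap: an algorithm that has not seen pair $i$ can still guess $X_i$ and be correct with probability $1/2$, so success does \emph{not} logically imply that all coupons were seen. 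Your argument patches this precisely: condition on the full transcript (which fixes the output matrix $M$ and leaves the unseen $\{X_i\}_{i\in\mathcal U}$ independent and unbiased), use the principal-submatrix bound $\|M-N_G\|_2 \geq \|M_i - X_i J\|_2$ and the triangle inequality $\|M_i\|_2 + \|M_i-J\|_2 \geq 1$ to show each unseen pair forces an independent coin flip the algorithm passes with probability at most $1/2$, giving conditional success probability at most $2^{-|\mathcal U|}$, and then lower-bound $|\mathcal U|$. You use a second-moment/Chebyshev bound for the concentration of $|\mathcal U|$ rather than the coupon-collector concentration lemma; both suffice, though the exponential tail is sharper. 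Your argument also works for all $\epsilon < 1/2$, whereas the paper's explicit interval test (distinguishing $[-1/4,2/4]$ from $[3/4,5/4]$) is the source of its $\epsilon < 1/8$ restriction. One cosmetic caveat: your bound gives $\Pr[\text{success}]\leq c$ rather than $<c$; taking $k$ one larger, or slightly tightening the constant $K$, makes the inequality strict so it cleanly yields the claimed $\Omega(n\log n)$ lower bound for target success probability $c$.
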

We prove \Cref{lemma:lb_spectral_add_coupon} via a coupon collector argument. Consider a graph of size $2n$ with $n$ isolated pairs of nodes, each of which is either connected by an edge or not. To obtain a constant factor additive spectral sparsifier, we must observe one node in each pair, which takes $\Omega(n\log n)$ samples via the standard coupon collector lower bound. Note that this 
lower bound also applies to the stronger $k$-step random walk model studied in \citet{BravermanKrishnanMusco:2022} and \citet{Cohen-SteinerKongSohler:2018}, since for such a graph, no additional information is gained from a longer walk.

\subsection{Paper Organization and Preliminaries}
\label{sub:notation}

\paragraph{Paper Organization.} \Cref{sec:upper} presents our nuclear sparsification and SDE algorithms in the adjacency query model. \Cref{sec:lower} and \Cref{sec:lower-query} present our sparsity and query lower bounds for nuclear sparsification. \Cref{sec:additive-spectral} presents a lower bound against deterministic algorithms for $\epsilon$-additive spectral sparsification. \Cref{sec:rw} covers our results in the more restrictive one-step random walk model. \Cref{app:omitted} shows how our algorithms for constructing a nuclear sparsifier can be modified to obtain a \emph{graphical} nuclear sparsifier (i.e., a matrix that is a nuclear sparsifier and is also the normalized adjacency matrix of some graph.)
\paragraph{Graph Notation.} 
In this paper, we consider undirected graphs $G=(V,E,w)$ with positive edge weights $w\in \R^E$. $A_G\in \R^{V\times V}_{\geq 0}$ denotes the adjacency matrix of $G$, i.e., $[{A_G}]_{v,v'} \defeq w_e$ if $ e = \{v,v'\}\in E$ and is $0$ otherwise. $\deg(v) \defeq \sum_{e = \{v,v'\} \in E} w_e$ denotes the weighted degree of vertex $v$, and $D_G \in \R^{V\times V}_{\geq 0}$ is the diagonal degree matrix of $G$, where ${D_G}_{v,v} := \deg(v)$ for all $v \in V$. Throughout, we let $N_G \defeq D_G^{-1/2} A_G D_G^{-1/2}$ denote the normalized adjacency matrix of $G$. 

\paragraph{Vector and Matrix Notation.} For positive integers $n$, we let $[n] \defeq \set{1,\dots,n}$. For a vector $v\in\R^n$, $\norm{v} \defeq \sqrt{\sum_{i\in[n]}v_i^2}$ denotes its Euclidean norm. We let $\mathbb{S}^{n \times n}$ denote the set of all real symmetric $n \times n$ matrices. 
The spectral norm of a matrix $A \in \R^{n \times n}$ is $\spnorm{A} \defeq \max_{v:v\in\R^n}{\norm{Av}}/{\norm{v}}$, which equals $\max_{i}|\lambda_i(A)|$ when $A\in \mathbb{S}^{n \times n}$. The nuclear norm of a matrix $A\in \mathbb{S}^{n \times n}$ is $\nuclear{A} \defeq \sum_{i=1}^{n} |\lambda_i(A)|$ and the Frobenius norm is $\fnorm{A} \defeq \sqrt{\sum_{i,j}A(i,j)^2} \sum_{i=1}^{n} |\lambda_i(A)|^2$. For matrices $A,X \in \R^{m\times n}$ we define the matrix inner-product,  $\inprod{A,X} \defeq \sum_{i,j}A(i,j)X(i,j)$.  
We use Loewner order notation $A\succcurlyeq 0$ or $0\preccurlyeq A$ to denote that a symmetric $A$ is positive semidefinite (PSD), i.e., that $A$ has non-negative eigenvalues. $A\succcurlyeq B$ denotes that $A-B$ is PSD. 

\paragraph{Other Notation.} We let $\mathrm{Ber}(p)$ denote a Bernoulli random variable with parameter $p$, i.e.,  $\mathrm{Ber}(p) = 1$ with probability $p$ and $0$ otherwise. $\mathcal{G}(n, p)$ denotes the \Erdos-\Renyi model with parameter $p$, i.e., a graph distributed as $\mathcal{G}(n, p)$ has $(i,j)\in E$ with probability $p$ for every $i<j\in [n]$, independently at random.

\paragraph{Preliminary Facts.} We rely on the following basic facts about the nuclear norm.

\begin{fact}[Nuclear Norm Duality]\label{fact:norm-ineqs}
 For any $A \in \mathbb{S}^{n \times n}$, 
$\nuclear{A} 
= \max_{Y \in \mathbb{S}^{n \times n}:\spnorm{Y}= 1}\inprod{A,Y}$.
 \end{fact}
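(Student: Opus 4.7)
The plan is to prove both inequalities by leveraging the spectral decomposition of $A$. Since $A \in \mathbb{S}^{n \times n}$, write $A = U\Lambda U^\top$ where $U \in \R^{n \times n}$ is orthogonal and $\Lambda = \operatorname{diag}(\lambda_1,\dots,\lambda_n)$ contains the real eigenvalues of $A$. Recall that for symmetric $A$, $\nuclear{A} = \sum_{i=1}^n |\lambda_i|$.

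For the lower bound ($\nuclear{A} \le \max_Y \inprod{A,Y}$), I would exhibit an explicit feasible witness. Let $S \in \R^{n\times n}$ be the diagonal matrix with $S_{ii} = \operatorname{sign}(\lambda_i)$ (setting $\operatorname{sign}(0) = 0$), and set $Y^\star \defeq U S U^\top$. Then $Y^\star$ is symmetric, its eigenvalues all lie in $\{-1,0,1\}$, so $\spnorm{Y^\star} \le 1$, and by cyclicity of the trace,
\[
\inprod{A,Y^\star} = \tr(U\Lambda U^\top \cdot U S U^\top) = \tr(\Lambda S) = \sum_{i=1}^n |\lambda_i| = \nuclear{A}.
\]
If $A$ has any zero eigenvalues, $\spnorm{Y^\star}$ may be strictly smaller than $1$; I would note that one can replace the zero diagonal entries of $S$ by $\pm 1$ without changing $\inprod{A,Y^\star}$, giving $\spnorm{Y^\star} = 1$ exactly.

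For the upper bound, take any symmetric $Y$ with $\spnorm{Y} = 1$ and let $Y' \defeq U^\top Y U$. Since the spectral norm is orthogonally invariant, $\spnorm{Y'} = \spnorm{Y} \le 1$. Then
\[
\inprod{A,Y} = \tr(U\Lambda U^\top Y) = \tr(\Lambda \cdot U^\top Y U) = \sum_{i=1}^n \lambda_i\, Y'_{ii}.
\]
The key observation is that $|Y'_{ii}| = |e_i^\top Y' e_i| \le \spnorm{Y'} \cdot \norm{e_i}^2 \le 1$, so
\[
\inprod{A,Y} \le \sum_{i=1}^n |\lambda_i|\cdot |Y'_{ii}| \le \sum_{i=1}^n |\lambda_i| = \nuclear{A}.
\]
Combining the two directions yields the claimed equality.

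The proof is essentially bookkeeping around the spectral theorem, so there is no real obstacle; the only mildly delicate points are (i) ensuring the witness $Y^\star$ has $\spnorm{Y^\star} = 1$ rather than $\le 1$ when $A$ is singular, which I handle by the sign-flipping observation above, and (ii) justifying the diagonal-entry bound $|Y'_{ii}| \le \spnorm{Y'}$ via the variational characterization of the spectral norm.
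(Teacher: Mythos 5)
Your proof is correct and complete. Note, however, that the paper gives no proof of this fact at all — it is invoked as a standard result from matrix analysis (a restriction of the general nuclear/operator norm duality $\nuclear{A} = \max_{\spnorm{Y}\le 1}\langle A,Y\rangle$ to symmetric $A$ and symmetric test matrices $Y$), so there is no paper proof to compare against. Your spectral-theorem argument is the canonical derivation: the witness $Y^\star = U\,\mathrm{sign}(\Lambda)\,U^\top$ (with zeros on the diagonal of $\mathrm{sign}(\Lambda)$ replaced by $\pm 1$, as you observe, to meet the equality constraint $\spnorm{Y^\star}=1$) gives the lower bound, and conjugating an arbitrary feasible $Y$ into the eigenbasis of $A$ and using $|Y'_{ii}|\le\spnorm{Y'}$ gives the upper bound. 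Both directions are handled cleanly, including the $A=0$ edge case. No gaps.
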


\begin{fact}\label{lem:redx-sde} 
For $A,B\in\mathbb{S}^{n\times n}$, if $\nuclear{A-B}\le n\e$, then $ W_1(A,B)\le \e$.
\end{fact}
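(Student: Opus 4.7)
The plan is to reduce the claim to a classical eigenvalue perturbation inequality. Unfolding the definition, what we need is
\[
\frac{1}{n}\sum_{i=1}^n |\lambda_i(A) - \lambda_i(B)| \;\le\; \frac{1}{n}\|A-B\|_*,
\]
where on both sides the $\lambda_i$'s are taken in the same sorted order (say non-decreasing). So the statement reduces, after multiplying by $n$, to showing that for any $A,B\in\mathbb{S}^{n\times n}$ with eigenvalues listed in common sorted order,
\[
\sum_{i=1}^n |\lambda_i(A) - \lambda_i(B)| \;\le\; \sum_{i=1}^n |\lambda_i(A-B)| \;=\; \|A-B\|_*.
\]
This is exactly the Hoffman--Wielandt / Lidskii--Mirsky inequality specialized to the Schatten $1$-norm, so in principle one sentence of citation finishes the proof.

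For the write-up I would prefer a short self-contained derivation from the nuclear-norm duality already recorded in \Cref{fact:norm-ineqs}. Write spectral decompositions $A=U\,\mathrm{diag}(\alpha)\,U^\top$ and $B=V\,\mathrm{diag}(\beta)\,V^\top$, with $\alpha_1\le\cdots\le\alpha_n$ and $\beta_1\le\cdots\le\beta_n$. Let $s_i \defeq \mathrm{sign}(\alpha_i-\beta_i)\in\{-1,+1\}$ and set $S \defeq \mathrm{diag}(s_1,\ldots,s_n)$. The plan is to construct a dual certificate $Y\in\mathbb{S}^{n\times n}$ with $\|Y\|_2\le 1$ such that $\langle Y, A-B\rangle = \sum_i |\alpha_i-\beta_i|$. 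The natural choice is $Y \defeq USU^\top$ tested against $A$ (which gives $\sum_i s_i \alpha_i$) and $Y' \defeq VSV^\top$ tested against $B$ (which gives $\sum_i s_i \beta_i$). To combine these into a single test matrix, invoke the standard rearrangement-type fact (a consequence of Ky Fan's maximum principle) that for symmetric $M$ with eigenvalues $\mu_1\le\cdots\le\mu_n$ and any symmetric $Y$ with $\|Y\|_2\le 1$, $\langle Y, M\rangle\le \sum_i s_i \mu_i$ when $s_i$ are the chosen signs, with the maximum over such $Y$ attained at $Y$ whose eigenvectors align with those of $M$. Applying this gives $\langle USU^\top, A\rangle - \langle USU^\top, B\rangle \ge \sum_i s_i(\alpha_i-\beta_i) = \sum_i|\alpha_i-\beta_i|$, because the sorted alignment of $\alpha$ with $s$ is the optimum for $A$ and any other choice (in particular the eigenbasis $U$ applied to $B$) can only decrease the inner product with $B$. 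Taking $Y=USU^\top$ in \Cref{fact:norm-ineqs} yields $\sum_i|\alpha_i-\beta_i|\le \langle Y, A-B\rangle \le \|A-B\|_*$, as required.

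The main obstacle is the step where I compare $\langle USU^\top, B\rangle$ to $\sum_i s_i\beta_i$: it is exactly here that the sorted ordering of eigenvalues matters, and it is the substantive content of the Lidskii--Mirsky inequality rather than a purely formal manipulation. If this turns out to be awkward to prove cleanly from scratch, the cleanest alternative is simply to cite Mirsky's theorem (that every unitarily invariant norm satisfies $\|\mathrm{diag}(\lambda(A)-\lambda(B))\|\le\|A-B\|$ under sorted ordering), apply it to the nuclear norm, and divide by $n$. Either route delivers the inequality $W_1(A,B)\le \tfrac{1}{n}\|A-B\|_*$, and plugging in the hypothesis $\|A-B\|_*\le n\e$ gives $W_1(A,B)\le\e$.
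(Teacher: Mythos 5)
Your fallback route---citing Mirsky's theorem that $\sum_{i}|\lambda_i(A)-\lambda_i(B)|\le\nuclear{A-B}$ for symmetric $A,B$ with eigenvalues in common sorted order, then dividing by $n$---is correct and is essentially the paper's proof. (The paper first shifts $A$ and $B$ by $\lambda_{\min}I$ so that eigenvalues coincide with singular values, then invokes Mirsky's singular value perturbation inequality; your phrasing in terms of eigenvalues of symmetric matrices is the same theorem in different clothing.)

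However, the self-contained derivation you sketch has a genuine error, not merely an ``awkward'' step, and the choice $Y=USU^\top$ cannot be made to work. The ``standard rearrangement-type fact'' you invoke---that $\langle Y,M\rangle\le\sum_i s_i\mu_i$ for any symmetric $Y$ with $\|Y\|_2\le1$ and any prescribed sign pattern $s$---is false; the correct statement forces $s_i=\mathrm{sign}(\mu_i)$, with the maximum being $\nuclear{M}$. Consequently the inequality you need, $\langle USU^\top,B\rangle\le\sum_i s_i\beta_i$, fails in general, and the heuristic that aligning $Y$ with $B$'s eigenbasis would only increase the inner product runs the wrong way once $s$ is not monotone. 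Concretely, take $n=2$, $A=\mathrm{diag}(2,3)$, $B=\mathrm{diag}(4,1)$. Then $\alpha=(2,3)$, $\beta=(1,4)$, $s=(1,-1)$, $U=I$, and $Y=\mathrm{diag}(1,-1)$. One computes $\langle Y,B\rangle=4-1=3$ while $\sum_i s_i\beta_i=1-4=-3$, so the claimed comparison fails outright; worse, $\langle Y,A-B\rangle=-4 < 2=\sum_i|\alpha_i-\beta_i|$, so this $Y$ certifies nothing (even though the target inequality $\nuclear{A-B}=4\ge 2$ is of course true). The Lidskii--Mirsky content is irreducible here and should be cited, as in your fallback and in the paper, rather than rederived by a one-line duality argument.
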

 \begin{proof}
 Let $\lambda_{\min}$ be the smallest eigenvalue among the eigenvalues of $A$ and $B$, so that $A - \lambda_{\min}I$ and $B - \lambda_{\min}I$ are both positive semidefinite, where $I$ denotes an $n\times n$ identity matrix. The singular values of $A - \lambda_{\min}I$ and $B - \lambda_{\min}I$ equal the eigenvalues of those matrices. So, we can directly apply Mirsky's singular value perturbation inequality \citep[Theorem 5]{Mirsky} to conclude that:
\begin{align*}
    \nuclear{A-B} = \nuclear{A-\lambda_{\min}I-(B-\lambda_{\min}I)} &\geq \sum_{i\in[n]}|\lambda_i(A-\lambda_{\min}I)-\lambda_i(B-\lambda_{\min}I)|\\
    &= \sum_{i\in[n]} \left|\lambda_i(A)-\lambda_i(B)\right| = nW_1(A,B). \qedhere
\end{align*}
\end{proof}

\section{Additive Nuclear Sparsifiers}\label{sec:upper}
In this section, we present our main algorithm (\Cref{alg:nuclear}) for constructing $O(n\e^{-2})$-sparse $\epsilon$-additive nuclear norm sparsifiers (\Cref{thm:nuclear-ub-informal}). 
We first prove a structural result, which leads to our greedy sparsification procedure. Specifically, we show that the normalized adjacency matrix can be sparsified by eliminating edges with small weights compared to the degree of its endpoints.

\begin{theorem}\label{thm:nuclear_ub}
  Let $G = (V, E, w)$ be a weighted graph with adjacency matrix $A_G \in \R^{n \times n}$, degree matrix $D_G$, and normalized adjacency matrix $N_G$. For $\e\in(0,1)$ define $E' \subseteq E$ as:
  \[  E' \defeq \Set{ e=\set{v,v'}\in E \given w_e \geq \frac{\e^2}{2} \cdot \max\set{\deg(v), \deg(v')} } \mper   \]
  Let $G'$ be obtained by removing all edges from  $G$ that are not in $E'$,  and let $A_{G'} \in \R^{n \times n}$ denote the adjacency matrix of $G'$. Let $\tildeN \defeq D_G^{-1/2} A_{G'} D_G^{-1/2} $. Then $\|N_G-\tildeN\|_* \leq \e n$ and $\tildeN$ has at most $2\e^{-2}$ non-zeros in each row (resp. column) and $\|\tildeN\|_2 \leq 1$.
\end{theorem}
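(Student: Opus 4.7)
I plan to verify the three claimed properties of $\tildeN$ separately. For the row/column sparsity, I will observe that any edge $e=\set{v,v'}\in E'$ incident to $v$ satisfies $w_e \geq (\e^2/2)\max\set{\deg(v),\deg(v')} \geq (\e^2/2)\deg(v)$, while $\sum_{e\in E':\, e\ni v} w_e \leq \deg(v)$. Hence at most $2\e^{-2}$ edges of $E'$ are incident to any single vertex, and because $\tildeN$ has the same sparsity pattern as $A_{G'}$, it has at most $2\e^{-2}$ non-zero entries per row (and per column, by symmetry).

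For the spectral norm bound, I plan to use the standard identity that for any weighted graph $H$ with non-negative edge weights, $x^\top(D_H\pm A_H)x = \sum_{\set{u,v}\in E_H} w_{uv}(x_u \pm x_v)^2 \geq 0$, so $-D_H \preccurlyeq A_H \preccurlyeq D_H$. Applied with $H = G'$ and combined with $D_{G'} \preccurlyeq D_G$ (removing edges can only decrease degrees), this yields $-D_G \preccurlyeq A_{G'} \preccurlyeq D_G$; conjugating by $D_G^{-1/2}$ gives $-I \preccurlyeq \tildeN \preccurlyeq I$, so $\spnorm{\tildeN}\leq 1$.

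The main step is the nuclear-norm bound, which I will reduce to a Frobenius-norm computation. By Cauchy--Schwarz applied to the singular values of an $n\times n$ matrix, $\nuclear{N_G - \tildeN} \leq \sqrt{n}\,\fnorm{N_G - \tildeN}$, so it suffices to show $\fnorm{N_G - \tildeN}^2 \leq \e^2 n$. Writing $E'' \defeq E\setminus E'$, direct calculation gives
\[ \fnorm{N_G-\tildeN}^2 = \sum_{e=\set{v,v'}\in E''} \frac{2\,w_e^2}{\deg(v)\deg(v')}. \]
For each $e\in E''$ I will use the defining bound $w_e < (\e^2/2)\max\set{\deg(v),\deg(v')}$ together with the elementary inequality $1/\min\set{a,b} \leq 1/a + 1/b$ for $a,b>0$ to get
\[ \frac{w_e^2}{\deg(v)\deg(v')} \leq \frac{(\e^2/2)\,w_e}{\min\set{\deg(v),\deg(v')}} \leq \frac{\e^2}{2}\Paren{\frac{w_e}{\deg(v)} + \frac{w_e}{\deg(v')}}. \]
Summing over $e\in E'' \subseteq E$ and using $\sum_{e\ni v} w_e = \deg(v)$ at each vertex collapses the right-hand side to $\e^2 n$, which yields $\nuclear{N_G - \tildeN} \leq \e n$.

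The main obstacle is not any individual step but arranging the algebra so that the $\sqrt{n}$ lost in the nuclear-to-Frobenius conversion is exactly compensated by summing $n$ per-vertex "unit" contributions; the threshold $(\e^2/2)\max\set{\deg(v),\deg(v')}$ in the definition of $E'$ is precisely what makes this balance work. The only notational subtlety is handling isolated vertices ($\deg(v)=0$), which contribute zero rows to both $N_G$ and $\tildeN$, so we may restrict to vertices of positive degree or interpret $D_G^{-1/2}$ via the Moore--Penrose pseudoinverse on the zero coordinates.
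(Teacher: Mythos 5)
Your proof is correct and follows essentially the same route as the paper's: the PSD sandwich for the spectral bound, the weight-threshold counting for row sparsity, and the Cauchy--Schwarz reduction $\nuclear{N_G-\tildeN}\leq\sqrt{n}\,\fnorm{N_G-\tildeN}$ followed by the per-vertex summation $\sum_{e\ni v}w_e\leq\deg(v)$ for the nuclear bound. The only (cosmetic) difference is that you bound the Frobenius sum via the inequality $1/\min\{a,b\}\leq 1/a+1/b$ rather than the paper's device of ordering vertices by degree before splitting the double sum; both yield exactly $\e^2 n$.
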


\begin{proof}
  Let $D_{G'}$ denote the degree matrix of $A_{G'}$. Since $D_{G'}$ is entrywise smaller than $D_G$, and since $D_{G'} - A_{G'}$ is a graph Laplacian and thus PSD, we have the relationship:
  \[ -D_G \preccurlyeq -D_{G'} \preccurlyeq A_{G'}  \preccurlyeq D_{G'} \preccurlyeq D_G. \]
  Our claim that $\|\tildeN\|_2 \leq 1$ then follows by multiplying by $D_G^{-1/2}$ on the left and right:
    \begin{align*} 
    -I = -D_G^{-1/2}D_GD_G^{-1/2} \preccurlyeq D_G^{-1/2}A_{G'}D_G^{-1/2} = \tildeN \preccurlyeq D_G^{-1/2}D_GD_G^{-1/2} = I.
    \end{align*}
  Next, observe that for each $v \in V$ there are at most $2\e^{-2}$ nodes $v'$ for which $w_{\set{v,v'}}\ge \frac{\e^2}{{2}} \deg(v)$. Only edges for which $w_{\set{v,v'}}\ge \frac{\e^2}{{2}} \deg(v)$ are contained in $E$. Consequently, $A_{G'}$ and $\tildeN$ have at most $2\e^{-2}$ non-zeros per row/column, as claimed
  
  Finally, we bound the error in the nuclear norm by first bounding the Frobenius norm error. 
  Without loss of generality, assume that $V = \{1, \ldots, n\}$ and the vertices are ordered such that $\deg(i) \geq \deg(j)$ for $i < j$.
  \begin{align*}
    \norm{N_G-\tildeN}_{F}^2 = 2\sum_{\{i,j\} \in E \setminus E'} \paren{\frac{w_{\{i,j\}}}{\sqrt{\deg(i)\deg(j)}}}^2  
    &= 2 \sum_{i} ~ \sum_{j: \set{i,j} \in E \setminus E',\, j < i} \frac{w_{\{i,j\}}^2}{\deg(i)\deg(j)} \\
    & \leq 2 \sum_{i} ~ \sum_{j: \set{i,j} \in E \setminus E'  ,\, j < i} \frac{w_{\{i,j\}}}{\deg(i)} \frac{\e^2}{2}  \\
    &= \e^{2} \sum_{i} \frac{1}{\deg(i)} \sum_{j: \set{i,j} \in E \setminus E',\, j < i} w_{\{i,j\}} \leq \e^2 n.
  \end{align*}
Since $\nuclear{A}\le \sqrt{n}\fnorm{A}$ for any matrix $A$ (\Cref{fact:norm-ineqs}), we conclude that $\nuclear{N_G-\tildeN} \leq \e n$.
\end{proof}
Next, we show that \Cref{thm:nuclear_ub} yields an efficient nuclear norm sparsification algorithm (\Cref{alg:nuclear}), whose analysis yields our first main result, \Cref{thm:nuclear-ub-informal}.

\begin{algorithm2e}[t]
\caption{Additive \nuclearApx}\label{alg:nuclear}
\DontPrintSemicolon
\LinesNumbered
\KwIn{Graph $G = (V,E,w)$ supporting $\mathsf{GetNeighbor}$ queries (\Cref{def:query_model}), accuracy $\e$.}
\KwOut{Nuclear norm sparsifier $\tildeN$ of $G$.}
Initialize $\tildeN = 0$, $c=1$\;
\For{$v \in V$}{
$v' \gets \mathsf{GetNeighbor}(v,c)$ \;
 \While{$v'\neq \emptyset \text{ and } w_{\{v,v'\}}\ge \frac{\e^2}{2} \deg(v)$}{
 \If{$w_{\{v,v'\}}\ge \frac{\e^2}{2} \deg(v')$}{
    Set $\tildeN(v,v') = \tildeN(v',v) = w_{\set{v,v'}}/(\sqrt{\deg(v)\deg(v')})$
    \; 
    Set $\tildeN(v',v) = \tildeN(v,v')$\;  
 }
 $c\gets c+1$, $v' \gets \mathsf{GetNeighbor}(v,c)$ \;
 }
}
\textbf{Return:} $\tildeN$
\end{algorithm2e}

\begingroup
\renewcommand{\proofname}{Proof of \Cref{thm:nuclear-ub-informal}}
\begin{proof} 
We first claim that \Cref{alg:nuclear} returns exactly the matrix $\tilde{N}$ described in \Cref{thm:nuclear_ub}. In particular, it suffices to show that the while loop is executed for all $\{v,v'\}\in E'$, where $E'$ is as in \Cref{thm:nuclear_ub}. To see why this is the case, observe that, because edges are processed in decreasing order of weight, when executing the for loop for vertex $v$, the inner while loop is executed for all $\{v,v'\}$ such that $w_{\{v,v'\}}\ge \frac{\e^2}{2} \deg(v)$. Because $\max\set{\deg(v'),\deg(v)} \geq \deg(v)$, the loop executes for a superset of the edges in $E'$ that have $v$ as an endpoint.

Next, we bound the runtime. There are $n$ iterations of the for loop, each of which requires an $O(1)$ time oracle call. Additionally, every iteration adds $2$ non-zero entries to $\tilde{N}$ at the cost of one oracle call and other $O(1)$ time operations. So, the total runtime is $O(n + \nnz \tildeN)$. As shown in Theorem~\ref{thm:nuclear_ub}, $\nnz \tildeN = O(n \epsilon^{-2})$. Consequently, the algorithm runs in $O(n\epsilon^{-2}).$
\end{proof}
\endgroup

Note that the nuclear sparsifier $\tildeN$ returned by \Cref{alg:nuclear} is \textit{not} guaranteed to be the normalized adjacency matrix of any undirected graph. As access to a graphical sparsifier may be desirable, we show how to modify \Cref{alg:nuclear} to obtain a graphical nuclear sparsifier in \Cref{lemma:graphical-nuclear-ub}. 

\subsection{Applications to Spectral Density Estimation}
An important application of our nuclear norm approximation method from \Cref{alg:nuclear} is to develop faster algorithms for SDE (Problem~ \ref{prob:spectral_density}). In particular, from \Cref{lem:redx-sde}, if we compute $\tildeN$ such that  $\|{N-\tildeN}\|_*\le \frac{\e}{2} n$, then computing an $\frac{\epsilon}{2}$ accurate SDE for $\tildeN$ immediately yields an $\epsilon$ accurate SDE for $N_G$. If we compute the SDE for $\tildeN$ using the existing linear time method from \citet{BravermanKrishnanMusco:2022}, which runs in roughly $O(\nnz{\tilde{N}}\epsilon^{-1})$ time, then we immediately obtain our \Cref{corr:sde-informal}, i.e., that there is an $O(n\epsilon^{-3})$ time randomized algorithm for approximating the spectral density of any weighted graph $G$. 

\thmsde*

\begin{proof} 
Combining Theorem 1.4 in \citet{BravermanKrishnanMusco:2022} with the discretization procedure described in the appendix of \citet{BravermanKM21Arxiv} yields an algorithm that solves the general matrix version of \Cref{prob:spectral_density} with probability $99/100$ in time $\bigO\left(\nnz{M}\e^{-1}\cdot (1+\e^{-2}n^{-2}\log^3(1/\e)\right)$ for any symmetric matrix $M$. 

Suppose that $n \geq \e^{-2}\log^3(1/\e)$ so that $\e^{-2}n^{-2}\log^3(1/\e) \leq 1$. For such values of $n$, we obtain \Cref{corr:sde-informal} by simply computing an $\frac{\epsilon}{2}$-additive nuclear sparsifier $\tildeN$  for $N_G$ using \Cref{thm:nuclear-ub-informal} in $O(n\epsilon^{-2})$ time, then running the method of \citet{BravermanKrishnanMusco:2022} with error $\frac{\epsilon}{2}$ on $\tildeN$. By \Cref{lem:redx-sde} and triangle inequality, the eigenvalues returned are guaranteed to approximate those of $N_G$ to error at most $\frac{\epsilon}{2} + \frac{\epsilon}{2}$ in Wasserstein distance. Moreover, since $\tildeN$ is guaranteed to have at most $O(n\epsilon^{-2})$ entries, the total runtime in $O(n\epsilon^{-3})$ as required. 

Alternatively, suppose that that $n < \e^{-2}\log^3(1/\e)$. In this case, we can simply use a direct eigendecomposition method to compute the eigenvalues of $N_G$. Doing so takes $\tilde{O}(n^\omega) = \bigO(n^{2.372})$ \citep{BanksGarza-VargasKulkarni:2022}, and we have that $O(n^{2.372}) = O(n\epsilon^{-3})$, which yields the theorem. 
\end{proof}

We can also use \Cref{alg:nuclear}  to obtain the first \emph{deterministic} sublinear time method for spectral density estimation. In particular, we claim:

\thmsdedeter*

\begin{proof}
    The main observation required to prove \Cref{corr:sde-deterministic} is that it suffices to compute the first $O(1/\epsilon)$ (uncentered) moments of the eigenvalues of $\tildeN$ in $n \cdot 2^{O(\frac{1}{\e}\log\frac{1}{\e})}$ time, where $\tildeN$ is the result of \Cref{alg:nuclear} run on $N_G$ with accuracy parameter $\frac{\e}{2}$. In particular, it is well known that, if two distributions $p$ and $q$ have the same first $\frac{2c}{\e}$ (uncentered) moments for a fixed constant $c$, then $W_1(p,q)\leq \epsilon/2$ \cite{KongValiant:2017,BravermanKrishnanMusco:2022}.\footnote{By Lemma 3.1 in \citet{BravermanKrishnanMusco:2022}, $c\leq 36$, although it is likely that this upper bound is loose.} So, given moments of an unknown distribution $p$ (here, the spectral density of $\tilde{N}$), we can find a distribution approximating $p$ in Wasserstein distance by simply returning any distribution that (approximately) matches those moments. Formally, \citet{BravermanKrishnanMusco:2022} show in their Lemma 3.4 that, given $p$'s first $2c/\epsilon$ moments, a distribution approximating $p$ to $\epsilon/2$ error Wasserstein distance can be found in $\poly(1/\epsilon)$ time. Furthermore, by their Theorem B.1, this distribution can be converted to a uniform distribution over a set of $n$ approximate eigenvalues in $O(n + 1/\epsilon)$ time.

    To see that we can efficiently compute the first compute the first $j = \frac{c}{2\e}$ eigenvalue moments of $\tilde{N}$ in $n \cdot 2^{O(\frac{1}{\e}\log\frac{1}{\e})}$ time,  observe that the $j^\text{th}$ moment of $\tildeN$'s eigenvalue distribution equals: 
    \begin{align*}
    \frac{1}{n}\sum_{i=1}^n \tilde{\lambda}_i^j = \Tr{\tildeN^j},
    \end{align*}

    where $\tilde{\lambda}_1, \ldots,\tilde{\lambda}_n$ are $\tildeN$'s eigenvalues. As $\tildeN$ is symmetric with at most $8\epsilon^{-2}$ non-zeros per row/column (guaranteed by \Cref{thm:nuclear_ub}), for any positive integer $k$, $\tildeN^k$ has at most $(8\e^{-2})^k$ non-zeros per row/column. Thus, given $\tildeN^k$, the matrix $\tildeN^{k+1} = \tildeN^k\tildeN$ can be computed in $n(8\e^{-2})^{k+1} = n\cdot 2^{O(k\log\frac{1}{\epsilon})}$ time. So, we conclude that $\tildeN^2, \tildeN^3, \ldots, \tildeN^j$ can be computed, and traces evaluated exactly, in $n\cdot 2^{O(j\log\frac{1}{\epsilon})}$ time. Note that here and in the remainder of the paper, we work in the Real RAM model of computation for simplicity.
\end{proof}

\section{Lower Bounds for Nuclear Sparsification}\label{sec:lower}
In this section, we prove \Cref{lem:norm_adj-informal} which shows that our sparsification result of \Cref{thm:nuclear-ub-informal} is nearly optimal in terms of sparsity.
The proof proceeds in two steps. First, in Section~\ref{subsec:counting-argument}, we restrict to the case where $\epsilon = \Theta(1/\sqrt{n})$, showing that with constant probability, the \emph{unnormalized} adjacency matrix $A$ of a random \Erdos-\Renyi graph cannot be approximated well in the nuclear norm by any sparse matrix $B'$ with $\nnz{B'} = o(n \epsilon^{-2}/\log^2 \epsilon^{-1}) = o(n^2/\log^2 n)$, i.e., $\nuclear{A-B'} = \Omega(n^{1.5})$.

Second, in Section~\ref{subsec:tiling-argument} we extend the argument in Section~\ref{subsec:counting-argument} to a lower bound on nuclear sparsification of \emph{normalized} adjacency matrices for a broader range of $\epsilon$ to obtain Theorem~\ref{lem:norm_adj-informal}. 
Concretely, we show that there exists a fixed constant $\epsilon_0$ such that, whenever $\Theta(1/\sqrt{n}) \leq \epsilon < \epsilon_0$, we can \emph{tile} several random \Erdos-\Renyi graphs in order to construct a graph $G$ that cannot be approximated in nuclear norm by any matrix with $o(n \epsilon^{-2}/\log^2 \epsilon^{-1})$ entries.

\subsection{Sparsification Lower Bound for \Erdos-\Renyi Adjacency Matrices}\label{subsec:counting-argument}
Our sparsity lower bound is based on showing that, with constant probability, all pairs of unnormalized adjacency matrices in a sample of roughly $2^{O(n^2)}$ independent random \Erdos-\Renyi graphs are \emph{far} in nuclear norm -- roughly $n^{1.5}$ far. On the other hand, the set of $s$-sparse matrices can be well-approximated in nuclear norm by a finite set of size approximately $2^s$. So, if $s = o(n^2)$, there are simply not enough choices of $s$-sparse matrices to approximate all graphs in our sample.

In the proof of this result, we leverage the following (simplified) matrix concentration result to show that two \Erdos-\Renyi random graphs tend to be far apart in the nuclear norm.
\begin{fact}[\citet{GuionnetZ00}, Theorem 1.1 with $f(x) = |x|$]\label{thm:concentration} 
  Let $C \in \R^{n \times n}$ be a fixed symmetric matrix whose entries have magnitude at most $1$. Let $X \in \R^{n \times n}$ be a random symmetric matrix with $X_{ij} = C_{ij} \omega_{ij}$, where $\omega_{ij} = \omega_{ji}$, and $\set{\omega_{ij}: 1 \leq i\leq j\leq n }$ are independent random variables supported on $[-1,1]$. Then, for any $\delta > 16 \sqrt{\pi}/n$,
  \[ \prob{ \abs{ \|X\|_* - \E[\|X\|_* ] } \geq n^{1.5} \delta } \leq 4 \cdot \exp\paren{ -\frac{n^2 (\delta-16 \sqrt{\pi}/n)^2}{64} } \mper \]
\end{fact}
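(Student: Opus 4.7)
The plan is to view $\|X\|_*$ as a convex Lipschitz function of the independent variables $\{\omega_{ij}\}_{i\le j}\in [-1,1]^{n(n+1)/2}$, apply a Talagrand-type concentration inequality for convex Lipschitz functions on the cube to get concentration of $\|X\|_*$ about its median, and finally translate this into concentration about the mean. The mean-to-median correction is exactly the source of the $16\sqrt{\pi}/n$ shift in the stated bound.

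Write $X=X(\omega)$; since $X$ depends linearly on $\omega$ and the nuclear norm is convex on symmetric matrices, $\omega\mapsto \|X(\omega)\|_*$ is convex. To bound its Lipschitz constant in Euclidean distance, note that for any perturbation $h$, each off-diagonal entry $X_{ij}$ and its mirror $X_{ji}$ move together by $C_{ij}h_{ij}$, so $\|X(\omega+h)-X(\omega)\|_F^2 \le 2\sum_{i\le j} C_{ij}^2 h_{ij}^2 \le 2\|h\|_2^2$ using $|C_{ij}|\le 1$. Combined with the Cauchy--Schwarz bound $\|Z\|_*\le \sqrt{n}\|Z\|_F$ for symmetric $n\times n$ matrices, this shows that $\omega\mapsto \|X(\omega)\|_*$ is $L$-Lipschitz with $L=O(\sqrt{n})$; after absorbing the $\sqrt{2}$ symmetry factor into the constant of Talagrand's inequality, we may treat $L^2=n$.

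Next, Talagrand's convex distance inequality on $[-1,1]^m$ states that for a convex $L$-Lipschitz function $f$, $\Pr\!\left[|f-\mathrm{med}(f)|\ge t\right] \le 4\exp(-t^2/(64 L^2))$. Applied to $\|X\|_*$, this yields
\[
\Pr\!\Brac{\,\big|\|X\|_* - \mathrm{med}(\|X\|_*)\big| \ge t\,} \;\le\; 4\exp\!\Paren{-\frac{t^2}{64\,n}}.
\]
Integrating this tail gives a median-to-mean estimate
\[
\big|\E\,\|X\|_* - \mathrm{med}(\|X\|_*)\big| \;\le\; \int_0^\infty 4\exp\!\Paren{-\tfrac{s^2}{64n}}\,ds \;=\; 16\sqrt{\pi n}.
\]
For $\delta > 16\sqrt{\pi}/n$ one sets $t := n^{1.5}\delta - 16\sqrt{\pi n}>0$ and invokes the triangle inequality, obtaining
\[
\Pr\!\Brac{\big|\|X\|_*-\E\|X\|_*\big|\ge n^{1.5}\delta} \le 4\exp\!\Paren{-\frac{(n^{1.5}\delta-16\sqrt{\pi n})^2}{64n}} = 4\exp\!\Paren{-\frac{n^2(\delta-16\sqrt{\pi}/n)^2}{64}},
\]
which is exactly the claimed inequality.

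The main obstacle is pinning down the exact constants: the $64$ in the denominator and the precise $16\sqrt{\pi}$ coefficient in the shift. Getting them requires carefully tracking the $\sqrt{2}$ symmetry factor against the Lipschitz bound $\sqrt{n}\|Z\|_F$ for the nuclear norm, together with the correct form of Talagrand's convex inequality on $[-1,1]^m$ rather than $[0,1]^m$. The approach of \cite{GuionnetZ00} sidesteps this bookkeeping in a clean way by replacing Talagrand's inequality with Herbst's argument applied to a log-Sobolev inequality for product measures on $[-1,1]$, combined with smooth approximations of $f(x)=|x|$ whose Lipschitz constant is $1$; passing to the limit along the approximating sequence then yields the bound with the stated constants.
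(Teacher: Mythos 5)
The paper states this result as an imported fact and gives no proof of its own, so the relevant comparison is with the cited source. Your reconstruction is correct, and it is in fact essentially the argument behind Theorem~1.1 of \citet{GuionnetZ00}: that theorem is proved by applying Talagrand's concentration inequality for convex Lipschitz functions of bounded independent variables to $\tr(f(X))$ (here $f(x)=|x|$, so $\tr(f(X))=\|X\|_*$), with the Lipschitz constant obtained from $\|Z\|_*\le \sqrt{n}\,\|Z\|_F$ and the symmetry factor, and with the $16\sqrt{\pi}/n$ shift arising from the median-to-mean integral --- exactly your steps. Your closing worry about pinning down the constants is unfounded: with the (weakened) Talagrand tail $4\exp(-t^2/(64n))$ that you adopt, your own computation already reproduces the stated constants exactly, since the mean--median gap integrates to $16\sqrt{\pi n}$ and the substitution $t=n^{1.5}\delta-16\sqrt{\pi n}$ yields precisely the displayed exponent; the sharper constant $32n$ that the standard form of Talagrand's inequality on $[-1,1]^m$ gives for a $\sqrt{2n}$-Lipschitz function only strengthens the bound, so either bookkeeping suffices. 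The one inaccuracy is your final sentence: the cited Theorem~1.1 is the Talagrand-based result, whereas the Herbst/log-Sobolev argument in \citet{GuionnetZ00} is a separate theorem (for entry distributions satisfying a log-Sobolev inequality) that concentrates around the mean directly, with no shift term, and is not the source of the $16\sqrt{\pi}$.
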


We can use \Cref{thm:concentration} to bound the expected value of the nuclear norm distance between the adjacency matrices of two \Erdos-\Renyi $\cG(n, 1/2)$ random graphs:
\begin{lemma}\label{lemma:expectation}
  For $n \geq 10$ let $X$ be the difference between the adjacency matrices of two \Erdos-\Renyi, $\cG(n, 1/2)$ random graphs, i.e., $X \in \set{-1,0,1}^{n \times n}$ has zeros on its diagonal, $X_{ji} = X_{ij}$, and
  \begin{equation}
    X_{ij} \overset{\mathrm{iid}}{\sim}
    \begin{cases}
      -1, & \text{ with probability } 1/4 \\
      0,  & \text{ with probability } 1/2 \\
      1, & \text{ with probability } 1/4
    \end{cases} \mcom \quad \text{ for } i>j \mper
  \end{equation}
  Then, $n^{1.5}/100 \leq \E \nuclear{X} \leq n^{1.5}$.
\end{lemma}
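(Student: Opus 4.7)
I would apply the standard singular-value Cauchy--Schwarz inequality $\nuclear{X}\le \sqrt{n}\,\fnorm{X}$ together with a direct moment computation. Since $X$ is zero on the diagonal and each off-diagonal entry satisfies $\Ex{X_{ij}^2}=1/2$, we get $\Ex{\fnorm{X}^2}=n(n-1)/2\le n^2/2$. Jensen's inequality on the concave square root then yields $\Ex{\nuclear{X}} \le \sqrt{n}\,\sqrt{\Ex{\fnorm{X}^2}} \le n^{3/2}/\sqrt{2} \le n^{3/2}$, which handles the upper direction with room to spare.

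\textbf{Lower bound.} The starting point is the elementary inequality $\fnorm{X}^2 = \sum_i \sigma_i(X)^2 \le \spnorm{X}\cdot \nuclear{X}$, which rearranges to $\nuclear{X}\ge \fnorm{X}^2/\spnorm{X}$. I would then combine two high-probability events: (i) $\fnorm{X}^2\ge n(n-1)/4$, obtained from Hoeffding's inequality applied to the $\binom{n}{2}$ i.i.d.\ Bernoulli$(1/2)$ summands $X_{ij}^2\in\{0,1\}$; and (ii) $\spnorm{X}\le C\sqrt{n}$ for an absolute constant $C$, which is a standard non-asymptotic Wigner-type bound applicable to the symmetric matrix with independent, mean-zero, uniformly bounded entries of variance $1/2$. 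Each event holds with probability at least $9/10$ for $n\ge 10$, so by a union bound both hold with probability at least $4/5$. On their intersection, $\nuclear{X}\ge n(n-1)/(4C\sqrt{n}) \ge n^{3/2}/(10C)$ for $n\ge 10$, and therefore $\Ex{\nuclear{X}}\ge (4/5)\cdot n^{3/2}/(10C) = n^{3/2}/(12.5\,C)$, which exceeds $n^{3/2}/100$ for a sufficiently small universal $C$.

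\textbf{Main obstacle.} The one technically delicate ingredient is the Wigner-type bound $\spnorm{X}=O(\sqrt{n})$ \emph{without} an extra $\sqrt{\log n}$ factor: a crude application of matrix Bernstein only yields $\spnorm{X}=O(\sqrt{n\log n})$, which would weaken the final lower bound to $n^{3/2}/\sqrt{\log n}$. Obtaining the clean $\sqrt{n}$ rate (and hence the explicit $1/100$ constant) requires the trace-moment method of F\"uredi--Koml\'os, or equivalently a sharp non-asymptotic Wigner-matrix operator-norm estimate such as the Bandeira--van Handel bound, together with a standard concentration inequality to turn the expectation bound into a high-probability statement. Once this operator-norm estimate is in hand, the rest of the argument is routine concentration plus the two singular-value inequalities above, and the final constants $1/100$ and $1$ follow after a straightforward check for small $n$ (e.g., $n=10,\dots$) where the asymptotic estimates are loosest.
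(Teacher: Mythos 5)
Your proposal is correct and follows essentially the same route as the paper: lower-bound $\nuclear{X}$ by $\fnorm{X}^2/\spnorm{X}$, then combine a Chernoff-type lower tail for $\fnorm{X}^2$ with a non-asymptotic Wigner operator-norm estimate (the paper cites Bandeira--van Handel, Corollary 3.9, with constant $5$), exactly as you identify in your ``main obstacle'' paragraph. The only cosmetic differences are that the paper's upper bound uses the deterministic fact $\fnorm{X}\le n$ for a $\set{-1,0,1}$ matrix rather than your Jensen argument, and that the constant $C$ in your event (ii) is supplied by the operator-norm theorem (so you should write ``since the standard bound gives $C\le 5$'' rather than ``for a sufficiently small universal $C$'').
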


\begin{proof}
From Corollary 3.9 of \citet{BandeiraH16}, we know that $\prob{\spnorm{X} > 5 \sqrt{n}} \leq \exp(-n/4)$. Additionally, $\prob{\fnorm{X}^2 \leq n^2/9} \leq \exp(-n^2/32)$ by Chernoff bound. Conditioning on these two events, we see that, for $n \geq 10$, with probability $1-\exp(-n/5)$, 
  \begin{align*}
      n^2/9 \leq \norm{X}_F^2 \leq \norm{X}_2 \nuclear{X} \leq 5
      \sqrt{n} \nuclear{X}.
  \end{align*}
  I.e., conditioned on the events, $\nuclear{X} \geq n^{1.5}/45$. It follows that $\E\nuclear{X} \geq (1 - \exp(-n/5))\cdot n^{1.5}/45 \geq n^{1.5}/100$. Moreover, we have that $\nuclear{X} \leq \sqrt{n}\fnorm{X} \leq n^{1.5}$, where the last step follows because $X$ is a $\set{-1,0,1}$ matrix.
\end{proof}

As discussed, our lower bound proof also requires a covering number bound for the set of sparse matrices with bounded entries under the nuclear norm metric. We provide such a bound below: 
\begin{lemma}\label{lem:discretization_sparse}
    For any constant $c < 1$, let $\mathcal{B}_{c} \subset \R^{n\times n}$ denote the set of matrices with at most $cn^2/\log^2 n$ non-zero entries, each of which is bounded in magnitude by $n^2$. Then there is a finite set of matrices $\cS_c \subset \mathcal{B}_{c}$ of size $|\cS_c| \leq \exp\paren{ \frac{cn^2}{\log n}\paren{4 + \frac{\log(1/c)}{\log n}} }$ such that, for all $B\in \mathcal{B}_{c}$, there is an $S \in \cS_c$ such that $\nuclear{B-S} \leq 1/\sqrt{n}$.
\end{lemma}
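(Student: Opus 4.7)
The plan is to build $\mathcal{S}_c$ as a $(1/\sqrt{n})$-net for $\mathcal{B}_c$ in the nuclear norm by separately discretizing the \emph{support} and the \emph{values} of a matrix in $\mathcal{B}_c$. Write $k := \lceil cn^2/\log^2 n\rceil$. I would first pick a quantization step $\delta := 1/(n\sqrt{k})$ and let $\Delta \subset [-n^2, n^2]$ be the uniform grid of spacing $\delta$, so $|\Delta| \leq 4n^3\sqrt{k}$. Define $\mathcal{S}_c$ to be the set of all matrices with support of size at most $k$ and every nonzero entry lying in $\Delta$; by construction $\mathcal{S}_c \subseteq \mathcal{B}_c$.

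The approximation step is then short. Given $B \in \mathcal{B}_c$, let $S \in \mathcal{S}_c$ be obtained from $B$ by rounding each nonzero entry to the nearest element of $\Delta$. Then $B-S$ is supported on the support of $B$ (so has at most $k$ nonzeros), and every entry has magnitude at most $\delta/2$. Hence $\|B-S\|_F \leq \delta\sqrt{k}/2$. Because $B-S$ is $n\times n$, its rank is at most $n$, so the standard bound $\|M\|_* \leq \sqrt{\mathrm{rank}(M)}\,\|M\|_F$ gives
\begin{align*}
\|B-S\|_* \,\leq\, \sqrt{n}\,\|B-S\|_F \,\leq\, \tfrac{1}{2}\delta\sqrt{nk} \,=\, \tfrac{1}{2\sqrt{n}} \,\leq\, \tfrac{1}{\sqrt{n}}.
\end{align*}

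For the cardinality bound I would count: the number of supports of size at most $k$ is at most $\sum_{j=0}^{k}\binom{n^2}{j} \leq (en^2/k)^k$ (valid for $k \leq n^2/2$, which holds since $c < 1$ and $n$ is large), and for each fixed support the number of value assignments is at most $|\Delta|^k \leq (4n^3\sqrt{k})^k$. Multiplying,
\begin{align*}
|\mathcal{S}_c| \,\leq\, \Bigl(\tfrac{en^2}{k}\Bigr)^{\!k}(4n^3\sqrt{k})^k \,=\, \Bigl(\tfrac{4en^5}{\sqrt{k}}\Bigr)^{\!k}.
\end{align*}
Substituting $\sqrt{k} = \sqrt{c}\,n/\log n$ and taking logarithms yields $\log|\mathcal{S}_c| \leq k\bigl(4\log n + \log\log n + \log(4e/\sqrt{c})\bigr)$, which after dividing by $\log^2 n$ and multiplying by $cn^2$ gives the claimed form $\tfrac{cn^2}{\log n}\bigl(4 + \tfrac{\log(1/c)}{\log n}\bigr)$ up to lower-order terms that can be absorbed into the leading $4$ for all sufficiently large $n$.

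The approach is essentially an $\epsilon$-net volume argument and the only non-routine piece is the Frobenius-to-nuclear conversion, which just uses that a matrix of rank at most $n$ has $\|M\|_* \leq \sqrt{n}\|M\|_F$. The main obstacle is purely bookkeeping: ensuring the constants inside the exponent come out so that the $4\log n$ contribution dominates and the remaining $\log\log n$, $\log(4e)$, and $\tfrac{1}{2}\log(1/c)$ contributions can be folded into the stated $4 + \log(1/c)/\log n$ factor for large enough $n$ (which is all that downstream applications via pigeonhole require).
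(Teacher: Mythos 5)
Your approach is the same as the paper's: discretize both the support and the entry values on a uniform grid, count supports via a binomial coefficient and values via the grid size, and convert Frobenius error to nuclear error with the rank bound $\nuclear{M} \le \sqrt{n}\,\fnorm{M}$. The one step that does not close as you claim is the final absorption. With your grid spacing $\delta = 1/(n\sqrt{k})$ and $k = \lceil cn^2/\log^2 n\rceil$, the per-entry grid has $\approx 4n^3\sqrt{k}$ points, and since $\log k = 2\log n - 2\log\log n + \log c + o(1)$, the exponent works out to $k\bigl(4\log n + \log\log n + O(\log(1/c))\bigr)$. The $\log\log n$ term is \emph{not} lower order relative to the constants $4$ and $\log(1/c)$: it grows without bound, so it cannot be ``absorbed into the leading $4$ for all sufficiently large $n$.'' What you actually get is $\exp\bigl(\tfrac{cn^2}{\log n}\bigl(4 + \tfrac{\log\log n + O(\log(1/c))}{\log n}\bigr)\bigr)$, which exceeds the stated bound for large $n$ at fixed $c$.

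For what it is worth, the paper's own proof has a complementary slip: it takes the coarser spacing $\epsilon = 2/n$, which keeps the grid small enough to match the stated cardinality, but then bounds $\fnorm{S-B}^2 \le n\cdot(\epsilon/2)^2$ where the correct count of perturbed entries is $k = cn^2/\log^2 n$, giving $\fnorm{S-B}^2 \le c/\log^2 n$ and hence $\nuclear{S-B} \le \sqrt{cn}/\log n$, not $1/\sqrt n$. So the lemma's exact constant is not recovered by either argument. This is harmless downstream: the pigeonhole in the proof of \Cref{prop:count_lb} only needs $|\cS_c| \le \exp(o(n^2))$ and nuclear approximation error $o(n^{1.5})$, both of which your finer grid delivers comfortably. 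Your write-up would be clean if you either (a) weaken the cardinality claim to $\exp\bigl(\tfrac{cn^2}{\log n}(4 + \tfrac{\log\log n + O(\log(1/c))}{\log n})\bigr)$, or (b) weaken the approximation guarantee to something like $\nuclear{B-S} \le n$ and use the coarser grid, which is all that is used later anyway.
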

\begin{proof}
    Let $\cS_c$ contain any matrix $S \in \R^{n \times n}$ such that $\nnz{S} \leq cn^2/\log^2 n$ and $\forall i,j \in [n]$, $S_{ij} \in \set{-n^2, -(n^2-\e),-(n^2-2\e), \dots,(n^2-2\e), (n^2-\e),n^2}$ for $\e = 2/n$. We have that:
  \begin{align*}
     \abs{\mathcal{S}_c} & \leq \binom{n^2}{\frac{cn^2}{ \log^2 n}}  \cdot \paren{\frac{2n}{\e}}^{\paren{\frac{cn^2}{  \log^2 n}}} 
     \leq {( e/c \cdot \log^2n )}^{\paren{\frac{cn^2}{  \log^2 n}}}  \cdot n^{\paren{\frac{2cn^2}{  \log^2 n}}} \\
     & = \exp\paren{ \frac{cn^2 \log (1/c)}{ \log^2 n} +  \frac{cn^2}{ \log^2 n} + \frac{2cn^2 \log\log n}{ \log^2 n} + \frac{2cn^2 \log n}{ \log^2 n} } 
     \leq \exp\paren{ \frac{cn^2}{\log n}\paren{4 + \frac{\log(1/c)}{\log n}} } \mper
  \end{align*}
  Moreover, since $\e = 2/n$, for any $B\in \mathcal{B}_c$, we have a matrix $S \in \mathcal{S}_{c}$ with $\norm{S-B}_F^2 \leq n \cdot (\epsilon/2)^2 \leq 1/n$. It follows that $\nuclear{S-B} \leq \sqrt{n}\norm{S-B}_F \leq 1/\sqrt{n}$.
\end{proof}

With the above lemmas in place, we are now ready to prove \Cref{prop:count_lb}, which shows that, with good probability, the unnormalized adjacency matrix of an \Erdos-\Renyi graph cannot be approximated well in the nuclear norm by sparse matrices.
\begin{theorem}\label{prop:count_lb}
  Let $A$ be the (unormalized) adjacency matrix of an \Erdos-\Renyi random graph $\cG(n,1/2)$, where $n \geq C$ for a universal constant $C$. For a universal constant $c$, with probability at least $3/4$, there is no matrix $B$ with $\nnz{B} \leq cn^2/\log^2 n$ such that
  $\nuclear{A-B} \leq n^{1.5}/500$.
\end{theorem}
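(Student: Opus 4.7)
My plan is to execute the counting/packing argument sketched in the paragraph preceding the theorem. It has three steps: (i) a discretization reduction to a finite family of sparse matrices, (ii) a per-matrix anticoncentration bound using symmetrization with an independent copy of $A$, and (iii) a union bound whose success depends on the covering size from Lemma~\ref{lem:discretization_sparse} being small relative to the concentration exponent coming from Fact~\ref{thm:concentration}.

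For step (i), suppose some $B$ with $\nnz{B} \leq cn^2/\log^2 n$ satisfies $\nuclear{A - B} \leq n^{1.5}/500$. By the triangle inequality, $\spnorm{B} \leq \spnorm{A} + \spnorm{A - B} \leq n + \nuclear{A - B} \leq n + n^{1.5}/500$, so every entry of $B$ is bounded in magnitude by $n^2$ once $n$ is above some universal constant. Hence $B \in \mathcal{B}_c$, and Lemma~\ref{lem:discretization_sparse} supplies $S \in \mathcal{S}_c$ with $\nuclear{S - B} \leq 1/\sqrt{n}$. It therefore suffices to upper bound $\Pr[\exists S \in \mathcal{S}_c : \nuclear{A - S} \leq n^{1.5}/500 + 1/\sqrt{n}] \leq \Pr[\exists S \in \mathcal{S}_c : \nuclear{A - S} \leq n^{1.5}/450]$ for $n$ large.

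For step (ii), I would use a symmetrization trick. Fix $S \in \mathcal{S}_c$ and let $A'$ be an independent $\cG(n,1/2)$ copy of $A$. If both $\nuclear{A - S}, \nuclear{A' - S} \leq t$, the triangle inequality gives $\nuclear{A - A'} \leq 2t$, so independence yields
\[ \Pr[\nuclear{A - S} \leq t]^2 \leq \Pr[\nuclear{A - A'} \leq 2t]. \]
The matrix $A - A'$ has exactly the distribution considered in Lemma~\ref{lemma:expectation}, giving $\E[\nuclear{A - A'}] \geq n^{1.5}/100$, and it fits the template of Fact~\ref{thm:concentration} with $C_{ij} = 1$ off-diagonal and $\omega_{ij} \in \{-1,0,1\} \subseteq [-1,1]$ independent across $i < j$. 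Applying Fact~\ref{thm:concentration} with an absolute $\delta$ (e.g., $\delta$ chosen so $n^{1.5}/100 - n^{1.5}\delta = n^{1.5}/225$) yields $\Pr[\nuclear{A - A'} \leq n^{1.5}/225] \leq 4\exp(-\Omega(n^2))$ once $n$ exceeds a universal constant (so that the $16\sqrt{\pi}/n$ correction is subdominant to $\delta$). Taking $t = n^{1.5}/450$ then gives $\Pr[\nuclear{A - S} \leq n^{1.5}/450] \leq 2\exp(-\Omega(n^2))$ for every fixed $S$.

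For step (iii), Lemma~\ref{lem:discretization_sparse} gives $|\mathcal{S}_c| \leq \exp(O(cn^2/\log n))$, so a union bound over $\mathcal{S}_c$ bounds the target probability by $\exp\bigl(O(cn^2/\log n) - \Omega(n^2)\bigr)$. The main obstacle and only delicate part is balancing constants: I need to pick the universal constant $c$ small enough that the covering-entropy exponent $O(cn^2/\log n)$ is dominated by the concentration exponent $\Omega(n^2)$ for all $n$ above some universal $C$. Since these compete as a constant-vs-constant ratio (and in fact the covering exponent has an extra $1/\log n$ saving), any sufficiently small universal $c$ works, drives the union bound below $1/4$, and completes the proof.
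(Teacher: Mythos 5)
Your proof is correct, and it takes a genuinely different combinatorial route from the paper while relying on the same three ingredients (Lemma~\ref{lemma:expectation}, Fact~\ref{thm:concentration}, and Lemma~\ref{lem:discretization_sparse}). The paper's proof is a \emph{packing/pigeonhole} argument: it draws $m = e^{\Omega(n^2)}$ i.i.d.\ copies of the \Erdos--\Renyi adjacency matrix, uses Fact~\ref{thm:concentration} to show they are pairwise $\Omega(n^{1.5})$-far in nuclear norm with high probability, argues via the triangle inequality and Lemma~\ref{lem:discretization_sparse} that each discretized sparse approximant can be close to at most one of them, concludes $z \leq |\cS_c| \ll m$ samples are approximable, and finishes by computing the expected fraction of approximable samples. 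Your proof is the more classical \emph{symmetrization + union bound}: fix $S \in \cS_c$, note that $A$, $A'$ both being within $t$ of $S$ forces $\nuclear{A-A'} \le 2t$, hence $\Pr[\nuclear{A-S}\leq t]^2 \leq \Pr[\nuclear{A-A'}\leq 2t]$, bound the right side via Lemma~\ref{lemma:expectation} and Fact~\ref{thm:concentration}, and union bound over $|\cS_c|$ candidates. The symmetrization step is the key move you add that the paper avoids: Fact~\ref{thm:concentration} cannot be applied to $A-S$ directly (since $S$ has entries as large as $n^2$, the matrix $A-S$ does not factor as $C_{ij}\omega_{ij}$ with $|C_{ij}|\leq 1$ and $\omega_{ij}\in[-1,1]$), but $A-A'$ fits the template exactly. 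Your route is shorter and avoids the intermediate ``$m$ iid samples'' construction and the resulting two-level bookkeeping; the paper's route avoids the symmetrization insight and hence might be seen as more elementary, at the cost of being more cumbersome. The constant bookkeeping in your step~(iii) is actually slightly easier than you indicate: because the covering exponent carries an extra $1/\log n$ factor relative to the $\Omega(n^2)$ concentration exponent, \emph{any} fixed universal $c$ works once $n$ exceeds a universal threshold, so there is no genuine constant-versus-constant tension to resolve.
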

\begin{proof}
  Let $A$ and $A'$ be drawn from $\cG(n,1/2)$. Then $A-A'$ follows the distribution of the matrix $X$ in \Cref{lemma:expectation}. Therefore, $\E \nuclear{A-A'} \geq n^{1.5}/100$. \Cref{thm:concentration} then implies that 
  \[ \prob{ \abs{ \nuclear{A-A'} - \E \nuclear{A-A'} }\geq \delta n^{1.5} } \leq 4 \exp\paren{ \frac{-n^2 (\delta-30/n)^2}{64} } \mper \]

 Choosing $\delta = 1/200$ in the above bound, yields that for $n \geq 3000$, 
  \[ \prob{  \nuclear{A_1-A_2} \leq \frac{n^{1.5}}{200} } \leq 4 \exp\paren{\frac{-n^2}{4 \cdot 10^6} }. \]

  Let $\cA \defeq \{A_1,\dots,A_m\}$ be a set of adjacency matrices of $m$ independent $\cG(n,1/2)$ graphs. 
  By union bound,
  $ \prob{ \exists i \neq j: \nuclear{A_i-A_j} \leq \frac{n^{1.5}}{200}} \leq 4 {\binom{m}{2}} \cdot \exp\paren{\frac{-n^2}{4 \cdot 10^6} }$.

 Setting $m = \exp\paren{n^2/(16 \cdot 10^6)}$, the above equation then yields that
  \begin{align}
  \label{eq:cond_event}
      \prob{\exists i \neq j: \nuclear{A_i-A_j} \leq \frac{n^{1.5}}{200}} \leq 2 \exp\paren{\frac{-n^2}{8 \cdot 10^6} } \mper
  \end{align}

  The above equation implies that, with high probability, $m = e^{\Omega(n^2)}$ random $\cG(n,1/2)$ graphs will have adjacency matrices, $A_1, \ldots, A_m$, that are all more than $n^{1.5}/200$ apart in nuclear norm from each other. Condition on this event in the remainder of the proof.

  Our goal is to show that a large fraction of $A_1, \ldots, A_m$ cannot be well-approximated by a matrix with $\leq cn^2/\log^2 n$ non-zero entries.

 Specifically, suppose for $z$ matrices $A_{j_1}, \ldots, A_{j_z}$, there are matrices $B_1, \ldots, B_z$, each with at most $cn^2/\log^2 n$ non-zero entries, such that $\|B_i - A_{j_i}\|\leq n^{1.5}/500$. Observe that, since $\|A_{j_i} - A_{j_k}\|_* > n^{1.5}/200$ for all $i,k\in [z]$, by triangle inequality, we have that $\|B_i - B_k\|_* > n^{1.5}/1000$ for all $i,k\in [z]$. Moreover, we know that every entry in each $B_i$ is bounded in magnitude by $n^{1.5}$. If it was not, then since every entry in $\cA_{j_i}$ is bounded by $1$, we would have $\nuclear{A_{j_i} -B_i} \geq \norm{A_{j_i} -B_i}_F \geq n^{1.5} -1$.
 
 Then, applying \Cref{lem:discretization_sparse}, for every $B_i$, $i\in [z]$, there is a matrix $S_i \in \mathcal{S}_c$ such that $\|S_i - B_i\|_* \leq 1/\sqrt{n} \leq n^{1.5}/2000$. By triangle inequality, since $\|B_i - B_k\|_* > n^{1.5}/1000$ for all $i,k\in [z]$, it must be that $S_i \neq S_k$ for all $i,k\in [z]$. Accordingly, we have that $z \leq |\mathcal{S}_c|$. So at most $|\mathcal{S}_c|$ of the $m$ matrices from $A_1, \ldots, A_m$ can be approximated by a matrix with $cn^2/\log^2 n$ non-zeroes, where $|\mathcal{S}_c|$ is as defined in  \Cref{lem:discretization_sparse}.

Setting $c = 1/(16 \cdot 10^6)$, we have that $m = \exp\paren{cn^2}$ and $\abs{\mathcal{S}_c} \leq \exp\paren{ \frac{7c n^2}{ \log n} }$. Combined with \eqref{eq:cond_event}, it follows that, if we select $m$ adjacency matrices at random from  $\cG(n,1/2)$, the expected fraction that can be approximated to nuclear norm error $n^{1.5}/500$ with a $cn^2/\log^2 n$ sparse matrix is upper bounded by:
\begin{align*}
    2 \exp\paren{\frac{-n^2}{8 \cdot 10^6}} + \left(1 - 2 \exp\paren{\frac{-n^2}{8 \cdot 10^6}}\right)\cdot \frac{\exp\paren{ \frac{7c n^2}{ \log n} }}{\exp\paren{cn^2}}. 
\end{align*}
For sufficiently large $n$, this is upper bounded by $1/4$.
\end{proof}

\subsection{Proof of Theorem~\ref{lem:norm_adj-informal}}\label{subsec:tiling-argument}

\Cref{subsec:counting-argument} proves the \Cref{lem:norm_adj-informal} for the special case of $\e = \Theta(1/\sqrt{n})$, but for unnormalized adjacency matrices. In this section, we extend the \Cref{prop:count_lb} for larger values of $\e$ and for normalized adjacency matrices to prove the \Cref{lem:norm_adj-informal}. Our lower bound construction follows by considering a block-diagonal matrix of size $n \times n$, with $\floor{n/b} \geq 1$ many $b$-by-$b$ blocks and each block satisfying the properties in~\Cref{prop:count_lb}.

In the following lemma, we show that the nuclear norm of a block diagonal matrix is at least the sum of the nuclear norms of the blocks of the matrix.

\begin{lemma}[Block-wise Nuclear Norm Bound]
\label{lem:diag_nuclear_bound}
For any symmetric $A \in \R^{n \times n}$ and $S_1,\ldots,S_k \subseteq [n]$ that partition $[n]$, i.e. $\cup_{i \in [k]} S_i = [n]$ and $S_i \cap S_j = \emptyset$ for all $i \neq j$, it is the case that $\nuclear{A} \geq \sum_{i \in {k}} \nuclear{A_{S_i,S_i}}$, where $A_{S_i,S_i}$ denotes the principal submatrix of $A$ indexed by $S_i$.
\end{lemma}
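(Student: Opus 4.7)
The plan is to prove the lemma via the duality characterization of the nuclear norm given in \Cref{fact:norm-ineqs}, which states that for symmetric $M$, $\nuclear{M} = \max_{\spnorm{Y} \leq 1} \inprod{M, Y}$ over symmetric $Y$. The key observation is that a block-diagonal witness for the right-hand side of the claimed inequality is simultaneously a valid dual witness for $\nuclear{A}$.

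Concretely, for each $i \in [k]$, I would invoke \Cref{fact:norm-ineqs} on the principal submatrix $A_{S_i, S_i}$ to obtain a symmetric matrix $Y_i \in \R^{S_i \times S_i}$ with $\spnorm{Y_i} \leq 1$ and $\inprod{A_{S_i, S_i}, Y_i} = \nuclear{A_{S_i, S_i}}$. (For concreteness one may take $Y_i = \sum_j \mathrm{sign}(\lambda_j) v_j v_j^\top$ from the eigendecomposition of $A_{S_i, S_i}$.) Then I would assemble a single symmetric matrix $Y \in \R^{n \times n}$ by placing each $Y_i$ in the principal submatrix indexed by $S_i$ and setting all other entries of $Y$ to zero. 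Because the $S_i$ partition $[n]$, $Y$ is (up to a permutation of indices) block diagonal with blocks $Y_1, \ldots, Y_k$, and hence its eigenvalues are the union of the eigenvalues of the $Y_i$. Therefore $\spnorm{Y} = \max_i \spnorm{Y_i} \leq 1$.

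To finish, I would compute $\inprod{A, Y}$. Since $Y$ is zero outside the blocks indexed by the $S_i$, only entries $A_{u,v}$ with $u, v \in S_i$ for some common $i$ contribute, giving
\[
\inprod{A, Y} \;=\; \sum_{i \in [k]} \inprod{A_{S_i, S_i}, Y_i} \;=\; \sum_{i \in [k]} \nuclear{A_{S_i, S_i}}.
\]
Applying \Cref{fact:norm-ineqs} to $A$ with this particular $Y$ yields $\nuclear{A} \geq \inprod{A, Y} = \sum_{i \in [k]} \nuclear{A_{S_i, S_i}}$, which is the desired conclusion. There is no real obstacle here: the only thing to check carefully is that the assembled $Y$ is symmetric and has spectral norm at most $1$, both of which are immediate from the block structure and the symmetry of each $Y_i$.
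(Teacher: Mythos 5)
Your proof is correct and follows essentially the same route as the paper's: invoke the dual characterization of the nuclear norm, assemble a block-diagonal dual witness from optimal witnesses for each principal submatrix, and conclude via duality. The only cosmetic difference is that you bound $\spnorm{Y}$ by observing that the spectrum of a block-diagonal matrix is the union of the blocks' spectra, whereas the paper checks the quadratic form $|x^\top \hat{X} x| \leq \|x\|_2^2$ directly; both are immediate and equivalent.
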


\begin{proof}
By \Cref{fact:norm-ineqs} we know that there are symmetric $X_i^* \in \bbS^{\abs{S_i} \times \abs{S_i}}$ for all $i \in [k]$ such that $\norm{X_i}_2 \leq 1$ and $\inprod{A_{S_i,S_i},X_i^*} = \nuclear{A_{S_i,S_i}}$ for all $i \in [k]$. Let $\hat{X} \in \bbS^{n \times n}$ be the block-diagonal matrix defined as $\hat{X}_{S_i,S_i} = X_i$ for all $i \in [k]$ and all other entries of $Y$ set to $0$. Note that for all $x \in \R^{n}$,
\[
|x^\top \hat{X} x|
= \left|\sum_{i \in [k]} x^\top_{S_i} X_i x_{S_i} \right|
\leq \sum_{i \in [k]} \norm{x_{S_i}}_2^2 = \norm{x}_2^2\,
\]
where we used that $|x^\top_{S_i} X_i x_{S_i}| \leq \norm{x_{S_i}}_2^2$ since $\norm{X_i}_2 \leq 1$. Consequently, $\norm{\hat{X}}_2 \leq 1$ and the result follows from \Cref{fact:norm-ineqs} as 
\[
 \nuclear{A} = \max_{ X \in \bbS^{n \times n}:\spnorm{X}\le 1}\inprod{A,X}
 \leq \inprod{A,\hat{X}}
 = \sum_{i \in [k]} \inprod{A_{S_i,S_i},X_i^*} = \sum_{i \in [k]} \nuclear{A_{S_i,S_i}}\,. \qedhere
\]
\end{proof}

Next, we show that if the blocks of a block diagonal matrix cannot be well approximated by a sparse matrix, then the block diagonal matrix also cannot be well approximated by a sparse matrix.

\begin{corollary}[Corollary of \Cref{lem:diag_nuclear_bound}]\label{prop:tiling_sparsity_count}
  Let $n,k,b \in \N$ be such that $k = \floor{n/b} \geq 1$. Let $E \in \R^{n \times n}$ be a block diagonal matrix $E = \mathrm{diag}(E_1,E_2,\dots, E_k,R)$, where $E_1,\dots,E_k \in \R^{b \times b}$ and $R$ is an arbitrary matrix of size $(n-kb)$. For any pair of constants $\tau,\eta>0$, let $E_i$, for each $i \in [k]$, be such that for any sparse matrix $B'$ with $\nnz{B'} \leq \eta$, $\nuclear{E_i - B'} > \tau$. 
  Then, for any matrix $B$ with $\nnz{B} \leq (k/2) \cdot \eta $, we have that $\nuclear{E-B}   > n \tau/(4b)$.
\end{corollary}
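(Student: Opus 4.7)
The plan is to apply \Cref{lem:diag_nuclear_bound} to $E - B$ with the partition of $[n]$ induced by the block structure, and then to argue via pigeonhole that most blocks of $B$ (after restricting to the block coordinates) have at most $\eta$ nonzero entries, so that the per-block lower bound hypothesis kicks in.

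First I would set $S_i \subseteq [n]$ to be the index set corresponding to the $i$-th diagonal block (for $i \in [k]$), together with an extra set $S_{k+1}$ for the trailing block $R$, so that $\{S_1,\ldots,S_{k+1}\}$ partitions $[n]$. Applying \Cref{lem:diag_nuclear_bound} to the symmetric matrix $E - B$ yields
\[
\nuclear{E - B} \;\geq\; \sum_{i=1}^{k} \nuclear{(E-B)_{S_i,S_i}} \;=\; \sum_{i=1}^{k} \nuclear{E_i - B_{S_i,S_i}},
\]
where I have discarded the nonnegative $(k{+}1)$-th term and used that $E$ is block diagonal, so $(E-B)_{S_i,S_i} = E_i - B_{S_i,S_i}$ for $i \in [k]$.

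Next, I would use a pigeonhole / counting argument on the nonzero budget. The sets $S_1,\ldots,S_k$ are disjoint, so the principal submatrices $B_{S_i,S_i}$ occupy disjoint regions of $B$, giving $\sum_{i=1}^k \nnz{B_{S_i,S_i}} \leq \nnz{B} \leq (k/2)\eta$. Consequently, the set
$I \;\defeq\; \bigl\{\,i \in [k] \;:\; \nnz{B_{S_i,S_i}} \leq \eta\,\bigr\}$
satisfies $|I| \geq k/2$, since otherwise more than $k/2$ indices would each contribute more than $\eta$ nonzeros, exceeding the budget. For every $i \in I$, the hypothesis on $E_i$ (applied with the choice $B' = B_{S_i,S_i}$) gives $\nuclear{E_i - B_{S_i,S_i}} > \tau$.

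Combining these two steps with the bound $k = \lfloor n/b \rfloor \geq n/(2b)$ (valid whenever $n \geq b$, which is implied by $k \geq 1$), I get
\[
\nuclear{E - B} \;\geq\; \sum_{i \in I} \nuclear{E_i - B_{S_i,S_i}} \;>\; |I|\cdot \tau \;\geq\; \frac{k}{2}\,\tau \;\geq\; \frac{n\tau}{4b},
\]
which is the desired conclusion. The only mildly subtle point is the pigeonhole step that converts the global sparsity budget on $B$ into a per-block budget compatible with the hypothesis; everything else is bookkeeping and a direct invocation of \Cref{lem:diag_nuclear_bound}, so I do not anticipate any real obstacle.
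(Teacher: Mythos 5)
Your proof is correct and follows exactly the same strategy as the paper's: apply \Cref{lem:diag_nuclear_bound} to the block partition, then use a pigeonhole argument on the nonzero budget to show at least $k/2$ blocks $B_{S_i,S_i}$ have $\nnz{B_{S_i,S_i}} \le \eta$, so each contributes more than $\tau$ to the nuclear norm. Your bookkeeping for $\lfloor n/b\rfloor \ge n/(2b)$ is in fact slightly more careful than the paper's, which states $|\mathcal{I}| \ge \lfloor k/2\rfloor$ and then passes to $(k/2)\tau$ without comment.
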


\begin{proof}
  Let $B$ be any $n \times n$ matrix with $\nnz{B} \leq (k/2) \cdot \eta$. For $i \in [k]$, define $B_i$ as the submatrix of $B$ with entries corresponding to $E_i$. 
  Let $\calI \defeq \{i\in[k]:\nnz{B_i}\le \eta\}$ be the index set of \emph{sparse} blocks $B_i$. We have that $|\calI|\ge \lfloor k/2\rfloor$. This is because if $|\calI| < \lfloor k/2\rfloor$, then  $\nnz{B} \geq  \sum_{i\in[k]\setminus\calI}\nnz{B_i}> (k/2) \cdot \eta$, leading to a contradiction. Therefore, for each $i \in \cI$, 
  $\nuclear{E_i-B_i} \geq \tau$. Therefore, by \Cref{lem:diag_nuclear_bound}, we have that $\nuclear{E-B} \geq (k/2) \tau \geq  n \tau /(4b) $.
\end{proof}

Using \Cref{prop:tiling_sparsity_count}, we get the following lower bound on the sparsifiability of normalized adjacency matrices in the nuclear norm by setting $b = \bigo{1/\e^2}$ and noting that (from \Cref{prop:count_lb}) \Erdos-\Renyi random graphs satisfy the properties of the matrices $E_1,\dots E_k$ in \Cref{prop:tiling_sparsity_count}. 

\sparsitynuclear*

\begin{proof}
  Let $G_1,\dots,G_k$ denote the \Erdos \Renyi random graphs of size $b \geq C$ from \Cref{prop:count_lb}. Let $A_1,\dots,A_k$ denote the adjacency matrix of $G_1,\dots,G_k$ respectively. 
  We define the graph $G$ to be a graph of size $n$ which is a disjoint union of $G_1,\dots,G_k$, where $k = \floor{n/b} \geq 1$, and an arbitrary graph with adjacency matrix $R$ on $(n-kb)$ vertices, such that the degree of each of the $(n-kb)$ nodes is at least $1$.
  Let $A_G$ denote the adjacency matrix of $G$. We get that $A_G = \mathrm{diag}(A_1,A_2,\dots, A_k,R)$.
  Note that the matrix $A_G$ satisfies the conditions of the matrix $E$ from \Cref{prop:tiling_sparsity_count}, with $E_i=A_i$, $i \in [k]$, $\eta = c b^2/\log^2 b$, and $\tau =  b^{1.5}/500$. This is true because \Cref{prop:count_lb} guarantees that, for each $i \in [k]$, any matrix $B'$ with $\nnz{B'} \leq c b^2/\log^2 b$, $\nuclear{A_i-B'} > c'' b^{1.5}$, where $c'' = 1/500$.
   
  From \Cref{prop:tiling_sparsity_count} and the dual norm characterization of the nuclear norm (\Cref{fact:norm-ineqs}), we get that for any matrix $B$ with $\nnz{B} \leq \frac{n}{2b} (c b^2/\log^2 b)$, there exists a matrix $\tX$ with $\spnorm{\tX} \leq 1$ such that 
    \[ \nuclear{A_G-B} = \inprod{A_G-B, \tX} > \frac{c'' n \sqrt{b} }{4} \mper \]
  Since the graphs $G_1,\dots,G_k$ are \Erdos-\Renyi random graphs of size $b$, the degree matrix $D_G$ of graph $G$ satisfies $D_G  \preccurlyeq \mathrm{diag}( {b}\cdot \mathbf{I}, \cdots, b \cdot \mathbf{I},1,\cdots, 1)$ where each $\mathbf{I}$ is a $b$-by-$b$ identity matrix, we have that 
  \begin{align*}
    \nuclear{D_G^{-1/2}(A_G-B)D_G^{-1/2}}\geq \inprod{D_G^{-1/2}(A_G-B)D_G^{-1/2}, \tX  }  = \inprod{(A_G-B), D_G^{-1/2} \tX D_G^{-1/2} } 
    > \frac{c'' n\sqrt{b}}{4b} \mcom
  \end{align*}
  where in the last inequality we used the definition of the inner product and the fact that all the degrees are less than $b$.
    
  Set  $b = (c'')^2/(4\e^2)$. For a sufficiently small constant $\e_0$, $b \geq C$ for all $\e \leq \e_0$. We also need to ensure that $n/b \geq 1$, which implies that $n \geq (c'')^2/(4\e^2)$. From the equation above and the value of $b$, we get that for all $\e \leq \e_0$, 
  \[ \nuclear{D_G^{-1/2}(A_G-B)D_G^{-1/2}} = \nuclear{N_G-D_G^{-1/2}BD_G^{-1/2}}   > n\e \mper \] 
  Note that $\nnz{B} = \nnz{D_G^{-1/2}B D_G^{-1/2}}$, since $D_G$ is a diagonal matrix. 
  Since $B$ was an arbitrary matrix with just sparsity constraint, we can represent any arbitrary matrix $M$ with the same sparsity as $B$, by $M = D_G^{-1/2}B D_G^{-1/2}$. Hence, we get that for any matrix $M$ with
  \[ \nnz{M} \leq \paren{\frac{c' (c'')^2}{32 \log^2\paren{\frac{c''/2}{\e}}}} \cdot \frac{n}{\e^2} , \qquad \nuclear{N_G - M} > n\e \mper \qedhere  \]
\end{proof}

\section{Nuclear Sparsification Query Complexity Lower Bound}\label{sec:lower-query}

In this section, we prove \Cref{lem:norm_query-informal}, which asserts that our $O(n/\epsilon^2)$ query algorithm from \Cref{sec:upper} also achieves optimal query complexity for nuclear approximation, up to polylogarithmic factors in $n$. 
Interestingly, our lower bound applies in an {even more general } query model which supports edge queries in addition to neighbor queries. We formalize this model below:

\begin{definition}[Generalized adjacency query model]\label{def:query_model_gen} We say we have \emph{generalized adjacency query access} to a weighted graph $G=(V,E,w)$ if, in addition to the access under the adjacency query model (Definition~\ref{def:query_model}), there is an $O(1)$ time procedure, $\mathsf{GetEdge}(u, v)$ that when queried with any $u,v \in V$ returns $\mathsf{True}$ if $\{u, v\} \in E$ and $\mathsf{False}$ otherwise.
\end{definition}

Our main result in this section is that any deterministic algorithm that makes $o(n\epsilon^2/\log^2\e^{-1})$ queries to the generalized adjacency query model (Definition~\ref{def:query_model_gen}) is unable to distinguish between a random \Erdos-\Renyi graph and the graph which is the complement restricted to the edges which were not queried. \Cref{lem:norm_adj-informal} then follows by an application of Yao's principle. 

In the remainder of the section, we let $Q_{\Alg}(G)$ denote the sequence of queries a deterministic algorithm $\Alg$ makes to the generalized adjacency query model on input $G$. We let $E_{\Alg}(G)$ to denote the set of edges revealed by these queries. That is, for each query of the form $\mathsf{GetEdge}(u, v) \in Q_{\Alg}(G)$, we include $\{u, v\} \in E_{\Alg}(G)$; meanwhile, for each query of the form  $\mathsf{GetNeighbor}(a,i)\in Q_{\Alg}(G)$, we include the edge $\{a,b\}$ returned by the generalized adjacency query model (if any) in $E_{\Alg}(G)$.
Our formal result is as follows:
\begin{lemma}\label{lemma:query_lb_distribution} There exists an $\e_0 \in (0, 1)$ and an absolute constant $C'$ such that for any $\e \in (0, \e_0)$, there exists a distribution $q$ over graphs on $n \geq C'\epsilon^{-2}$ nodes such that the following holds: let $\Alg$ be any deterministic algorithm that on any $n$-node unweighted graph $G = (V, E)$ makes at most $|Q_{\Alg}(G)| \leq  Cn\e^{-2}/\log^2 \e^{-1}$ queries to the generalized adjacency query model (Definition~\ref{def:query_model_gen}) for an absolute constant $C$ and outputs some $X_{\Alg}(G) \in \R^{n \times n}$. Then
\begin{align*}
    \ProbOp_{G\sim q}\{\norm{X_{\Alg}(G) - N_G}_{*} > n\e\} > 1/4.
\end{align*}
\end{lemma}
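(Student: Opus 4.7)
The plan is to apply Yao's minimax principle: it suffices to exhibit a distribution $q$ over input graphs on which any deterministic algorithm with the claimed query budget fails to produce an $n\e$-accurate nuclear-norm approximation of $N_G$ with probability greater than $1/4$. I will take $q$ to be the tiled distribution used in the proof of \Cref{lem:norm_adj-informal}: a disjoint union of $k = \lfloor n/b\rfloor$ independent \Erdos-\Renyi $\mathcal{G}(b,1/2)$ graphs on a fixed partition of $[n]$ into blocks of size $b = c_1/\e^2$, for a sufficiently small constant $c_1 > 0$. (A minor regularization, e.g., placing a fixed spanning tree inside each block, ensures $D_G$ is invertible and $N_G$ is well defined, and changes the nuclear norm by only $o(n\e)$.)

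The core argument is a coupling-based reduction. For a deterministic $\Alg$, let $V_\Alg(G)$ denote the full transcript of (query, response) pairs on input $G$, so that the output $X_\Alg(G)$ is a deterministic function of $V_\Alg(G)$. Suppose, for contradiction, that $\Pr_{G\sim q}[\nuclear{X_\Alg(G) - N_G} \le n\e] \ge 3/4$. Sample the transcript $V$ from its marginal under $q$ and $\Alg$, and then sample $G, \tilde G$ independently from $q \mid V_\Alg(\cdot)=V$. By construction both realizations produce the transcript $V$, so $X_\Alg(G) = X_\Alg(\tilde G)$, and exchangeability in $(G,\tilde G)$ gives $\Pr[\text{success on } G] = \Pr[\text{success on }\tilde G] \ge 3/4$. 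A union-bound on complements yields $\Pr[\text{both succeed}] \ge 1/2$, so by the triangle inequality $\Pr[\nuclear{N_G - N_{\tilde G}} \le 2n\e] \ge 1/2$.

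The remaining task is to show the opposite, namely that $\nuclear{N_G - N_{\tilde G}} > 2n\e$ with probability $> 1/2$ under this coupling. Because each query touches only one block, the block-wise independence of $q$ is preserved by $q\mid V$, so the posterior factorizes into $k$ independent per-block posteriors. Each $\mathsf{GetNeighbor}$ response is an element of an alphabet of size at most $b+1$ (labels within the block, plus $\emptyset$), so the log-information revealed about block $i$ is at most $q_i\log(b+1)$; by Markov's inequality, at least $k/2$ blocks receive $q_i \le 2Q/k$ queries and thus retain posterior log-support at least $\binom{b}{2} - 2Q\log(b+1)/k = \binom{b}{2}(1-o(1))$ for the chosen query budget and small enough $c_1$. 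On such a ``light'' block, $G_i$ and $\tilde G_i$ are two independent draws from a distribution that agrees with $\mathcal{G}(b,1/2)$ up to a $2^{-\Omega(b^2)}$-fraction correction, so applying \Cref{lemma:expectation} ($\E\nuclear{A_i - A_i'}\ge b^{1.5}/100$) together with the concentration bound of \Cref{thm:concentration} gives $\nuclear{A_i - \tilde A_i} = \Omega(b^{1.5})$ with probability $1 - e^{-\Omega(b^2)}$. Standard Chernoff bounds show all degrees lie in $[b/2 - O(\sqrt{b\log b}), b/2 + O(\sqrt{b\log b})]$, so normalization contracts nuclear norms by a factor $\Theta(1/b)$, giving $\nuclear{N_{G_i} - N_{\tilde G_i}} = \Omega(\sqrt{b})$ with the same probability. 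Summing over the $\Omega(k)$ light blocks via \Cref{lem:diag_nuclear_bound} produces $\nuclear{N_G - N_{\tilde G}} = \Omega(k\sqrt{b}) = \Omega(n/\sqrt{b}) = \Omega(n\e)$, and shrinking $c_1$ makes the implicit constant exceed $2$, contradicting the coupling bound.

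The chief technical obstacle will be handling $\mathsf{GetNeighbor}$ queries inside the coupling, since a single $\mathsf{GetNeighbor}(a,i)$ response determines the edge status of every vertex preceding the returned neighbor in the fixed ordering, not just the returned neighbor itself. The coupling must therefore force $G$ and $\tilde G$ to agree on this entire determined set, which is potentially much larger than the number of queries. I handle this not pair-by-pair but through the posterior-support accounting above, charging each query at most $\log(b+1)$ bits of log-support reduction; this accounting is exactly what introduces the $\log^{-2}(\e^{-1})$ factor in the allowed query budget, matching \Cref{lem:norm_adj-informal}, and suggests that removing these logarithmic factors would require a strictly finer analysis of how $\mathsf{GetNeighbor}$ concentrates information.
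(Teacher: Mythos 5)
Your proposal takes a genuinely different route from the paper's proof, but the route has a serious gap in its central step.

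The paper's proof does not use a coupling or posterior-entropy argument at all. It constructs a special distribution over graphs that are \emph{$2(b-1)$-regular by design}: the vertices come in $k$ pairs of $b$-blocks, and for each index pair $(i,j)$ a fair coin decides whether to put a ``parallel'' or ``crossed'' pair of edges. Because every graph in the support has identical degrees, a $\mathsf{GetNeighbor}$ response reveals no degree information, and the authors can exhibit, for \emph{any} realization $G$ and any deterministic $\Alg$, an explicit second graph $\bar G$ (the complement on unqueried edges) that is \emph{exactly} indistinguishable from $G$ to $\Alg$, lies in the support of $q$, and is far from $G$ in nuclear norm. The argument is then a direct triangle inequality between $N_G$ and $N_{\bar G}$, with no posterior analysis. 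Your coupling setup, by contrast, starts from the tiled ER distribution of \Cref{lem:norm_adj-informal} and tries to argue two draws from the posterior $q\mid V$ are far apart.

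The gap is in the step where you claim that on a ``light'' block the posterior ``agrees with $\mathcal{G}(b,1/2)$ up to a $2^{-\Omega(b^2)}$-fraction correction,'' and then invoke \Cref{lemma:expectation} and \Cref{thm:concentration}. High posterior log-support does not imply small total-variation distance to the uniform ER measure; a single $\mathsf{GetEdge}$ query already puts the posterior at TV distance $1/2$ from $\mathcal{G}(b,1/2)$, and more importantly, conditioning destroys the entrywise independence that both \Cref{lemma:expectation} (which assumes i.i.d.\ $\{-1,0,1\}$ entries) and \Cref{thm:concentration} (which is a concentration inequality for matrices with \emph{independent} entries) require. The problem is compounded by exactly the obstacle you name: $\mathsf{GetNeighbor}$ returns $\deg(a)$, so any two graphs producing the same transcript must agree on the degree of every queried vertex, and since the tiled ER distribution does not have a fixed degree sequence (unlike the paper's construction), the posterior is supported on graphs with a common, transcript-dependent degree pattern --- a strongly non-product constraint. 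The entropy accounting bounds the \emph{size} of the posterior support, but it says nothing about whether two draws from that support are typically $\Omega(b^{1.5})$ apart; a priori the support could be a large set with small nuclear diameter, and ruling this out is precisely the hard part. Your remark that the accounting ``is exactly what introduces the $\log^{-2}(\e^{-1})$ factor'' is a reasonable heuristic, but it does not substitute for a proof of the separation. To repair this you would either need a genuinely different argument that bounds the nuclear diameter of any large subset of ER graphs consistent with a short transcript (including degree constraints), or, more simply, switch to the paper's regular construction, which makes the degree information vacuous and replaces the entire posterior analysis with the deterministic complement trick.
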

\begin{proof}
We describe a distribution $q$ over graphs on $n = 2kb \geq n_0$ vertices, where $b = \floor{C' \epsilon^{-2}}$ for a sufficiently small constant $C'$ and $k = \ceil{n_0/2b}$. 

\paragraph{The hard distribution:}  
Specifically, the vertex set of our graph will be split into $2k$ sets of size $b$, which we denote by $V^{1, 1}, V^{1, 2}, V^{2, 1}, V^{2, 2}, \ldots, V^{k, 1}, V^{k, 2}$. 
For $r \in [k]$, let $v^{r,1}_{j}$ and $v^{r,2}_{j}$ denote the $j$-th vertex in $V^{r, 1}$ and $V^{r, 2}$, respectively. 
For each $i \neq j \in [b]$ and $r \in [k]$, we draw $X_{i,j, r} \sim \text{Ber}(1/2)$. If $X_{i,j,r} = 1$, we add the edges $\{v_i^{r,1}, v_j^{r,1}\}, \{v_i^{r,2}, v_j^{r,2}\}$ to $E$. If $X_{i,j, r} = 0$, we add the edges $\{v_i^{r,1}, v_j^{r,2}\}, \{v_i^{r,2}, v_j^{r,1}\}$ to $E$. 
Note that any graph drawn from this distribution is degree $d \defeq 2(b-1)$-regular. 

\paragraph{Indistinguishable graph pairs:}  
For any realization $G = (V,E) \sim q$, let $G' = (V, E')$ denote the complement of $G$. 
That is, for any nodes $u,v$, $(u,v)\in E'$ if and only if $(u,v)\notin E$. Observe that $G'$ is itself distributed as $q$. 
Additionally, define $\bar{G}= (V, \Ebar)$ as follows: Initialize $\Ebar = E'$. Then, for all $i,j,r$, if either $\{v_i^{r,1}, v_j^{r,2}\}$ or $\{v_i^{r,2}, v_j^{r,1}\}$ are in $E_{\Alg}(G)$, then delete both $\{v_i^{r,1}, v_j^{r,1}\}$ $\{v_i^{r,2}, v_j^{r,2}\}$ from $E'$ and add $\{v_i^{r,1}, v_j^{r,2}\}$ and $\{v_i^{r,2}, v_j^{r,1}\}$; likewise if either $\{v_i^{r,1}, v_j^{r,1}\}$ or $\{v_i^{r,2}, v_j^{r,2}\}$ are in $E_{\Alg}(G)$, then delete both $\{v_i^{r,1}, v_j^{r,2}\}$ and $\{v_i^{r,2}, v_j^{r,1}\}$ from $E'$ and add $\{v_i^{r,1}, v_j^{r,1}\}$ and $\{v_i^{r,2}, v_j^{r,2}\}$.
In other words, $\bar{G}$ is the complement of $G$ on all edges \emph{except} those revealed by the queries in $Q_{\Alg}(G)$. Our proof of \Cref{lemma:query_lb_distribution} hinges on two important observations:
\begin{enumerate}
    \item If $G\sim q$, then $\bar{G}$ is also distributed according to $q$.
    \item For any $G$, $Q_{\Alg}(G) = Q_{\Alg}(\bar{G})$, and consequently $X_{\Alg}(G) = X_{\Alg}(\bar{G})$. 
\end{enumerate}
The last claim follows from the fact that, not only are $G$ and $\bar{G}$ identical on any edge accessed by a $\mathsf{GetEdge}$ or $\mathsf{GetNeighbor}$ query, but since both graphs are regular with the same degree, the degree information returned by any $\mathsf{GetNeighbor}$ also matches.

With the above claims in place, it will suffice to show that, for any deterministic algorithm that makes a small number of queries, $\bar{G}$ and ${G}$ are \emph{far} in the nuclear norm. Accordingly, $\Alg$ cannot well approximate both graphs simultaneously. Concretely, we will show that there exists a set $\cH \subset \supp(q)$ such that $\probSub{G \in \cH}{G \sim q} \geq 17/32$ and, for every $G \in \cH$, 
\begin{align}\label{eq:indistinguishable}
    \max \left\{\norm{N_{\bar{G}} - X_{\Alg}(\bar{G})}_*, \norm{N_{G} - X_{\Alg}(G)}_*\right\} \geq \frac{1}{2} \norm{N_{\bar{G}} - N_G}_* \geq n\e.
\end{align}

In the remainder of the proof, we let $G[S]$ denote the vertex-induced subgraph of $G$ corresponding to $S\subset V$. 
Observe that $A_{G[V^{r, 1}]} \sim \mathcal{G}(b, 1/2)$ for each $r \in [k]$. For any pair of constants $\tau, \eta$, let $\cT_{\tau, \eta}$ denote the set of all graphs $H$ on $n_H$ nodes such that 
\begin{align*}
    \norm{A_{H} - B'}_* &\geq \tau {n_H}^{1.5} \text{ for all $B'$ with $\nnz {B'} \leq \eta {n_H}^2/\log^2(n_H)$}.
\end{align*}
\Cref{prop:count_lb} ensures that if $R$ is an \Erdos-\Renyi $\mathcal{G}(n, 1/2)$ graph, then $R \in \cT_{c, c'}$ for fixed constants $c, c'$ with probability at least $3/4$. 

Let $\mathcal{H} \defeq \left\{G \in \supp(q) : A_G[V^{r, 1}] \in \cT_{c, c'} \text{ for more than $9k/16$ values of } r \in [k]\right\}$. By the independence of the $k$ components in $G$ and a Chernoff bound, we have that, for sufficiently large $k$ (guaranteed by having $n = \Omega(1/\epsilon^2)$), 
$\probSub{G \in \mathcal{H}}{G \sim q} \geq 17/32$. 

Now, suppose $G \in \mathcal{H}$ and, without loss of generality, assume that $G[V^{r, 1}] \in \cT_{c, c'}$ for all $r \in [\ceil{9/16 \cdot k}]$. The $r$-th block of $N_G - N_{\bar{G}}$ has the following form for $r \in [\ceil{9/16 \cdot k}]$: 
\begin{align}\label{eq:backref}
    N_{G[V^{r, 1} \cup V^{r, 2}]} - N_{\bar{G}[V^{r, 1} \cup V^{r, 2}]}= \frac{1}{2(b-1)} \paren{\paren{A_{G[V^{r, 1} \cup V^{r, 2}]} - A_{G'[V^{r, 1} \cup V^{r, 2}]}} - {P_r}}, 
\end{align}
where $P_r \defeq (A_{\bar{G}[V^{r, 1} \cup V^{r, 2}]} - A_{G'[V^{r, 1} \cup V^{r, 2}]})$. $P_r$ is the edge perturbation matrix in the $r$-th block corresponding to the edges revealed by the queries in $Q_{\Alg}(G)$. 

Applying \Cref{lem:diag_nuclear_bound} guarantees that for $r \in [\ceil{9/16 \cdot k}]$, $G[V^{r, 1} \cup V^{r, 2}] \in \cT_{c, c'/4}$ (using the fact that $\abs{V^{r, 1} \cup V^{r, 2}}^2 = 4\abs{V^{r, 1}}^2$.) We claim that, for all $B'$ with  this further implies that, for all $B'$ with $\nnz {B'} \leq (c'/4) (2b)^2/\log^2(2b)$,
\begin{align}
\label{eq:er_difference}
    \left\|\left(A_{G[V^{r, 1} \cup V^{r, 2}]} - A_{G'[V^{r, 1} \cup V^{r, 2}]}\right)\right\|_* \geq cb^{1.5}. 
\end{align}
To see that this is the case, observe that $A_{G[V^{r, 1} \cup V^{r, 2}]} - A_{G'[V^{r, 1} \cup V^{r, 2}]} = 2A_{G[V^{r, 1} \cup V^{r, 2}]} - Z$, where $Z = 1_{2b\times 2b} - I$, where $1_{2b\times 2b}$  denotes a $2b\times 2b$ all-ones matrix and $I$ an identity matrix. 
So by triangle inequality,
\begin{align*}
    \norm{(2A_G - Z) - B'}_* \geq \norm{2A_G - B'}_* - \norm{Z}_* \geq 2cb^{1.5} - 2(b-1),
\end{align*}
which is greater than $cb^{1.5}$ for sufficiently large $b$.

Now, let $P$ be the block-diagonal matrix containing $P_r$ in the $r$-th block. Now, we claim that because $Q_{\Alg}(G)$ contains at most $C n\epsilon^{-2} / \log^2(\epsilon^{-1})$ queries for sufficiently small $C$, 
\begin{align*}
    \nnz P \leq 4 \cdot \abs{Q_{\Alg}(G)} \leq C n\epsilon^{-2} / \log^2(\epsilon^{-1}) \leq c'' kb^2/\log^2(b). 
\end{align*}
for any fixed constant $c''$. This is because each query in $Q_{\Alg}(G)$ corresponds to at most one edge in $E_{\Alg}(G)$ and therefore corresponds to at most two edge changes (one edge deleted from and one edge added) to $\bar{G}$ relative to $G'$ and correspondingly, corresponds to four entries modified in the adjacency matrix of $\bar{G}$ relative to $G'.$ Consequently, by \Cref{prop:tiling_sparsity_count} and \eqref{eq:er_difference}, we have that:
\begin{align*}
    \norm{A_{G\left[\bigcup_{r \in [\ceil{9k/16}]} V^{r, 1} \cup V^{r, 2}\right]} - A_{G'\left[\bigcup_{r \in [\ceil{9k/16}]} V^{r, 1} \cup V^{r, 2}\right]} - P}_* \geq \frac{9}{16} k \cdot b^{1.5} \geq C' n \sqrt{b}.
\end{align*}
Finally, by \eqref{eq:backref} and \Cref{lem:diag_nuclear_bound}, we that for sufficiently small $C'$,
\begin{align*}
    \norm{N_{G} - N_{\bar{G}}}_* \geq \frac{C' n \sqrt{b}}{2(b-1)} > 2n\e.
\end{align*}

Therefore, for any $G \in \cH$, either $\norm{N_{G} - X_{\Alg}(G)}_* > n\e$ or $\norm{N_{\bar{G}} - X_{\Alg}(\bar{G})}_* > n\e.$
Since $q$ is uniform over its support, it assigns the same probability to both $\bar{G}$ and $G$. Hence, we conclude that 
\begin{align}
    \prob{\norm{N_{G} - X_{\Alg}(G)}_* > n\epsilon \mid G \in \cH} \geq 1/2.
\end{align}
The lemma now follows by the law of total probability:
\begin{align*}
    \prob{ \norm{N_{G} - X_{\Alg}(G)}_* > n\epsilon} \geq \prob{ \norm{N_{G} - X_{\Alg}(G)}_* > n\epsilon \mid G \in \cH} \prob{G \in \cH} \geq \frac{1}{2} \cdot \frac{17}{32}. &\qedhere
\end{align*}
\end{proof}

With \Cref{lemma:query_lb_distribution} in place, our main query complexity lower bound for nuclear sparsification follows directly from Yao's min-max principle. 

\querynuclearlb*
\begin{proof} 
Any randomized algorithm that makes at most $Q \leq Cn\epsilon^{-2}\log^{-2}(\epsilon^{-1})$ for a fixed constant $C$ can be formulated as a distribution over deterministic algorithms that each makes at most $Q$ queries. As such, if a randomized algorithm succeeds with probability $> 3/4$ on inputs from the hard distribution $q$ guaranteed by Lemma~\ref{lemma:query_lb_distribution}, there must be a deterministic algorithm that succeeds with probability $> 3/4$ and uses an equal number of queries. Such an algorithm is ruled out by Lemma~\ref{lemma:query_lb_distribution}, so the theorem follows by contradiction.
\end{proof}
\section{Separation from Additive Spectral Sparsifiers}\label{sec:additive-spectral}

In this section, we discuss query lower bounds for deterministic algorithms that produces an $\epsilon$-additive spectral sparsifiers (Definition~\ref{def:eps-additive-spectral-spars}). Concretely, here we prove an $\Omega(n^2)$ lower bound on the number of queries to the generalized adjacency query model required by any deterministic $\epsilon$ \spectralApx algorithm on undirected graphs. In this section, for a graph $G = (V, E)$, we use $\bar{L}_G \defeq I - N_G$ to denote its normalized Laplacian.

\begin{figure}[htbp]
    \centering

    \begin{subfigure}[b]{\textwidth}
        \centering 
        \includegraphics[scale=.35]{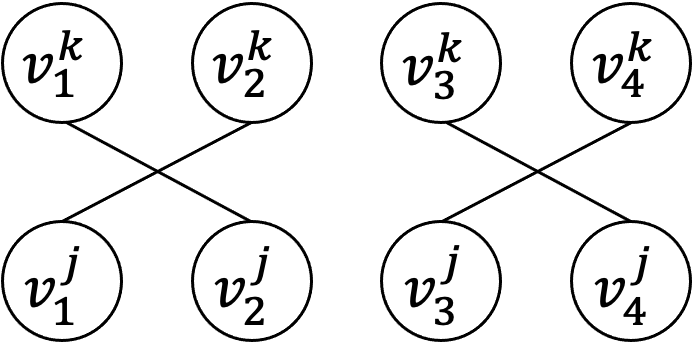}
        \caption{Any edge included in $\tilde{E}$ will always be one of the four edges depicted above for some $k, j \in [m]$. For each $k, j \in [m]$, $G_1$ will contain the edges depicted here between the vertices $\{v_i^k, v_t^j\}_{i,t \in [4]}$ -- and these will be the only edges incident to these 8 vertices.  }
        \label{fig:SL_full1}
    \end{subfigure}
    \begin{subfigure}[b]{\textwidth}
        \centering 
        \includegraphics[scale=.35]{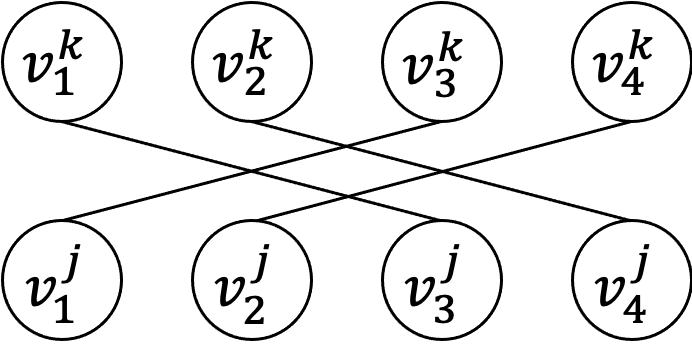}
        \caption{ If for some $k, j \in [m]$, $\tilde{E}$ does not contain the edges depicted in Figure~\ref{fig:SL_full1}, then $G_2$ will contain the edges depicted here between the vertices $\{v_i^k, v_t^j\}_{i,t \in [4]}$ -- and these will be the only edges incident to these 8 vertices.}
        \label{fig:SL_full2}
    \end{subfigure}
    \caption{Illustration of construction of $\tilde{E}, G_1$ and $G_2$ in the proof of Theorem~\ref{lemma:spectral_deterministic_lb}.}
    \label{fig:illustration_lb}
\end{figure}

Our proof proceeds by constructing a \emph{resisting oracle}. We will show that whenever the deterministic algorithm makes $o(n^2)$ queries to the generalized adjacency model, there exist two graphs $G_1, G_2$ on $n$ nodes that have constant nuclear norm distance, along with an oracle (Algorithm~\ref{alg:oracle}) that is \emph{simultaneously} a valid implementation of the generalized adjacency query model (Definition~\ref{def:query_model_gen}) on \emph{both} $G_1$ and $G_2$. Consequently, no deterministic algorithm that makes $o(n^2)$ queries to a generalized adjacency model can distinguish between $G_1$ and $G_2$ -- and therefore must fail to output a nuclear sparsifier on either $G_1$ or on $G_2.$

\begin{algorithm2e}[t]
\caption{Resisting Oracle}\label{alg:oracle}
\DontPrintSemicolon
\LinesNumbered
\KwIn{Parameter $m \in \N$ and a stream of queries $q_1, ..., q_T$}
\tcp{For $\ell \in [T]$, $A^{(\ell)}$ maintains the set of edges revealed in the first $\ell$ queries $q_1, ..., q_\ell$, for use in the analysis (see proof of \Cref{lemma:spectral_deterministic_lb}).}
Initialize $A^{(0)}$ to be the empty set $A^{(0)} \gets \emptyset$ \label{lineref:first}\; 
Initialize $y \in \N^{V} \gets \bm{1}$ to be the all-ones vector \; 
Initialize $z \gets (2, 1, 4, 3) \in \R^{4}$\; 
\KwOut{Responses to the queries $\{q_\ell\}_{\ell=1}^T$}
\For{$\ell \in [T]$}{
    Receive the next query $q_\ell$\; 
    \If{$q_\ell = \mathsf{GetNeighbor}(v_i^k)$}{
        $t \gets z_i$\; 
        $j \gets y[i]$ \; 
        $A^{(\ell)} \gets A^{(\ell-1)} \cup \{\{v_1^k, v_2^j\}, \{v_1^j, v_2^k\}, \{v_3^k, v_4^j\},\{v_3^j, v_4^k\}\}$ \label{lineref:second}\; 
        $y[i] \gets i + 1$\; 
        \lIf{$j \leq m$}{
            \textbf{Return:} $(m, v_t^j, 1)$ 
        }
        \lElse{
            \textbf{Return:} $\emptyset$ 
        }
    }
    \If{$q_\ell = \mathsf{GetEdge}(v_i^k, v_t^j)$}{
        \If{ $v_i^k, v_t^j \in \{\{v_1^k, v_2^j\}, \{v_1^j, v_2^k\}, \{v_3^k, v_4^j\},\{v_3^j, v_4^k\}\}$}{ 
            $A^{(\ell)} \gets A^{(\ell-1)} \cup \{\{v_1^k, v_2^j\}, \{v_1^j, v_2^k\}, \{v_3^k, v_4^j\},\{v_3^j, v_4^k\}\}$ \label{lineref:third}\;
            \textbf{Return: } True\; 
        }
        \lElse{
            \textbf{Return: } False 
        }
    }
}
\end{algorithm2e}

\begin{theorem}\label{lemma:spectral_deterministic_lb} Let $X_{\Alg}(G)$ be the output of any deterministic algorithm $\Alg$ which, given an input graph $G$ on $n$ nodes, makes at most $\frac{n^2}{128}$ queries to the generalized adjacency query model (Definition~\ref{def:query_model_gen}). Then, for any $m \geq 1$ and $n \geq 4m$, there exists an undirected graph $G$ on $n$ nodes along with a vector $v \in \R^V$, such that $\abs{v^\top \bar{L}_G v - X_{\Alg}(G)} > \frac{1}{8} v^\top v $. 
\end{theorem}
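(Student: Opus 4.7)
The plan is to construct two graphs $G_1$ and $G_2$ on $n$ nodes that are indistinguishable to $\Alg$ when it interacts with Algorithm~\ref{alg:oracle}, yet differ by more than $\tfrac{1}{4}\|v\|^2$ on the quadratic form $v^\top \bar L_G v$ for a carefully chosen test vector $v$. A triangle inequality argument then forces $\Alg$ to err by more than $\tfrac{1}{8}\|v\|^2$ on at least one of them. Since any deterministic $\Alg$ is a function of its transcript of oracle responses, the indistinguishability of $G_1$ and $G_2$ under the oracle is the key property to establish.

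First I would simulate $\Alg$ against the resisting oracle with parameter $m = \lfloor n/4 \rfloor$, so that there are $m^2$ gadgets indexed by pairs $(k,j) \in [m]^2$, each consisting of the $8$ vertices $\{v_i^k, v_t^j\}_{i,t \in [4]}$. Call $(k,j)$ \emph{touched} if either some $\mathsf{GetNeighbor}$ query returned a vertex in gadget $(k,j)$ or any $\mathsf{GetEdge}$ query was made on a vertex pair inside gadget $(k,j)$ (regardless of the answer); this slight extension of $A^{(T)}$ does not affect the analysis since each query touches $O(1)$ pairs, so at most $n^2/128$ of the $\binom{m}{2} \ge 2m^2/5$ distinct pairs are touched, leaving at least $3m^2/8$ untouched pairs.

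Next, define $G_1$ by placing the configuration of \Cref{fig:SL_full1} on every pair $(k,j)$, and define $G_2$ by placing \Cref{fig:SL_full1} on touched pairs and \Cref{fig:SL_full2} on untouched pairs (with any extra $n - 4m$ vertices isolated in both). Both configurations contribute exactly one neighbor per pair to each of $v_1^k,v_2^k,v_3^k,v_4^k$, so both graphs are $m$-regular on the $4m$ active vertices, which means the degree information returned by the oracle is consistent with both. I would then verify the three response types: $\mathsf{GetNeighbor}$ only reveals \Cref{fig:SL_full1} edges, which are in $G_1$ and also in $G_2$ because the pair becomes touched; $\mathsf{GetEdge}$ returning \texttt{True} similarly touches the pair; and $\mathsf{GetEdge}$ returning \texttt{False} is consistent because our extended definition of touched handles all edge queries, so every untouched pair in $G_2$ has the \Cref{fig:SL_full2} configuration that was never probed. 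Hence $X_{\Alg}(G_1) = X_{\Alg}(G_2) \defeq X$.

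Finally, I would choose $v \in \R^V$ with $v[v_1^k] = 1$, $v[v_2^k] = -1$, $v[v_3^k] = v[v_4^k] = 0$ for all $k$, and $v = 0$ elsewhere, so that $\|v\|^2 = 2m$. Because both graphs are $m$-regular on $\supp(v)$, we have $v^\top \bar L_G v = \|v\|^2 - \tfrac{2}{m}\sum_{\{a,b\} \in E(G)} v[a]v[b]$. A direct calculation on one gadget shows that swapping from the \Cref{fig:SL_full1} configuration (giving contribution $v[v_1^k]v[v_2^j] + v[v_1^j]v[v_2^k] = -2$) to the \Cref{fig:SL_full2} configuration (giving $+2$) changes the edge-sum by a constant $\Theta(1)$ per untouched pair. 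Summing over the $\Omega(m^2)$ untouched pairs gives $|v^\top \bar L_{G_1} v - v^\top \bar L_{G_2} v| \ge \Omega(m) > \tfrac{1}{4}\|v\|^2$ for $n$ large enough. The triangle inequality $|v^\top \bar L_{G_1} v - X| + |X - v^\top \bar L_{G_2} v| \ge |v^\top \bar L_{G_1} v - v^\top \bar L_{G_2} v|$ then yields the conclusion for the worse of $G_1, G_2$.

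The main obstacle I expect is step 2: verifying that Algorithm~\ref{alg:oracle}'s behavior on $\mathsf{GetEdge}$ queries returning \texttt{False} is simultaneously consistent with the nontrivial \Cref{fig:SL_full2} configuration in $G_2$. This forces me to extend the bookkeeping of "touched" to all edge queries (not only those returning \texttt{True} as in Line~\ref{lineref:third}) and to confirm that even with this extension the touched-pair count stays below the $n^2/128$ query budget, leaving enough untouched pairs to drive the quadratic-form gap.
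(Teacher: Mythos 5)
Your proposal is correct and follows the same resisting-oracle strategy as the paper's proof, down to the gadget structure, the construction of $G_1$ and $G_2$, and the triangle inequality at the end. Two things differ, and both are worth flagging. First, your test vector places $+1$ on $V_1$ and $-1$ on $V_2$ rather than the paper's $+1$ indicator of $V_1 \cup V_2$; both work. The paper's choice turns $v^\top \bar{L}_G v$ into the normalized cut value across $(V_1 \cup V_2, V_3 \cup V_4)$ — which is $0$ in $G_1$ and $\Theta(m)$ in $G_2$ because all Figure-1a edges stay inside the two halves while all Figure-1b edges cross — whereas your signed vector exploits the bipartite sign structure on the $V_1$-$V_2$ edges. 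Either choice produces a gap of $\Omega(m) > \frac{1}{4} v^\top v$. Second, and more substantively, you correctly anticipated a real gap in the paper's resisting oracle: in \Cref{alg:oracle}, the bookkeeping set $A^{(\ell)}$ is only updated on Line~\ref{lineref:third}, i.e.\ when a $\mathsf{GetEdge}$ query returns \texttt{True}. Thus a gadget $(k,j)$ remains ``untouched'' ($\{v_i^k,v_t^j\} \notin \tilde E$ for all $i,t$) even after the algorithm queries $\mathsf{GetEdge}(v_1^k,v_3^j)$ and receives \texttt{False}; but then $G_2$, which places the \Cref{fig:SL_full2} configuration on every untouched gadget, would contain $\{v_1^k,v_3^j\}$, contradicting the \texttt{False} response. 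Your extension — marking a gadget as touched upon \emph{any} $\mathsf{GetEdge}$ query on its vertex set, regardless of the answer — repairs this: such a gadget then carries the \Cref{fig:SL_full1} configuration in both $G_1$ and $G_2$, which is consistent with a \texttt{False} answer on any non-Figure-1a edge, and the touched count is still at most $T \leq n^2/128 = m^2/8$ with $n=4m$, leaving $\Omega(m^2)$ untouched gadgets. This is a clean fix, and the quantitative part of the argument goes through unchanged.
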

\begin{proof} Set $V = V_1 \sqcup V_2 \sqcup V_3 \sqcup V_4$ where $\abs{V_i} = m$. We use $v_i^r$ to denote the $r$-th vertex in $V_i$. Let $T \in [n^2]$, and consider the resisting oracle defined in Algorithm~\ref{alg:oracle}, and let $\tilde{E} = A^{(T)}$ (see Lines~\ref{lineref:first}, \ref{lineref:second}, and \ref{lineref:third}).

Let $E_1 = \{\{v_1^k, v_2^j\}, \{v_3^k, v_4^j\} : \{v_i^k, v_t^j\} \notin \tilde{E}, \text{ $i, t \in [4]$}\} $ and $E_2 = \{\{v_1^k, v_3^j\}, \{v_2^k, v_4^j\} : \{v_i^k, v_t^j\} \notin \tilde{E}, \text{ $i, t \in [4]$}\}$. Let $G_1 = (V, E_1 \cup \tilde{E})$ and $G_2 = (V, E_2 \cup \tilde{E})$. See Figure~\ref{fig:illustration_lb} for an illustration. Note that $G_1, G_2$ are both $m$-regular by construction. Moreover, by construction, \Cref{alg:oracle} is a valid implementation of the generalized adjacency query model (\Cref{def:query_model_gen}) on both $G_1$ and $G_2$. Hence $X_{\Alg}(G_1) = X_{\Alg}(G_2).$

Let $\overline{L}_1 = I - N_{G_1}$ be the normalized Laplacian of $G_1$ and $\overline{L}_2 = I - N_{G_2}$ be the normalized Laplacian of $G_2$. Take $v \in \R^V$ to be the vector which is 1 on $V_1 \cup V_2$ and 0 on $V_3 \cup V_4$. We have that 
\begin{align*}
    v^\top \overline{L}_i v = \frac{1}{m} \abs{\partial_{G_{i}}(V_1 \cup V_2)} \text{ and }
    v^\top v = 2m
    \text{ for }
    i \in [2]
\end{align*}
where $\partial_{G_{i}}(S) \defeq \{\{u, v\} \in E_i : u \in S, v \notin S\}$. As each query in $\ell \in [T]$ fixes at most four edges in $\tilde{E}$, we have that whenever $|{\tilde{E}}| / m < m/2$, i.e., $T < m^2/8 = n^2/128$, 
\begin{align*}
    \abs{v^\top \overline{L}_1 v - v^\top \overline{L}_2 v} \geq \frac{1}{m} \left(m^2 - |{\tilde{E}|} \right) = m - \frac{|{\tilde{E}}|}{m} > \frac{m}{2} = \frac{1}{4} v^\top v.
\end{align*}
Consequently, whenever $G \in \{G_1, G_2\}$ and the number of queries $T = \abs{Q_{\Alg}(G)} < n^2/128$, either $\abs{v^\top \overline{L}_1 v - X_{\Alg(G)}} > \frac{1}{8} v^\top v$ or $\abs{v^\top \overline{L}_2 v - X_{\Alg}(G)} > \frac{1}{8} v^\top v$. Hence, $\Alg$ must fail to output a $1/8$-additive nuclear sparsifier either when $G = G_1$ or when $G = G_2$.
\end{proof}

\addspecthm*
\begin{proof} 
We proceed by contradiction. Suppose that there is an algorithm that constructs an $\epsilon$-additive spectral sparsifier $M$ of $G$ using $o(n^2)$ Adjacency Queries. Then, we can construct an algorithm which outputs $I - M$ and achieves 
\begin{align*}
    \abs{x^\top (I- M) x - x^\top \bar{L}_G x} = \abs{x^\top x - x^\top M x - x^\top x + x^\top N_G x} = \abs{x^\top M x - x^\top N_Gx} \leq \epsilon \norm{x}_2^2.
\end{align*}
But this contradicts   Lemma~\ref{lemma:spectral_deterministic_lb} for any $\epsilon < 1/8$. As the hard instance graph in Lemma~\ref{lemma:spectral_deterministic_lb} is unweighted, the theorem holds even for unweighted graphs.
\end{proof}
\section{Sparsification in the Random Walk Model}\label{sec:rw}

In this section, we show how our notion of nuclear sparsification (and indeed, $\e$-additive spectral sparsification) can also be achieved in the one-step random walk model (Definition~\ref{def:rw_query}), which is a more restrictive query model than the one considered by \citet{BravermanKrishnanMusco:2022}. 
In \Cref{subsec:lb}, we also present a lower bound that separates the query complexity achievable for spectral sparsification from what is achievable with nuclear sparsification.

\subsection{Algorithms in the Random Walk Model}\label{subsec:algorithms}

\begin{algorithm2e}[t]
\caption{${\mathsf{NuclearSparsify}}(G, \epsilon, n, T)$}\label{alg:sparsify-nuclear}
\DontPrintSemicolon
\LinesNumbered
\textbf{Input:} Graph $G = (V=[n],E,w)$, under the one-step random walk query model (Definition~\ref{def:rw_query}), accuracy $\e \in (0,1)$,  number of queries $T \in \N$\;
\textbf{Output:} $X$, a sparsifier of $N_G$\;
Initialize $X \gets 0 \in \R^{n \times n}$\; 
\For{$t = 1, ..., T$}{ 
    $(i, j, \deg(i), \deg(j)) \gets \mathsf{RandomNeighbor}$\; 
    $X_{i,j} \gets X_{i,j} + 1/(2T) \cdot n \cdot \sqrt{\deg(i)/\deg(j)}$ \; 
    $X_{j,i} \gets X_{j,i} + 1/(2T) \cdot n \cdot \sqrt{\deg(i)/\deg(j)}$ \; 
}
\textbf{Return:} $X$
\end{algorithm2e}

We begin by presenting algorithms for sparsification under the one-step random walk query model. In Theorem~\ref{thm:nuclear_sparsifier_rw}, we show that by taking $T = \bigO(n\epsilon^{-2})$ in Algorithm~\ref{alg:sparsify-nuclear}, with constant probability, we can obtain a nuclear sparsifier. In Theorem~\ref{thm:spectral_sparsifier_rw}, we show that by taking $T = \bigO(n\epsilon^{-2}\log(n))$ in \Cref{alg:sparsify-spectral}, we can obtain an $\epsilon$-additive spectral sparsifier. 

\thmrwnuclearsparsifier*
\begin{proof} Let $i^{(t)} \sim {\mathsf{Uniform}}([n])$ and $j^{(t)} = j$ with probability proportional to $w_{i^{(t)}j}$ for each $j \in [n]$, independently for each $t \in [T]$. Let $A^{(t)} \in \R^{V \times V}$ be the matrix where 
\[
A^{(t)}_{i^{(t)},j^{(t)}} = n \cdot \sqrt{\deg(i^{(t)}) / \deg(j^{(t)})}.
\]
and all other entries are $0$. For all $i,j\in V$
\[
\E [A^{(t)}]_{i,j}
= \left[ \frac{1}{n} \cdot \frac{w_{i,j}}{\deg(i)} \right]
\cdot 
\left[ n \cdot \sqrt{\frac{\deg(i)}{\deg(j)}} \right]
= \frac{w_{i,j}}{\sqrt{\deg(i) \cdot \deg(j)}}
= N_{i,j}\,.
\]

Consequently, as $N$ is symmetric, $\E \frac{1}{2} (A^{(t)} + {A^{(t)}}^\top) = N$ and $\E \frac{1}{2T} \sum_{t \in [T]} A^{(t)} + {A^{(t)}}^\top = N$. For any i.i.d., vectors $x_1,\ldots,x_T$ $\E\norm{\frac{1}{T} \sum_{i \in [T]} x_i - \E[x_1]}_2^2 \leq \frac{1}{T} \E\norm{x_1 - \E[x_1]}_2^2 \leq \frac{1}{T} \E\norm{x_1}_2^2$. Hence, we have
\begin{align*}
    \E \norm{\paren{\frac{1}{2T} \sum_{t \in [T]} A^{(t)} + {A^{(t)}}^\top} - N }_F^2  \leq \E \paren{\norm{\frac{1}{2}  \paren{A^{(1)} + {A^{(1)}}^{\top}}}_F^2}.
\end{align*}
Furthermore,
\begin{align*}
\E \paren{\norm{\frac{1}{2}  \paren{A^{(1)} + {A^{(1)}}^{\top}}}_F^2} 
\leq \sum_{i \in [n]} \sum_{j \in [n]} \frac{w_{i,j}}{n \deg(i)} \Brac{n \cdot \sqrt{\deg(i)/\deg(j)}}^2 = n^2. 
\end{align*}
Note that $X$ in \Cref{alg:sparsify-nuclear} has the same distribution as $\frac{1}{2}  \paren{A^{(1)} + {A^{(1)}}^{\top}}$. By Markov's inequality
\begin{align*}
    \prob{ \norm{X - N}_* \geq n \epsilon} \leq \prob{ \norm{X - N}_F^2 \geq n \epsilon^2} \leq \frac{(1/T) n^2}{n \epsilon^2}.
\end{align*}
Setting $T = 3 n \epsilon^{-2}$ suffices to succeed with probability 2/3. 
\end{proof}

\begin{algorithm2e}[t]
\caption{${\mathsf{SpectralAdditiveSparsify}}(G, \epsilon, n, T)$}\label{alg:sparsify-spectral}
\DontPrintSemicolon
\LinesNumbered
\KwIn{ Graph $G = (V=[n],E,w)$, under the one-step random walk query model (Definition~\ref{def:rw_query}), Accuracy $\e \in (0, 1)$, Number of queries $T \in \N$\;}
\KwOut{$X$, a sparsifier of $N_G$\;}
Initialize $X \gets 0 \in \R^{n \times n}$\; 
\For{$t = 1, ..., T$}{ 
    $(i, j, \deg(i), \deg(j)) \gets \mathsf{RandomNeighbor}$\; 
    $Z \gets {\mathsf{Bernoulli}}(1/2)$\; 
    \If{$Z = 1$}{
        $X_{i,j} \gets X_{i,j} + \frac{2n}{\sqrt{\deg(i)} \sqrt{\deg(j)}}\paren{\frac{1}{\deg(i)} + \frac{1}{\deg(j)}}^{-1}$ \; 
    }
    \Else{
        $X_{j,i} \gets X_{j,i} + \frac{2n}{\sqrt{\deg(i)} \sqrt{\deg(j)}}\paren{\frac{1}{\deg(i)} + \frac{1}{\deg(j)}}^{-1}$ \; 
    }
}
\textbf{Return:} $X$
\end{algorithm2e}

In the following Theorem, we note that by using a result of \citet{cohen2017almost} we can also achieve the stronger notion of $\epsilon$-additive spectral sparsification in the one-step random walk access model (Definition~\ref{def:rw_query}.) 

\begin{theorem}[Theorem 3.9 (Simplified), \cite{cohen2017almost}]\label{thm:cohen}
  Let $A \in \R^{n \times n}$ be a symmetric matrix where no row or column is all zeros. Let $\e,p\in(0,1)$. Let $r = A \bf{1}$ where $\bf{1}$ is the all ones vector. Let $E_{ij}$ denote a matrix whose $(i,j)$-th entry is $1$ and rest of the entries are zero, and $\cD$ be a distribution over $\R^{n \times n}$ such that  $X \sim \cD$ takes value 
  \[ X = \paren{\frac{A_{ij}}{p_{ij}}} E_{ij} \text{ with probability } p_{ij} = \frac{A_{ij}}{2n}\Brac{ \frac{1}{r_i} + \frac{1}{r_j}} , ~\forall A_{ij} \neq 0 \mper \]
  If $A_1,\dots,A_T$ are sampled independently from $\cD$ for $T \geq 128 \frac{2n}{\e^2} \log \frac{2n}{p}$, $R = \text{Diag}(r)$ (where $\text{Diag}(r)$ is a diagonal matrix with $\text{Diag}(r)_{ii} = r_i$, for $i \in [n]$)  then the average $\tilde A \defeq \frac{1}{T} \sum_{t \in [T]} A_t$ satisfies 
  \[ \prob{ \spnorm{R^{-1/2} (\tilde A - A) R^{-1/2} } \geq \e } \leq p \mper\]
\end{theorem}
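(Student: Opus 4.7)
My plan is to prove this as a standard application of the (rectangular) matrix Bernstein inequality to the centered i.i.d.\ summands $Z_t \defeq R^{-1/2}(A_t - A) R^{-1/2}$. Set $Y_t \defeq R^{-1/2} A_t R^{-1/2}$ and $B \defeq R^{-1/2} A R^{-1/2}$. Since $p_{ij} = \tfrac{A_{ij}}{2n}\bigl(\tfrac{1}{r_i}+\tfrac{1}{r_j}\bigr)$ is a valid probability distribution over the non-zero entries of $A$ (summing to $1$ because $A\bm{1}=r$ and $A$ is symmetric), one verifies directly that $\E[Y_t] = \sum_{i,j: A_{ij}\neq 0} p_{ij}\cdot \tfrac{A_{ij}}{p_{ij}\sqrt{r_i r_j}} E_{ij} = B$, so the sum $\tfrac{1}{T}\sum_t Z_t = R^{-1/2}(\tilde A - A) R^{-1/2}$ is mean-zero.

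The main technical step is to extract the deterministic range bound and the variance bounds on $Z_t$. For the range bound, note $Y_t$ is rank one: $Y_t = \tfrac{A_{ij}}{p_{ij}\sqrt{r_i r_j}} e_i e_j^{\top}$, so $\|Y_t\|_2 = \tfrac{A_{ij}}{p_{ij}\sqrt{r_ir_j}}$. The AM-GM inequality gives $\tfrac{1}{r_i}+\tfrac{1}{r_j}\geq \tfrac{2}{\sqrt{r_i r_j}}$, hence $p_{ij}\geq \tfrac{A_{ij}}{n\sqrt{r_i r_j}}$, and so $\|Y_t\|_2\leq n$. Since $A$ has nonnegative entries with row sums $r$, $B$ is entrywise the symmetrically-normalized form of $A$, so Perron--Frobenius gives $\|B\|_2\leq 1$; therefore $\|Z_t\|_2 \leq n+1 \leq 2n$. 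For the variance, I will use the one-sided bounds $p_{ij}\geq \tfrac{A_{ij}}{2n r_j}$ and $p_{ij}\geq \tfrac{A_{ij}}{2n r_i}$ separately. Computing
\begin{align*}
\E[Y_t Y_t^{\top}] &= \sum_{i,j} p_{ij} \cdot \frac{A_{ij}^2}{p_{ij}^2 r_i r_j} E_{ii}
= \sum_{i} \frac{1}{r_i}\Bigl(\sum_j \frac{A_{ij}^2}{p_{ij}\, r_j}\Bigr) E_{ii} \\
&\preccurlyeq \sum_i \frac{1}{r_i}\Bigl(\sum_j \frac{A_{ij}^2}{r_j}\cdot \frac{2n r_j}{A_{ij}}\Bigr) E_{ii}
= 2n\sum_i \frac{1}{r_i}\Bigl(\sum_j A_{ij}\Bigr) E_{ii} = 2n I,
\end{align*}
and symmetrically $\E[Y_t^{\top} Y_t]\preccurlyeq 2n I$, whence $\max\bigl(\|\E[Z_t Z_t^{\top}]\|_2,\|\E[Z_t^{\top} Z_t]\|_2\bigr)\leq 2n + 1 \leq 3n$.

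With these two ingredients I will apply the (rectangular) matrix Bernstein inequality to $\{Z_t/T\}_{t=1}^T$: the per-summand norm bound is $L = 2n/T$, and the summed variance parameter is at most $\sigma^2 = 3n/T$. Bernstein then gives
\begin{equation*}
\Pr\bigl[\|R^{-1/2}(\tilde A-A)R^{-1/2}\|_2\geq \e\bigr]
\;\leq\; 2n\exp\!\Bigl(-\frac{\e^2/2}{3n/T + (2n/T)(\e/3)}\Bigr)
\;\leq\; 2n\exp\!\Bigl(-\frac{T\e^2}{8n}\Bigr),
\end{equation*}
using $\e\leq 1$. Taking $T\geq 128\cdot \tfrac{2n}{\e^2}\log\tfrac{2n}{p}$ makes the right-hand side at most $p$ with room to spare.

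I expect the main obstacle to be purely bookkeeping: the summands $Z_t$ are \emph{not} symmetric (each $A_t$ is a rank-one matrix concentrated at a single entry $(i,j)$), so I must use the rectangular form of matrix Bernstein and separately control $\E[Z_t Z_t^{\top}]$ and $\E[Z_t^{\top} Z_t]$. The key inequality that makes the variance computation clean is the one-sided sampling-probability lower bound $p_{ij}\geq \tfrac{A_{ij}}{2n r_j}$ (and its symmetric counterpart), which is exactly the reason the definition of $p_{ij}$ takes the ``leverage-score-like'' symmetrized form $\tfrac{A_{ij}}{2n}(\tfrac{1}{r_i}+\tfrac{1}{r_j})$; the AM-GM step giving the $n$-bound on $\|Y_t\|_2$ is what keeps the Bernstein range term from blowing up.
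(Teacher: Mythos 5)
Your proof is correct. Note, however, that the paper does not prove this statement at all: it is imported verbatim as Theorem 3.9 of \cite{cohen2017almost} and used as a black box in the proof of Theorem~\ref{thm:spectral_sparsifier_rw}. What you have supplied is a self-contained derivation via the rectangular matrix Bernstein inequality, and all the pieces check out: the sampling probabilities sum to $1$ and give $\E[Y_t]=R^{-1/2}AR^{-1/2}$; the AM--GM bound $p_{ij}\geq A_{ij}/(n\sqrt{r_ir_j})$ yields $\|Y_t\|_2\leq n$; the one-sided bounds $p_{ij}\geq A_{ij}/(2nr_j)$ and $p_{ij}\geq A_{ij}/(2nr_i)$ give $\E[Y_tY_t^{\top}]\preccurlyeq 2nI$ and $\E[Y_t^{\top}Y_t]\preccurlyeq 2nI$ (in fact $\E[Z_tZ_t^{\top}]=\E[Y_tY_t^{\top}]-BB^{\top}\preccurlyeq 2nI$, slightly better than your $3n$, but either suffices); and plugging $L=2n/T$, $\sigma^2\leq 3n/T$, dimension factor $2n$ into Bernstein gives $2n\exp(-T\e^2/(8n))\leq p$ for $T\geq 8n\e^{-2}\log(2n/p)$, so the stated constant $256\,n\e^{-2}\log(2n/p)$ holds with large slack. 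Two small remarks: (i) you implicitly use that $A$ is entrywise nonnegative (needed both for $p_{ij}$ to be a probability and for $\|R^{-1/2}AR^{-1/2}\|_2\leq 1$ via $-R\preccurlyeq A\preccurlyeq R$); this is implicit in the theorem statement but worth flagging since the statement only says ``symmetric''; (ii) since the $Z_t$ are non-symmetric rank-one matrices, the dimension factor $2n$ comes from the Hermitian dilation in the rectangular Bernstein bound, which is exactly the $2n$ appearing inside the logarithm in the theorem's sample count --- your bookkeeping here is consistent.
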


Using the above theorem, we show that with an additional multiplicative $\bigO(\log(n))$-factor in the number of queries, the \Cref{alg:sparsify-spectral} we can obtain the stronger notion of $\epsilon$-additive spectral sparsifier in the one-step random walk access model (Definition~\ref{def:rw_query}.)  

\thmrwspectralsparsifier*
\begin{proof} Let $i^{(t)} \sim {\mathsf{Uniform}}([n])$ and $j^{(t)} = j$ with probability $w_{ij}$ for each $j \in [n]$, independently for each $t \in [T]$. Let $A^{(t)} \in \R^{V \times V}$ be the matrix where with probability $1/2$, either
\[
A^{(t)}_{i^{(t)},j^{(t)}} = \frac{2n}{\sqrt{\deg(i^{(t)})} \sqrt{\deg(j^{(t)})}}\paren{\frac{1}{\deg(i^{(t)})} + \frac{1}{\deg(j^{(t)})}}^{-1},
\]
or  
\[
A^{(t)}_{j^{(t)},i^{(t)}} = \frac{2n}{\sqrt{\deg(i^{(t)})} \sqrt{\deg(j^{(t)})}}\paren{\frac{1}{\deg(i^{(t)})} + \frac{1}{\deg(j^{(t)})}}^{-1}. 
\]
All other entries are $0$. Note that then
\begin{align*}
    X_{i,j} =  \frac{2n}{\sqrt{\deg(i)} \sqrt{\deg(j)}}\paren{\frac{1}{\deg(i)} + \frac{1}{\deg(j)}}^{-1}, 
\end{align*}
with probability $\frac{1}{2n} \cdot \paren{\frac{w_{ij}}{\deg(i)} + \frac{w_{ij}}{\deg(j)}}$. By \Cref{thm:cohen}, setting $A = N_G$, for $i \in [T]$ setting $A_i = A^{(t)}$, and $p=1/3$ in Theorem~\ref{thm:cohen} and taking $T \geq 128 \frac{2n}{\epsilon^2} \log(2n/p)$, the result follows.
\end{proof}

\subsection{Separation Between Spectral and Nuclear Sparsification}\label{subsec:lb}

In this section, we show that there exists a constant $c$ such that any algorithm for constructing an $c$-additive spectral sparsifier must make at least $\Omega(n\log n)$ queries to the one-step random walk graph query oracle. Our result relies on the following classical result on the sample complexity for the coupon collector problem \cite{mitzenmacher2017probability}.

\begin{lemma}[Coupon Collector] Consider a collection of $n \geq 1$ different coupons from which coupons are drawn independently, with equal probability, and with replacement. Let $T$ be a random variable which denotes the number of trials needed to see all the $n$ different coupons. Then, for any constant $c>0$
  \[ \Pr{ T \leq n \log n - cn } \leq \exp(-c) \mper \]
\end{lemma}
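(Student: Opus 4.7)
The plan is to apply the second moment method to the number $X$ of coupons that remain uncollected after $t \defeq \lfloor n\log n - cn \rfloor$ trials. If $t \leq 0$ the bound is vacuous, so assume $t \geq 1$ (which forces $n > e^c$). For each $i \in [n]$, let $A_i$ be the event that coupon $i$ is not drawn in the first $t$ trials; by independence of the trials, $\Pr{A_i} = (1-1/n)^t$ and $\Pr{A_i \cap A_j} = (1-2/n)^t$ for $i \neq j$. Writing $X \defeq \sum_{i=1}^n \mathds{1}_{A_i}$, the event $\set{T \leq t}$ coincides with $\set{X = 0}$, so it suffices to upper bound $\Pr{X = 0}$.

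The first ingredient is a variance bound $\Var[X] \leq \E[X]$. Because $(1-1/n)^2 = 1 - 2/n + 1/n^2 > 1 - 2/n$, each pairwise covariance $\mathrm{Cov}(\mathds{1}_{A_i}, \mathds{1}_{A_j}) = (1-2/n)^t - (1-1/n)^{2t}$ is non-positive, so $\Var[X] \leq \sum_i \Var(\mathds{1}_{A_i}) \leq \sum_i \Pr{A_i} = \E[X]$. Combined with the Paley--Zygmund form $\Pr{X = 0} \leq \Var[X]/\E[X^2]$, the identity $\E[X^2] = \E[X]^2 + \Var[X]$, and the fact that $x \mapsto x/(a+x)$ is increasing for $a, x \geq 0$, this yields
\[
\Pr{X = 0} \;\leq\; \frac{\Var[X]}{\E[X]^2 + \Var[X]} \;\leq\; \frac{\E[X]}{\E[X]^2 + \E[X]} \;=\; \frac{1}{1 + \E[X]} \mper
\]

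It therefore remains to show $\E[X] = n(1-1/n)^t \geq e^c - 1$, since then the bound above collapses to $\Pr{X = 0} \leq 1/e^c = e^{-c}$. For $n \geq 2$ one has the refined estimate $(1-1/n)^n \geq e^{-1 - 1/n}$, obtained by the term-wise bound $\sum_{k \geq 2} 1/(k n^{k-1}) \leq 1/n$ applied to the Taylor series of $-\log(1-1/n)$. Since $t \leq n(\log n - c)$, this yields $\E[X] \geq e^{c - (\log n - c)/n}$. Elementary calculus shows $(\log n - c)/n$ is maximized at $n = e^{c+1}$, where it equals $e^{-c-1}$, and $-\log(1 - e^{-c}) \geq e^{-c} \geq e^{-c-1}$; hence $(\log n - c)/n \leq -\log(1 - e^{-c})$, and rearranging gives $\E[X] \geq e^c(1 - e^{-c}) = e^c - 1$, as needed.

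The main obstacle is this quantitative lower bound on $\E[X]$: one strictly has $\E[X] < e^c$, so the crude Chebyshev-style bound $\Pr{X = 0} \leq 1/\E[X]$ falls just barely short of $e^{-c}$. The sharper Paley--Zygmund bound above, together with the negative-covariance variance estimate and the refined approximation $(1-1/n)^n \geq e^{-1 - 1/n}$ (rather than just the limiting $e^{-1}$), is therefore essential to deliver $\Pr{X = 0} \leq e^{-c}$ uniformly in $n > e^c$.
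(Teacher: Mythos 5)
Your proof is correct. Note that the paper itself does not prove this lemma; it is stated as a classical fact with a citation to Mitzenmacher--Upfal, so there is no in-paper argument to compare against. Your self-contained second-moment derivation checks out at every step: the identification of $\set{T \le n\log n - cn}$ with $\set{X=0}$ for $t = \floor{n\log n - cn}$ is valid since $T$ is integer-valued; the covariances of the uncollected-coupon indicators are indeed non-positive because $(1-1/n)^{2t} \ge (1-2/n)^t$, giving $\Var[X] \le \E[X]$; the Paley--Zygmund form $\Pr{X=0} \le \Var[X]/\E[X^2] \le 1/(1+\E[X])$ is correctly derived via monotonicity of $x \mapsto x/(a+x)$; and the estimate $\E[X] \ge e^{c - (\log n - c)/n} \ge e^c - 1$ follows from $(1-1/n)^n \ge e^{-1-1/n}$ (valid for $n \ge 2$, which your standing assumption $n > e^c > 1$ guarantees) together with the maximization of $(\log x - c)/x$ at $x = e^{c+1}$ and the inequality $-\log(1-e^{-c}) \ge e^{-c} \ge e^{-c-1}$. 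Your closing remark is also apt: the naive bound $\Pr{X=0} \le 1/\E[X]$ really does fall short here since $\E[X] < e^c$, so the sharper $1/(1+\E[X])$ and the refined exponential estimate are both genuinely needed to get the constant-free bound $e^{-c}$ uniformly in $n$.
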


In the following lemma, we obtain our lower bound for the query complexity for $\epsilon$-additive spectral sparsification by reducing to the coupon collector problem. 

\lbspectraladdcoupon*
\begin{proof}
  For each $i \in [n]$, let $X_{i}$ be an independent Bernoulli random variable. We construct a random bipartite graph on $2n$ nodes $G_X = (V,E)$ based on $X \defeq \{X_1,\dots, X_n\}$. Let $V = A \sqcup B$ denote the vertices of $G_X$, such that $\abs{A}=\abs{B}=n$. Let $a_1,\dots,a_{n}$ denote the vertices in $A$, and $b_1,\dots,b_{n}$ denote the vertices in $B$. If $X_i = 1$ we add $(a_i,b_i)$ to $E$.
  
  Let $A$ denote the adjacency matrix of $G_X$. Let $\cI \defeq \set{i : (a_i,b_i)\in E}$. The $\mathsf{RandomNeighbor}$ query either returns a vertex $a_i$ or $b_i$ for $i \in [n]\setminus \cI$, or an edge $(a_i,b_i)$ for $i \in \cI$. If the $\mathsf{RandomNeighbor}$ query returns a vertex $a_i$ (or $b_i$) for $i \not \in \cI$, then $(a_i,b_i) \not \in E$. Therefore the $\mathsf{RandomNeighbor}$ query corresponds to sampling a pair $(a_i,b_i)$ for $i \in [n]$. The pair $(a_i,b_i)$ for $i \in [n]$ corresponds to coupons in the coupon collector problem. 

  Suppose there exists an algorithm that takes  $T \leq n \log n -  \log(1/c) n$ $\mathsf{RandomNeighbor}$ queries and outputs a spectral sparsifier $\tA$ of $A$ such that $\spnorm{A-\tA} \leq \e$ with constant probability $c$. Then, we will show that we can solve the coupon collector problem in $T \leq  n \log n - \log(1/c) n$ queries with constant probability $c$, which leads to a contradiction. 
  
  This follows because we can determine all the edges in the graph $G_X$, and hence all the coupons $(a_i,b_i)$ for $i \in [n]$,  from $\tA$ by taking $x \in \R^{2n}$ indexed by $a_i$ and $b_i$, for $i \in [n]$. We take $x$ to be the all $0$ vector except at two coordinates where it is $1$, i.e., $x_{a_i} = x_{b_i} = 1$, and $x_k=0$ for $k \neq a_i 
  \text{ or } b_i$. 
  We get that if $ 3/4 \leq x^\top \tA x \leq 5/4$ then $(a_i,b_i) \in E$ and if $ -1/4 \leq x^\top \tA x \leq 2/4$, then $(a_i,b_i) \not\in E$. 
  
  Therefore, any algorithm that takes $T \leq n \log n - \log(1/c) n$ oracle queries, fails to produce a spectral sparsifier, with probability greater than $c$. 
\end{proof}

Note that for the random graphs $G_X$ considered in the proof of Theorem~\ref{lemma:lb_spectral_add_coupon}, the degree of each node is at most $1$; consequently, on these random graphs, the $k$-step random walk query model is no more informative than a $1$-step random walk model. Therefore, the lower bound of Theorem~\ref{lemma:lb_spectral_add_coupon} holds for the $k$-step query model as well. 
\section{From Nuclear Sparsifiers to Graphical Nuclear Sparsifiers}\label{app:omitted}

Throughout this paper, we have constructed nuclear sparsifiers that are matrices (Definition~\ref{def:nuclear-sparsifier-informal}) but may not necessarily be the normalized adjacency matrices of any graph. In this section, we prove the following lemma, which shows how to convert our nuclear sparsifiers into an \emph{graphical nuclear sparsifier} (i.e., a nuclear sparsifier which is the normalized adjacency matrix of some graph) using no additional queries and only $O(n/\e^2)$ additional runtime. We leverage the fact that a closer analysis of the proofs of Theorems~\ref{thm:nuclear_ub} and \ref{thm:nuclear_sparsifier_rw} actually achieve an \emph{stronger} notion of sparsification than nuclear sparsification (Definition~\ref{def:nuclear-sparsifier-informal}) in that they output $\epsilon$-additive nuclear sparsifiers $M$ such that $\norm{M - N_G}_*^2 \leq \norm{M - N_G}_F^2 \leq n\epsilon^2$.

In the following, for a matrix $A \in \R^{n \times n}$, we let $A_{[1:j]}$ denote the leading principal submatrix of order $j$.

\begin{lemma}\label{lemma:graphical-nuclear-ub} 
Let $\e \in (0, 1)$ and $G = (V, E, w)$ be a graph on $n$ nodes. Without loss of generality, assume the vertices are ordered such that $\deg(i) \geq \deg(j)$ for $i < j$. Let $M \in \R^{n \times n}$ be an matrix such that $\norm{N_G - M}_F^2 \leq n\epsilon^2$. Let $Q \defeq D_G^{1/2} M D_G^{1/2}$ and $\ebf \in \R^{n-1}$ be a vector with $e_i \defeq \deg(i) - \sum_{j=1}^{n-1} (Q)_{i,j}$. Let $G'$ be a graph with adjacency and degree matrix
\begin{align*}
A_{G'} = \begin{pmatrix}
    {Q}_{[1:n-1]} & \ebf \\
    \ebf^\top & 0
\end{pmatrix}, \text{ and } D_{G'} = \begin{pmatrix}
    {D_{G}}_{[1:n-1]}& \zero \\
    \zero^\top & \norm{\ebf}_1
\end{pmatrix}.
\end{align*}
Then, $\norm{N_{G} - N_{G'}}_* \leq 3n\epsilon$ whenever $\epsilon^{-2} \leq n$. 
\end{lemma}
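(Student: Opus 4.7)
The plan is to apply a triangle inequality that exploits the block structure of the construction. I would split $N_G - N_{G'} = X + Y$, where $X$ is the $n \times n$ matrix whose leading $(n-1)\times(n-1)$ principal submatrix equals $(N_G - N_{G'})_{[1:n-1]}$ (with the remaining row and column set to zero), and $Y = (N_G - N_{G'}) - X$ is supported only on the last row and last column. Then $\nuclear{N_G - N_{G'}} \leq \nuclear{X} + \nuclear{Y}$, and the two pieces can be bounded independently.

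For $X$, the key observation is that $N_{G'}(i,j) = M(i,j)$ for all $i,j < n$: since $D_{G'}(i,i) = \deg(i) = D_G(i,i)$ for $i < n$ and $A_{G'}$ agrees with $Q = D_G^{1/2} M D_G^{1/2}$ on the leading $(n-1)\times(n-1)$ block, we get $N_{G'}(i,j) = Q(i,j)/\sqrt{\deg(i)\deg(j)} = M(i,j)$. Hence $X_{[1:n-1]} = (N_G - M)_{[1:n-1]}$, and since $\mathrm{rank}(X) \leq n-1$, the standard bound $\nuclear{A} \leq \sqrt{\mathrm{rank}(A)}\,\fnorm{A}$ yields
\[
\nuclear{X} \leq \sqrt{n-1}\,\fnorm{N_G - M} \leq \sqrt{n(n-1)}\,\epsilon \leq n\epsilon.
\]
For $Y$, since $N_G(n,n) = N_{G'}(n,n) = 0$ the $(n,n)$-entry of $Y$ vanishes. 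Writing $y_i = (N_G - N_{G'})(i,n)$ for $i < n$ and $y_n = 0$, we have $Y = y e_n^\top + e_n y^\top$, which has rank at most $2$ and hence $\nuclear{Y} \leq 2\normt{y}$. Using that all entries of $N_G$ and $N_{G'}$ are non-negative (since $G$ and $G'$ have non-negative edge weights) together with the elementary inequality $(a-b)^2 \leq a^2 + b^2$ for $a,b \geq 0$, we obtain
\[
\normt{y}^2 \;\leq\; \normt{N_G e_n}^2 + \normt{N_{G'} e_n}^2 \;\leq\; \spnorm{N_G}^2 + \spnorm{N_{G'}}^2 \;\leq\; 2,
\]
so $\nuclear{Y} \leq 2\sqrt{2}$.

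Combining gives $\nuclear{N_G - N_{G'}} \leq n\epsilon + 2\sqrt{2}$, and the hypothesis $\epsilon^{-2}\leq n$ implies $n\epsilon \geq \sqrt{n} \geq \sqrt{2}$ for $n \geq 2$, so $2\sqrt{2} \leq 2n\epsilon$ and the total is at most $3n\epsilon$, as claimed. The hard part of the plan is the bound on $\normt{y}$: one might hope for an $\epsilon$-scaled bound obtained by expanding $y_i$ through $e_i - A_G(i,n) = \sqrt{\deg(i)}\cdot[(N_G - M)\sqrt{\deg}]_i$ (which follows from $N_G \sqrt{\deg} = \sqrt{\deg}$), but the mismatch between the normalizations $\sqrt{\deg(n)}$ in $N_G(i,n)$ and $\sqrt{\norm{\ebf}_1}$ in $N_{G'}(i,n)$ cannot be controlled by $\fnorm{N_G - M}$ alone without invoking unfavorable volume-to-minimum-degree factors. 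The constant spectral bound $\normt{y} \leq \sqrt{2}$ is therefore the right tool, and the assumption $\epsilon^{-2} \leq n$ in the statement is precisely what is needed to absorb the additive constant $2\sqrt{2}$ into the $3n\epsilon$ bound. The use of $\spnorm{N_{G'}} \leq 1$ relies on $A_{G'}$ being a genuine non-negative adjacency matrix with row sums equal to the diagonal of $D_{G'}$, which holds in the intended applications where $M$ is entrywise non-negative (and hence $\ebf \geq 0$ follows from $\sum_{j<n} Q(i,j) \leq \sum_{j\leq n} A_G(i,j) = \deg(i)$).
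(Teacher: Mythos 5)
Your proof is correct and reaches the same bound, but the route is genuinely different from the paper's. The paper never splits the nuclear norm: it bounds $\fnorm{N_G - N_{G'}}^2$ directly by separating the leading $(n-1)\times(n-1)$ block (same observation as yours, giving $n\epsilon^2$) from the last row and column, which it controls entry-by-entry using the degree monotonicity $\deg(n)\le\deg(i)$ and the bound $0\le e_i\le\deg(i)$ to get an extra additive $4$; it then applies $\nuclear{A}\le\sqrt{n}\fnorm{A}$ once, globally. Your decomposition $\nuclear{N_G-N_{G'}}\le\nuclear{X}+\nuclear{Y}$ exploits the structure more carefully: the rank-$(n-1)$ bound for $X$ matches the paper's bound on that block, but for the last-row/column piece $Y$ you compute $\nuclear{Y}=2\normt{y}$ exactly via the rank-$2$ structure (eigenvalues $\pm\normt{y}$ since $y\perp e_n$), rather than paying a $\sqrt{n}$ factor on a rank-$2$ object. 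The two approaches bound $\normt{y}^2\le 2$ by essentially the same inequality $(a-b)^2\le a^2+b^2$ for $a,b\geq 0$, but you invoke $\spnorm{N_G},\spnorm{N_{G'}}\le 1$ directly while the paper reconstitutes this fact through explicit degree sums. Your approach therefore yields the tighter intermediate estimate $n\epsilon+2\sqrt{2}$ versus the paper's $\sqrt{n(n\epsilon^2+4)}$; both are absorbed into $3n\epsilon$ using $\epsilon^{-2}\le n$. You are also right to flag that $\spnorm{N_{G'}}\le 1$ rests on $A_{G'}$ being entrywise non-negative — the paper's proof has the same implicit requirement ($0\le e_i\le\deg(i)$, and $Q$ non-negative so that $G'$ is a valid weighted graph), stated without justification; this holds for the paper's intended $M$ but is not guaranteed by the lemma's stated hypotheses alone.
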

\begin{proof} Recall that for any matrix $A \in \R^{n \times n}$, $\norm{A}_F \leq \norm{A}_* \leq \sqrt{n}\norm{A}_F$. Consequently, note that it suffices to show that 
\begin{align*}
    \norm{N_{G} - D_{G'}^{-1/2}A_{G'}D_{G'}^{-1/2}}_F^2 \leq 9n\epsilon^2. 
\end{align*}
By assumption, 
\begin{align*}
    \norm{\paren{N_{G} - M}_{1:n-1}}_F^2 = \norm{\paren{N_{G} - D_{G'}^{-1/2}A_{G'}D_{G'}^{-1/2}}_{1:n-1}}_F^2 \leq n \epsilon \leq n \epsilon^2.
\end{align*}
Consequently,
\begin{align}\label{eq:fix_up_degrees_frob_norm}
    \norm{N_{G} - D_{G'}^{-1/2}A_{G'}D_{G'}^{-1/2}}_F^2 &\leq n\epsilon^2 + 2\sum_{i=1}^{n-1} \paren{\frac{w_{\{i,n\}}}{\sqrt{\deg(i)} \sqrt{\deg(n)}} - \frac{e_i}{\sqrt{\deg(i)}\sqrt{\norm{\ebf}_1}}}^2\nonumber\\
    &\leq n\epsilon^2 + 2 \sum_{i=1}^{n-1} \frac{w_{\{i,n\}}^2}{{\deg(i)} {\deg(n)}} + 2 \sum_{i=1}^{n-1} \frac{e_{i}^2}{{\deg(i)} \norm{\ebf}_1}.
\end{align}
We can bound the first term in \eqref{eq:fix_up_degrees_frob_norm} by using the fact that $\deg(n) \leq \deg(i)$:
\begin{align*}
\sum_{i=1}^{n-1} \frac{w_{\{i,n\}}^2}{{\deg(i)} {\deg(n)}} \leq \frac{1}{{\deg(n)}^2}\sum_{i=1}^{n-1} w_{\{i,n\}}^2 \leq \frac{1}{{\deg(n)}^2}\left(\sum_{i=1}^{n-1} w_{\{i,n\}}\right)^2 = 1. 
\end{align*}
We can bound the second term by observing that, since since $0 \leq e_i  \leq \deg(i)$, 
\begin{align*}
\sum_{i=1}^{n-1} \frac{e_i^2}{\deg(i) \norm{\ebf}_1} \leq \frac{1}{\norm{\ebf}_1}\sum_{i=1}^{n-1} e_i = 1.
\end{align*}
Substituting into \eqref{eq:fix_up_degrees_frob_norm}, we conclude that $\norm{N_{G} - D_{G'}^{-1/2}A_{G'}D_{G'}^{-1/2}}_F^2 \leq n\epsilon^2 + 4$, which is less than $9n\epsilon^2$ whenever $\epsilon^{-2} \leq n$.  
\end{proof}

\section*{Acknowledgements}
Yujia Jin and Ishani Karmarkar were supported in part by NSF CAREER Grant CCF-1844855, NSF Grant CCF-1955039, and a PayPal research award. Yujia Jin's contributions to the project occurred while she was a graduate student at Stanford. Christopher Musco was partially supported by NSF Grant CCF-2045590. Aaron Sidford was supported in part by a Microsoft Research Faculty Fellowship, NSF CAREER Grant CCF-1844855, NSF Grant CCF-1955039, and a PayPal research award. Apoorv Vikram Singh was partially supported by NSF Grant CCF-2045590.

\appendix
\crefalias{section}{appendix}
\crefalias{subsection}{appendix}

\printbibliography

@INPROCEEDINGS{BanksGarza-VargasKulkarni:2022,
	  author={Banks, Jess and Garza-Vargas, Jorge and Kulkarni, Archit and Srivastava, Nikhil},
	  booktitle={\cFOCS{2020}},
	  title={Pseudospectral Shattering, the Sign Function, and Diagonalization in Nearly Matrix Multiplication Time},
	  year={2020}
}

@article{SpielmanSrivastava:2011,
	author = {Spielman, Daniel A. and Srivastava, Nikhil},
	journal = {SIAM Journal on Computing},
	title = {Graph Sparsification by Effective Resistances},
	year = {2011}}

@article{Hutchinson:1990,
	author = {Hutchinson, Michael F.},
	journal = {Communications in Statistics-Simulation and Computation},
	title = {A stochastic estimator of the trace of the influence matrix for {L}aplacian smoothing splines},
	year = {1990}}

@article{MeyerMuscoMusco:2021,
	author = {Raphael A. Meyer and Cameron Musco and Christopher Musco and David Woodruff},
	journal = {\SOSA{2021}},
	title = {{Hutch++:} {O}ptimal Stochastic Trace Estimation},
	year = {2021}}

@inproceedings{GhorbaniKrishnanXiao:2019,
	author = {Ghorbani, Behrooz and Krishnan, Shankar and Xiao, Ying},
	booktitle = {\ICML{2019}},
	title = {An Investigation into Neural Net Optimization via Hessian Eigenvalue Density},
	year = {2019}}

@inproceedings{PenningtonSchoenholzGanguli:2018,
	author = {Jeffrey Pennington and Samuel Schoenholz and Surya Ganguli},
	booktitle = {\AISTATS{2018}},
	title = {The emergence of spectral universality in deep networks},
	year = {2018}
}

@article{SilverRoder:1994,
	author = {Silver, R.N. and R{\"o}der, H.},
	journal = {International Journal of Modern Physics C},
	number = {4},
	title = {Densities of States of Mega-Dimensional Hamiltonian Matrices},
	volume = {5},
	year = {1994}}

@inbook{Skilling:1989,
	author="Skilling, John",
    title="The Eigenvalues of Mega-dimensional Matrices",
    bookTitle="Maximum Entropy and Bayesian Methods: Cambridge, England, 1988",
    year="1989",
    publisher="Springer Netherlands",
    pages="455--466"}

@article{LinSaadYang:2016,
	author = {Lin, Lin and Saad, Yousef and Yang, Chao},
	journal = {SIAM Review},
	number = {1},
	pages = {34-65},
	title = {Approximating Spectral Densities of Large Matrices},
	volume = {58},
	year = {2016}}

@article{WeisseWelleinAlvermann:2006,
  title={The kernel polynomial method},
  author={Wei{\ss}e, Alexander and Wellein, Gerhard and Alvermann, Andreas and Fehske, Holger},
  journal={Reviews of modern physics},
  volume={78},
  number={1},
  pages={275},
  year={2006},
  publisher={APS}
}

@inproceedings{DongBensonBindel:2019,
	author = {Dong, Kun and Benson, Austin R and Bindel, David},
	booktitle = {\KDD{2019}},
	pages = {1152--1161},
	title = {Network density of states},
	year = {2019}}

@inproceedings{Cohen-SteinerKongSohler:2018,
	author = {Cohen-Steiner, David and Kong, Weihao and Sohler, Christian and Valiant, Gregory},
	booktitle = {\KDD{2019}},
	pages = {1263--1271},
	title = {Approximating the Spectrum of a Graph},
	year = {2018}}

@inproceedings{BravermanKrishnanMusco:2022,
	author = {Braverman, Vladimir and Krishnan, Aditya and Musco, Christopher},
	booktitle = {\STOC{2022}},
	pages = {1144--1157},
	title = {Sublinear Time Spectral Density Estimation},
	year = {2022}}

@article{BravermanKM21Arxiv,
  author       = {Vladimir Braverman and
                  Aditya Krishnan and
                  Christopher Musco},
  title        = {Linear and Sublinear Time Spectral Density Estimation},
  journal      = {CoRR},
  volume       = {abs/2104.03461},
  year         = {2021},
  url          = {https://arxiv.org/abs/2104.03461},
  eprinttype    = {arXiv},
  eprint       = {2104.03461}
}

@article{KongValiant:2017,
	author = {Weihao Kong and Gregory Valiant},
	journal = {The Annals of Statistics},
	number = {5},
	pages = {2218 -- 2247},
	title = {Spectrum estimation from samples},
	volume = {45},
	year = {2017}}

@article{Mirsky,
	    author = {Mirsky, L.},
	    title = "{Symmetric Gauge Functions and Unitarily Invariant Norms}",
	    journal = {The Quarterly Journal of Mathematics},
	    volume = {11},
	    number = {1},
	    pages = {50-59},
	    year = {1960},
	    month = {01},
        }

@InProceedings{Jin23,
  title = 	 {Moments, Random Walks, and Limits for Spectrum Approximation},
  author =       {Jin, Yujia and Musco, Christopher and Sidford, Aaron and Singh, Apoorv Vikram},
  booktitle = 	 {\COLT{2023}},
  pages = 	 {5373--5394},
  year = 	 {2023}
}

@article{BandeiraH16,
author = {Afonso S. Bandeira and Ramon van Handel},
title = {{Sharp nonasymptotic bounds on the norm of random matrices with independent entries}},
volume = {44},
journal = {The Annals of Probability},
number = {4},
publisher = {Institute of Mathematical Statistics},
pages = {2479 -- 2506},
year = {2016}}

@article{GuionnetZ00,
author = {Alice Guionnet and Ofer Zeitouni},
title = {{Concentration of the Spectral Measure for Large Matrices}},
volume = {5},
journal = {Electronic Communications in Probability},
publisher = {Institute of Mathematical Statistics and Bernoulli Society},
pages = {119 -- 136},
year = {2000}}

@inproceedings{cohen2017almost,
  title={Almost-linear-time algorithms for markov chains and new spectral primitives for directed graphs},
  author={Cohen, Michael B. and Kelner, Jonathan and Peebles, John and Peng, Richard and Rao, Anup B and Sidford, Aaron and Vladu, Adrian},
  booktitle={\STOC{2017}},
  pages={410--419},
  year={2017}
}

@inproceedings{EikmeierGleich:2017,
author = {Eikmeier, Nicole and Gleich, David F.},
title = {Revisiting Power-Law Distributions in Spectra of Real World Networks},
year = {2017},
booktitle = {\KDD{2017}},
pages = {817–826},
}

@inproceedings{ChenTrogdonUbaru:2021,
	author = {Tyler Chen and Thomas Trogdon and Shashanka Ubaru},
	booktitle = {\ICML{2021}},
	title = {Analysis of stochastic Lanczos quadrature for spectrum approximation},
	year = {2021}}

@article{FarkasDerenyiBarabasiVicsek:2006,
  title={Spectra of “real-world” graphs: Beyond the semicircle law},
  author={Farkas, Ill{\'e}s J and Der{\'e}nyi, Imre and Barab{\'a}si, Albert-L{\'a}szl{\'o} and Vicsek, Tamas},
  journal={Physical Review E},
  volume={64},
  number={2},
  pages={026704},
  year={2001},
  publisher={APS}
}

@inbook{CzumajSohler:2010,
	address = {Berlin, Heidelberg},
	author = {Czumaj, Artur and Sohler, Christian},
	booktitle = {Property Testing: Current Research and Surveys},
    pages = {41--64},
	publisher = {Springer Berlin Heidelberg},
	title = {Sublinear-time Algorithms},
	year = {2010}}

@article{lee2013probabilistic,
      title={Probabilistic Spectral Sparsification In Sublinear Time}, 
      author={Yin Tat Lee},
      year={2013},
     journal={arXiv preprint arXiv:1401.0085},
      eprint={1401.0085},
      archivePrefix={arXiv},
      primaryClass={cs.DS}
}

@inproceedings{kevin_bss,
      title={Linear-Sized Sparsifiers via Near-Linear Time Discrepancy Theory}, 
      author={Arun Jambulapati and Victor Reis and Kevin Tian},
      year={2023},
      booktitle={\SODA{2024}},
      pages={5169-5208}
}

@article{lee2018constructing,
  title={Constructing linear-sized spectral sparsification in almost-linear time},
  author={Lee, Yin Tat and Sun, He},
  journal={SIAM Journal on Computing},
  volume={47},
  number={6},
  pages={2315--2336},
  year={2018},
  publisher={SIAM}
}

@article{spielman2011spectral,
  title={Spectral sparsification of graphs},
  author={Spielman, Daniel A and Teng, Shang-Hua},
  journal={SIAM Journal on Computing},
  volume={40},
  number={4},
  pages={981--1025},
  year={2011},
  publisher={SIAM}
}

@article{chazelle2005approximating,
  title={Approximating the minimum spanning tree weight in sublinear time},
  author={Chazelle, Bernard and Rubinfeld, Ronitt and Trevisan, Luca},
  journal={SIAM Journal on computing},
  volume={34},
  number={6},
  pages={1370--1379},
  year={2005},
  publisher={SIAM}
}

@inproceedings{czumaj2003sublinear,
  title={Sublinear-time approximation of Euclidean minimum spanning tree.},
  author={Czumaj, Artur and Erg{\"u}n, Funda and Fortnow, Lance and Magen, Avner and Newman, Ilan and Rubinfeld, Ronitt and Sohler, Christian},
  booktitle={\SODA{2003}},
  pages={813--822},
  year={2003}
}

@inproceedings{onak2012near,
  title={A near-optimal sublinear-time algorithm for approximating the minimum vertex cover size},
  author={Onak, Krzysztof and Ron, Dana and Rosen, Michal and Rubinfeld, Ronitt},
  booktitle={\SODA{2012}},
  pages={1123--1131},
  year={2012}
}

@article{berenbrink2014estimating,
  title={Estimating the number of connected components in sublinear time},
  author={Berenbrink, Petra and Krayenhoff, Bruce and Mallmann-Trenn, Frederik},
  journal={Information Processing Letters},
  volume={114},
  number={11},
  pages={639--642},
  year={2014},
  publisher={Elsevier}
}

@inproceedings{eden2018approximating,
  title={On approximating the number of k-cliques in sublinear time},
  author={Eden, Talya and Ron, Dana and Seshadhri, C},
  booktitle={\STOC{2018}},
  pages={722--734},
  year={2018}
}

@inproceedings{mishra2001sublinear,
  title={Sublinear time approximate clustering.},
  author={Mishra, Nina and Oblinger, Daniel and Pitt, Leonard},
  booktitle={\SODA{2001}},
  volume={1},
  pages={439--447},
  year={2001}
}

@inproceedings{gluch2021spectral,
  title={Spectral clustering oracles in sublinear time},
  author={Gluch, Grzegorz and Kapralov, Michael and Lattanzi, Silvio and Mousavifar, Aida and Sohler, Christian},
  booktitle={\SODA{2021}},
  pages={1598--1617},
  year={2021}}

@phdthesis{mousavifar2021sublinear,
  title={Sublinear Algorithms for Spectral Graph Clustering},
  author={Mousavifar, Aidasadat},
  year={2021},
  school={EPFL}
}

@InProceedings{MuscoNetrapalli19,
  author =	{Musco, Cameron and Netrapalli, Praneeth and Sidford, Aaron and Ubaru, Shashanka and Woodruff, David P.},
  title =	{{Spectrum Approximation Beyond Fast Matrix Multiplication: Algorithms and Hardness}},
  booktitle =	{\ITCS{2018}},
  pages =	{8:1--8:21},
  year =	{2018}
}

@inproceedings{Swartworth23,
author = {Swartworth, William and Woodruff, David P.},
title = {Optimal Eigenvalue Approximation via Sketching},
year = {2023},
booktitle = {\STOC{2023}},
pages = {145–155},
}

@InProceedings{Bhattacharjee22,
  author =	{Bhattacharjee, Rajarshi and Dexter, Gregory and Drineas, Petros and Musco, Cameron and Ray, Archan},
  title =	{{Sublinear Time Eigenvalue Approximation via Random Sampling}},
  booktitle =	{\ICALP{2023}},
  pages =	{21:1--21:18},
  year =	{2023}
}

@article{BatsonSpielmanSrivastava:2012,
	author = {Batson, Joshua and Spielman, Daniel A. and Srivastava, Nikhil},
	journal = {SIAM Journal on Computing},
	note = {\pSTOC{2009}},
	number = {6},
	pages = {1704--1721},
	title = {Twice-{Ra}manujan sparsifiers},
	volume = {41},
	year = {2012},
}

@article{KapralovLeeMusco:2017,
	author = {Michael Kapralov and Yin Tat Lee and Cameron Musco and Christopher Musco and Aaron Sidford},
	journal = {{SIAM} Journal on Computing},
	number = {1},
	pages = {456-477},
	title = {Single Pass Spectral Sparsification in Dynamic Streams},
	volume = {46},
	year = {2017}}

@inproceedings{BenczurKarger:1996,
	author = {Bencz\'{u}r, Andr\'{a}s A. and Karger, David R.},
	booktitle = {\STOC{1996}},
	pages = {47--55},
	title = {Approximating {S-T} Minimum Cuts in $\tilde{O}(N^2)$ Time},
	year = {1996}}

@article{NachmiasShapira:2010,
	author = {Asaf Nachmias and Asaf Shapira},
	date-modified = {2024-02-02 09:24:00 -0500},
	journal = {Information and Computation},
	number = {4},
	pages = {309-314},
	title = {Testing the expansion of a graph},
	volume = {208},
	year = {2010}}

@inproceedings{CzumajSohler:2007,
	author = {Czumaj, Artur and Sohler, Christian},
	booktitle = {\FOCS{2007}},
	date-added = {2024-02-02 09:26:27 -0500},
	date-modified = {2024-02-02 09:26:40 -0500},
	pages = {570-578},
	title = {Testing Expansion in Bounded-Degree Graphs},
	year = {2007}}

@inbook{GoldreichRon:2011,
	author = {Goldreich, Oded and Ron, Dana},
	booktitle = {Studies in Complexity and Cryptography. Miscellanea on the Interplay between Randomness and Computation},
	pages = {68--75},
	publisher = {Springer Berlin Heidelberg},
	title = {On Testing Expansion in Bounded-Degree Graphs},
	year = {2011}}

@inproceedings{BhattacharjeeDexterMusco:2024,
	author = {Bhattacharjee, Rajarshi and Dexter, Gregory and Musco, Cameron and Ray, Archan and Sachdeva, Sushant and Woodruff, David P.},
	booktitle = {\ITCS{2024}},
	title = {{Universal Matrix Sparsifiers and Fast Deterministic Algorithms for Linear Algebra}},
	year = {2024}}

@article{agarwal2022sublinear,
  title={Sublinear algorithms for hierarchical clustering},
  author={Agarwal, Arpit and Khanna, Sanjeev and Li, Huan and Patil, Prathamesh},
  journal={Advances in Neural Information Processing Systems},
  volume={35},
  pages={3417--3430},
  year={2022}
}

@inproceedings{williams2024new,
  title={New bounds for matrix multiplication: from alpha to omega},
  author={Williams, Virginia Vassilevska and Xu, Yinzhan and Xu, Zixuan and Zhou, Renfei},
  booktitle={Proceedings of the 2024 Annual ACM-SIAM Symposium on Discrete Algorithms (SODA)},
  pages={3792--3835},
  year={2024},
  organization={SIAM}
}

@book{mitzenmacher2017probability,
  title={Probability and computing: Randomization and probabilistic techniques in algorithms and data analysis},
  author={Mitzenmacher, Michael and Upfal, Eli},
  year={2017},
  publisher={Cambridge university press}
}

\end{document}